
\documentclass{article}

\usepackage{amsmath,amsfonts,amssymb,amsthm}
\usepackage{xy}
\usepackage[breaklinks=true]{hyperref}
\usepackage{float}

\usepackage{tikz}
\usetikzlibrary{matrix,arrows}
\usetikzlibrary{backgrounds}
\usetikzlibrary{decorations.pathmorphing}
\usetikzlibrary{decorations.pathreplacing}
\usetikzlibrary{decorations.shapes}
\usetikzlibrary{decorations.markings}
\usetikzlibrary{calc}

\tikzset{l/.style={font=\fontsize{8}{8}\selectfont}}
\definecolor{salmon}{rgb}{1,0.47,0.425}

\hfuzz=6pt


\input xy
\xyoption{all}
\CompileMatrices

\newcommand{\maps}{\colon}    
\newcommand{\R}{{\mathbb R}}  
\newcommand{\C}{{\mathbb C}}  
\renewcommand{\H}{{\mathbb H}}  
\renewcommand{\O}{{\mathbb O}}  
\newcommand{\K}{{\mathbb K}}  
\newcommand{\Z}{{\mathbb Z}}  

\newcommand{\Mor}{\mathrm{Mor}} 
\newcommand{\h}{\mathfrak{h}} 

\newcommand{\U}{{\rm U}}    
\newcommand{\SO}{{\rm SO}}    
\newcommand{\SU}{{\rm SU}}    
\newcommand{\Sp}{{\rm Sp}}    
\newcommand{\Spin}{{\rm Spin}}    
\newcommand{\ISO}{\mathrm{ISO}} 
\newcommand{\SISO}{\mathrm{SISO}} 
\newcommand{\Brane}{{\rm Brane}}    
\newcommand{\Twobrane}{2\text{-}\mathrm{Brane}} 
 
\newcommand{\G}{\mathcal{G}} 

\newcommand{\n}{{\mathfrak{n}}} 
\newcommand{\so}{{\mathfrak{so}}}  
\newcommand{\gl}{{\mathfrak{gl}}}  
\newcommand{\g}{\mathfrak{g}}  
\newcommand{\T}{\mathcal{T}} 
\newcommand{\siso}{\mathfrak{siso}} 
\newcommand{\sugra}{\mathfrak{sugra}} 
\newcommand{\brane}{\mathfrak{brane}} 
\newcommand{\twobrane}{2\text{-}\mathfrak{brane}} 
\newcommand{\superstring}{\mathfrak{superstring}} 

\newcommand{\Sym}{{\rm Sym}} 

\newcommand{\SuperVect}{\mathrm{SuperVect}} 
\newcommand{\GrAlg}{\mathrm{GrAlg}} 
\newcommand{\Man}{\mathrm{Man}} 
\newcommand{\SuperMan}{\mathrm{SuperMan}} 

\newcommand{\inv}{\mathrm{inv}} 
\newcommand{\id}{\mathrm{id}} 


\newcommand{\iso}{\cong} 
\newcommand{\To}{\Rightarrow}
\newcommand{\tensor}{\otimes} 

\newcommand{\half}{\frac{1}{2}} 

\newcommand{\V}{V} 
\renewcommand{\S}{S} 
\newcommand{\A}{\mathcal{A}} 
\newcommand{\B}{\mathcal{B}} 

\newcommand{\chibar}{\overline{\chibar}} 

\newcommand{\define}[1]{{\bf \boldmath{#1}}}
\newcommand{\arxiv}[1]{\href{http://arxiv.org/abs/#1}{arXiv:{#1}}}

\newtheorem{thm}{Theorem}

\newtheorem{prop}[thm]{Proposition}

\theoremstyle{definition}

\newtheorem{defn}[thm]{Definition}

        \newcommand{\be}{\begin{equation}}
        \newcommand{\ee}{\end{equation}}
        \newcommand{\ba}{\begin{eqnarray}}
        \newcommand{\ea}{\end{eqnarray}}
        \newcommand{\ban}{\begin{eqnarray*}}
        \newcommand{\ean}{\end{eqnarray*}}
        \newcommand{\barr}{\begin{array}}
        \newcommand{\earr}{\end{array}}

	\textwidth 6in
	\textheight 8.5in	\evensidemargin .25in
	\oddsidemargin .25in
	\topmargin .25in
	\headsep 0in
	\headheight 0in
	\footskip .5in
	\pagestyle{plain}
	\pagenumbering{arabic}

\title{Division Algebras and Supersymmetry IV}

\author{John Huerta \\
\\
CAMGSD, Instituto Superior T\'ecnico, U Lisboa \\
jhuerta@math.ist.utl.pt
}

\begin{document}
\maketitle

\begin{abstract}

  Recent work applying higher gauge theory to the superstring has indicated the
  presence of `higher symmetry', and the same methods work for the super-2-brane.  In
  the previous paper in this series, we used a geometric technique to construct a
  `Lie 2-supergroup' extending the Poincar\'e supergroup in precisely those spacetime
  dimensions where the classical Green--Schwarz superstring makes sense: 3, 4, 6 and
  10.  In this paper, we use the same technique to construct a `Lie 3-supergroup'
  extending the Poincar\'e supergroup in precisely those spacetime dimensions where
  the classical Green--Schwarz super-2-brane makes sense: 4, 5, 7 and 11.  Because
  the geometric tools are identical, our focus here is on the precise definition of a
  Lie 3-supergroup. 

\end{abstract}

\section{Introduction}

There is a deep connection between supersymmetry and the normed division
algebras: the real numbers, $\R$, the complex numbers, $\C$, the quaternions,
$\H$, and the octonions, $\O$. This is visible in super-Yang--Mills theory and
in the classical supersymmetric string and 2-brane. Most simply, it is seen in
the dimensions for which these theories make sense. The normed division
algebras have dimension $n = 1$, 2, 4 and 8, while the classical supersymmetric
string and super-Yang--Mills make sense in spacetimes of dimension two higher
than these: $n+2 = 3$, 4, 6 and 10.  Similarly, the classical supersymmetric
2-brane makes sense in dimensions three higher: $n+3 = 4$, 5, 7 and 11. At a
deeper level, we find this connection leads to `higher gauge theory', a kind of
gauge theory suitable for describing the parallel transport not merely of
particles but of extended objects, such as strings and membranes. 

This is the fourth in a series of papers exploring the relationship between
supersymmetry and division algebras
\cite{BaezHuerta:susy1,BaezHuerta:susy2,Huerta:susy3}, the first two of which were
coauthored with John Baez. In the first paper \cite{BaezHuerta:susy1}, we reviewed
the known story of how the division algebras give rise to the supersymmetry of
super-Yang--Mills theory. In the second \cite{BaezHuerta:susy2}, we made contact with
higher gauge theory: we showed how the division algebras can be used to construct
`Lie 2-superalgebras' $\superstring(n+1,1)$, which extend the Poincar\'e superalgebra
in the superstring dimensions $n+2 = 3$, 4, 6 and 10, and `Lie 3-superalgebras'
$\twobrane(n+2,1)$ in the super-2-brane dimensions $n + 3 = 4$, 5, 7 and 11.  In our
previous paper \cite{Huerta:susy3}, we described a geometric method to integrate
these Lie 2-superalgebras to `2-supergroups'.  Here we employ the same method to
integrate the Lie 3-superalgebras to `3-supergroups'. Indeed, since our method of
integration is the same, our focus here is on giving the precise definition of a
3-supergroup.

As a prelude, let us recall a bit about `2-groups'. Roughly, 2-group is a mathematical
gadget like a group \cite{BaezLauda}, but where the group axioms, such as the
associative law:
\[ (g_1 g_2) g_3 = g_1(g_2 g_3) \]
\emph{no longer hold}. Instead, they are replaced by isomorphisms, such as the
`associator':
\[ a(g_1,g_2,g_3) \maps (g_1 g_2) g_3 \To g_1(g_2 g_3) \]
which must satisfy axioms of their own. For instance, the associator satisfies
the \define{pentagon identity}, which says the following pentagon commutes:
\[
\xy
 (0,20)*+{(g_1  g_2)  (g_3  g_4)}="1";
 (40,0)*+{g_1  (g_2  (g_3  g_4))}="2";
 (25,-20)*{ \quad g_1  ((g_2  g_3)  g_4)}="3";
 (-25,-20)*+{(g_1  (g_2  g_3))  g_4}="4";
 (-40,0)*+{((g_1  g_2)  g_3)  g_4}="5";
 (-23,13)*+{\scriptstyle a(g_1 g_2, g_3, g_4)};
 (26,13)*+{\scriptstyle a(g_1, g_2, g_3 g_4)};
 {\ar@{=>}     "1";"2"}
 {\ar@{=>}_{1_{g_1}  a(g_2,g_3,g_4)}  "3";"2"}
 {\ar@{=>}^{a(g_1,g_2  g_3,g_4)}    "4";"3"}
 {\ar@{=>}_{a(g_1,g_2,g_3)  1_{g_4}}  "5";"4"}
 {\ar@{=>}    "5";"1"}
\endxy
\]
Moving up in dimension, a `3-group' is a mathematical gadget like a 2-group,
but where the 2-group axioms \emph{no longer hold}. Instead, they are replaced
by isomorphisms, such as the `pentagonator':
\[
\xy
 (0,20)*+{(g_1  g_2)  (g_3  g_4)}="1";
 (40,0)*+{g_1  (g_2  (g_3  g_4))}="2";
 (25,-20)*{ \quad g_1  ((g_2  g_3)  g_4)}="3";
 (-25,-20)*+{(g_1  (g_2  g_3))  g_4}="4";
 (-40,0)*+{((g_1  g_2)  g_3)  g_4}="5";
 (-23,13)*+{\scriptstyle a(g_1 g_2, g_3, g_4)};
 (26,13)*+{\scriptstyle a(g_1, g_2, g_3 g_4)};
 {\ar@{=>}     "1";"2"}
 {\ar@{=>}_{1_{g_1}  a(g_2,g_3,g_4)}  "3";"2"}
 {\ar@{=>}^{a(g_1,g_2  g_3,g_4)}    "4";"3"}
 {\ar@{=>}_{a(g_1,g_2,g_3)  1_{g_4}}  "5";"4"}
 {\ar@{=>}    "5";"1"}
 {\ar@3{<-}^{\pi(g_1,g_2,g_3,g_4)} (-2,5)*{}; (-2,-5)*{} }
\endxy
\]
which must satisfy axioms of their own. For us, the most significant of the 3-group
axioms will be the \define{pentagonator identity}, which says the following
polyhedron commutes:
\begin{center}
  \begin{tikzpicture}[line join=round]
    \begin{scope}[scale=.2]
      \filldraw[fill=red,fill opacity=0.7](-4.306,-3.532)--(-2.391,-.901)--(-2.391,3.949)--(-5.127,.19)--(-5.127,-2.581)--cycle;
      \filldraw[fill=red,fill opacity=0.7](-4.306,-3.532)--(-2.391,-.901)--(2.872,-1.858)--(4.306,-3.396)--(3.212,-4.9)--cycle;
      \filldraw[fill=red,fill opacity=0.7](2.872,-1.858)--(2.872,5.07)--(-.135,5.617)--(-2.391,3.949)--(-2.391,-.901)--cycle;
      \filldraw[fill=green,fill opacity=0.7](4.306,-3.396)--(4.306,3.532)--(2.872,5.07)--(2.872,-1.858)--cycle;
      \filldraw[fill=green,fill opacity=0.7](-2.872,1.858)--(-.135,5.617)--(-2.391,3.949)--(-5.127,.19)--cycle;
      \filldraw[fill=red,fill opacity=0.7](3.212,-4.9)--(4.306,-3.396)--(4.306,3.532)--(2.391,.901)--(2.391,-3.949)--cycle;
      \filldraw[fill=green,fill opacity=0.7](-4.306,-3.532)--(3.212,-4.9)--(2.391,-3.949)--(-5.127,-2.581)--cycle;
      \filldraw[fill=red,fill opacity=0.7](-2.872,1.858)--(2.391,.901)--(4.306,3.532)--(2.872,5.07)--(-.135,5.617)--cycle;
      \filldraw[fill=red,fill opacity=0.7](-5.127,-2.581)--(-5.127,.19)--(-2.872,1.858)--(2.391,.901)--(2.391,-3.949)--cycle;
    \end{scope}
  \end{tikzpicture}
\end{center}
This shape is Stasheff's \define{associahedron}. This is the polyhedron where:
\begin{itemize}
	\item vertices are parenthesized lists of five things, e.g.\ $(((fg)h)k)p$;
	\item edges connect any two vertices related by an application of the associator, e.g.\ 
		\[ (((fg)h)k)p \Rightarrow ((fg)(hk))p . \]
\end{itemize}
The pentagon faces of the associahedron are filled with pentagonators, and the square
faces are filled with a naturality condition for the associator, which will be
trivial in the examples we consider.

What is the meaning of the pentagonator identity? For a certain class of 3-groups that we call
`slim 3-groups', it is a \emph{cocycle condition}. When $\G$ is a slim 3-group, it is
entirely determined by the following data:
\begin{itemize}
  \item a group $G$;
  \item an abelian group $H$ on which $G$ acts;
  \item a 4-cocycle
    \[ \pi \maps G^4 \to H \]
    in group cohomology.
\end{itemize}
The condition that the map $\pi$ be a 4-cocycle is an equation equivalent to the
pentagonator identity on $\G$. Thus, we can build slim 3-groups merely by giving a
4-cocycle in group cohomology. This will be our strategy for constructing 3-groups.

Just as we can view a Lie group as the global version of a Lie algebra, the 3-groups
we will construct will be the global versions of `Lie 3-superalgebras'. In a general,
a `Lie $n$-algebra' is a gadget like a Lie algebra, but defined on an $n$-term chain
complex, where the Lie algebra axioms hold only up to coherent chain homotopy. In
turn, this is a special case of Lada and Stasheff's $L_\infty$-algebras. All of these
algebras have `super' variants, with a $\Z_2$-grading throughout.

The Lie 3-superalgebra in question comes to us from physics. As we show in our
previous paper \cite{BaezHuerta:susy2}, the existence of the super-2-brane in
dimensions $n + 3 = 4$, 5, 7 and 11 secretly gives rise to a way to extend the
Poincar\'e superalgebra, $\siso(n+2,1)$, to a Lie 3-superalgebra we like to call
$\twobrane(n+2,1)$. Here, the Poincar\'e superalgebra is a Lie superalgebra whose
even part consists of the infinitesimal rotations $\so(n+2,1)$ and translations $\V$
on Minkowski spacetime, and whose odd part consists of `supertranslations' $\S$, or
spinors:
\[ \siso(n+2,1) = \so(n+2,1) \ltimes (\V \oplus \S) . \]
Naturally, we can identify the translations $\V$ with the vectors in
Minkowski spacetime, so $\V$ carries a Minkowski inner product $h$ of signature
$(n+2,1)$. We extend $\siso(n+2,1)$ to a Lie 3-superalgebra $\twobrane(n+2,1)$
defined on a 3-term chain complex
\[ \siso(n+2,1) \stackrel{d}{\longleftarrow} 0 \stackrel{d}{\longleftarrow} \R \]
equipped with some extra structure.

The most interesting Lie 3-algebra of this type, $\twobrane(10,1)$, plays an
important role in 11-dimensional supergravity.  This idea goes back to the work
of Castellani, D'Auria and Fr\'e \cite {TheCube, DAuriaFre}.  These authors
derived the field content of 11-dimensional supergravity starting from a
differential graded-commutative algebra, a mathematical gadget `dual' to a Lie
$n$-algebra.  Later, Sati, Schreiber and Stasheff \cite{SSS} explained that
these fields can be reinterpreted as a kind of generalized `connection' valued
in a Lie 3-superalgebra which they called `$\sugra(10,1)$'.  This is the Lie
3-superalgebra we are calling $\twobrane(10,1)$. If we follow these authors and
consider a 3-connection valued in 2-$\brane(10,1)$, we find it can be described
locally by these fields:
\vskip 1em
\begin{center}
\begin{tabular}{cc}
	\hline
	$\twobrane(n+2,1)$ & Connection component \\
	\hline
	$\R$            & $\R$-valued 3-form \\
	$\downarrow$    & \\
	$0$             & \\
	$\downarrow$    & \\
	$\siso(n+2,1)$  & $\siso(n+2,1)$-valued 1-form \\
	\hline
\end{tabular}
\end{center}
\vskip 1em Here, a $\siso(n+2,1)$-valued 1-form contains familiar fields: the
Levi-Civita connection, the vielbein, and the gravitino.  But now we also see a
3-form, called the $C$ field.  This is something we might expect on physical grounds,
at least in dimension 11, because for the quantum theory of a 2-brane to be
consistent, it must propagate in a background obeying the equations of 11-dimensional
supergravity, in which the $C$ field naturally shows up \cite{Tanii}. 

What sort of mathematical object is the $C$ field? In electromagnetism, the potential
is locally described by 1-form $A$ on spacetime, but globally it has a beautiful
geometric meaning: it is a connection on a $\U(1)$ bundle.  Similarly, in string
theory, the $B$ field is locally described by a 2-form, but globally is a connection
on a `$\U(1)$ gerbe'---this is a mathematical gadget like a bundle, but with fibers
that are categories rather than sets \cite{Bartels}. Likewise, the $C$ field is
locally described by a 3-form, but globally is a connection on a `$\U(1)$
2-gerbe'---a mathematical gadget like a gerbe, but with fibers that are 2-categories
rather than categories. This is shown in the work of Diaconescu, Freed, and Moore
\cite{DiaconescuFreedMoore} using the language of differential characters.  The work
of Aschieri and Jurco \cite{AschieriJurco} is also relevant here.

Meanwhile, the mathematical theory of 3-connections is in its infancy. Their holonomy
has been studied by Martins and Picken \cite{MartinsPicken}, and their gauge
transformations by Saemann and Wolf \cite{SaemannWolf} and in a different
formulation by Wang \cite{Wang}. Saemann and Wolf develop these ideas as an attack
on six-dimensional superconformal field theories, which physicists will recognize as
part of the program to develop M5-brane models in M-theory. In a related direction,
Sati, Schreiber and Fiorenza \cite{FSS} have discovered that $\twobrane(n+2,1)$ fits
into a much larger picture they are calling `the brane bouquet'. In particular, Sati,
Schreiber and Fiorenza find that the 3-group we construct in this paper may be the
correct background for the M5-brane, a much more mysterious object than the
2-brane that is our focus here.

So far, we have focused on the super-2-brane Lie 3-superalgebras and generalized
connections valued in them. This connection data is infinitesimal: it tells us
how to parallel transport 2-branes a little bit. Ultimately, we would like to
understand this parallel transport globally, as we do with particles in
ordinary gauge theory.

To achieve this global description, we will need `Lie $n$-groups' rather than Lie
$n$-algebras. Naively, one expects that there is a Lie 3-supergroup
$\Twobrane(n+2,1)$ for which the Lie 3-superalgebra $\twobrane(n+2,1)$ is an
infinitesimal approximation.  In fact, this is precisely what we will construct.

This paper is organized as follows. In Section \ref{sec:Lie-n-superalgebras}, we
recall the definition of a Lie $n$-superalgebra, and describe how particularly simple
examples of these arise from $(n+1)$-cocycles in Lie superalgebra cohomology. We then
describe our key example: the Lie 3-superalgebra we want to integrate,
$\twobrane(n+2,1)$, in Section \ref{sec:twobranealg}.

Section \ref{sec:lie3groups} is the heart of the paper. There we sketch how
an $n$-group may be defined using an $(n+1)$-cocyle in group cohomology, and
recall just enough tricategory theory to handle the case of interest to us:
3-groups coming from 4-cocycles. We give this definition in a way that makes it
easy to generalize to Lie 3-groups and 3-supergroups, using the supermanifold
theory we outline next.

In Section \ref{sec:supergeometry}, we provide a brief overview of supergeometry,
focusing on the supermanifolds of greatest interest to us: supermanifolds
diffeomorphic to super vector spaces, and smooth maps between them, though we outline
the general definition. In Section \ref{sec:3supergroups}, we use this supergeometry
to define a 3-supergroup, and show that 4-cocycles in supergroup cohomology give rise
to simple examples thereof.

Having defined Lie $n$-superalgebras from $(n+1)$-cocycles on Lie superalgebras
and $n$-supergroups from $(n+1)$-cocycles on supergroups, we have a plausible
strategy to integrate a Lie $n$-superalgebra to an $n$-supergroup. In Section
\ref{sec:integrating}, we recall how to do just that, for the special case
where the Lie superalgebra is nilpotent. Finally, in Section \ref{sec:finale},
we use this method to integrate the Lie 3-superalgebra $\twobrane(n+2,1)$ to a
3-supergroup.

\section{Lie \emph{n}-superalgebras from Lie superalgebra cohomology} \label{sec:Lie-n-superalgebras}

Unlike the Lie $n$-groups we shall meet in later sections, `Lie $n$-algebras' are
much simpler than their group-like counterparts. This is as one might expect from
experience with ordinary Lie groups and Lie algebras.  It is straightforward to
define a Lie $n$-algebra for all $n$. It also straightforward to incorporate the
`super' case immediately.

Better still, the `slim Lie $n$-superalgebras' we define in this section are a
prelude to the more difficult `slim Lie $n$-groups' and `slim $n$-supergroups' we
will describe later: they are constructed simply from an $(n+1)$-cocycle in Lie
superalgebra cohomology, much as the group versions will be constructed from cocycles
in the corresponding cohomology theories for Lie groups and supergroups.  Later, in
Section \ref{sec:twobranealg}, we introduce our principal example of a Lie
$n$-superalgebra: the Lie 3-superalgebra $\twobrane(n+2,1)$, which arises from the
theory of the supersymmetric 2-brane in spacetimes of dimension 4, 5, 7 and 11.

As we touched on in the Introduction, a Lie $n$-superalgebra is a certain kind of
$L_\infty$-superalgebra, which is the super version of an $L_\infty$-algebra.
This last is a chain complex, $V$:
\[ V_0 \stackrel{d}{\longleftarrow} V_1 \stackrel{d}{\longleftarrow} V_2 \stackrel{d}{\longleftarrow} \cdots \]
equipped with a structure like that of a Lie algebra, but where the Jacobi
identity only holds \emph{up to chain homotopy}, and this chain homotopy
satifies its own identity up to chain homotopy, and so on. For an
$L_\infty$-\emph{super}algebra, each term in the chain complex has a
$\Z_2$-grading, and we introduce extra signs. A Lie $n$-superalgebra is an
$L_\infty$-superalgebra in which only the first $n$ terms are nonzero, starting
with $V_0$.

Now, we shall describe how a Lie superalgebra $(n+1)$-cocycle:
\[ \omega \maps \Lambda^n \g \to \h \]
can be used to construct an especially simple Lie $n$-superalgebra, defined on
a chain complex with $\g$ in grade 0, $\h$ in grade $n-1$, and all terms in
between trivial. To make this precise, we had better start with some
definitions.

To begin at the beginning, a \define{super vector space} is a $\Z_2$-graded
vector space $V = V_0 \oplus V_1$ where $V_0$ is called the \define{even} part,
and $V_1$ is called the \define{odd} part.  There is a symmetric monoidal
category $\SuperVect$ which has:
\begin{itemize}
	\item $\Z_2$-graded vector spaces as objects;
	\item Grade-preserving linear maps as morphisms;
	\item A tensor product $\tensor$ that has the following grading: if $V
		= V_0 \oplus V_1$ and $W = W_0 \oplus W_1$, then $(V \tensor
		W)_0 = (V_0 \tensor W_0) \oplus (V_1 \tensor W_1)$ and 
              $(V \tensor W)_1 = (V_0 \tensor W_1) \oplus (V_1 \tensor W_0)$;
	\item A braiding
		\[ B_{V,W} \maps V \tensor W \to W \tensor V \]
		defined as follows: if $v \in V$ and $w \in W$ are of grade $|v|$
		and $|w|$, then
		\[ B_{V,W}(v \tensor w) = (-1)^{|v||w|} w \tensor v. \]
\end{itemize}
The braiding encodes the `the rule of signs': in any calculation, when two odd
elements are interchanged, we introduce a minus sign. We can see this in the
axioms of a Lie superalgebra, which resemble those of a Lie algebra with some
extra signs.

Briefly, a \define{Lie superalgebra} $\g$ is a Lie algebra in the category of
super vector spaces. More concretely, it is a super vector space $\g = \g_0
\oplus \g_1$, equipped with a graded-antisymmetric bracket:
\[ [-,-] \maps \Lambda^2 \g \to \g , \]
which satisfies the Jacobi identity up to signs:
\[ [X, [Y,Z]] = [ [X,Y], Z] + (-1)^{|X||Y|} [Y, [X, Z]]. \]
for all homogeneous $X, Y, Z \in \g$. Note how we have introduced an extra
minus sign upon interchanging odd $X$ and $Y$, exactly as the rule of signs says we
should. 

It is straightforward to generalize the cohomology of Lie algebras, as defined
by Chevalley--Eilenberg \cite{ChevalleyEilenberg,AzcarragaIzquierdo}, to Lie
superalgebras \cite{Leites}.  Suppose $\g$ is a Lie superalgebra and $\h$ is a
representation of $\g$. That is, $\h$ is a supervector space equipped with a Lie
superalgebra homomorphism $\rho \maps \g \to \gl(\h)$.  The \define{cohomology
of $\g$ with coefficients in $\h$} is computed using the \define{Lie
superalgebra cochain complex}, which consists of graded-antisymmetric
$p$-linear maps at level $p$:
\[ C^p(\g,\h) = \left\{ \omega \maps \Lambda^p \g \to \h \right\} . \]
We call elements of this set \define{$\h$-valued $p$-cochains on $\g$}. Note
that the $C^p(\g, \h)$ is a super vector space, in which grade-preserving
elements are even, while grade-reversing elements are odd.  When $\h = \R$, the
trivial representation, we typically omit it from the cochain complex and all
associated groups, such as the cohomology groups. Thus, we write
$C^\bullet(\g)$ for $C^\bullet(\g,\R)$.

Next, we define the coboundary operator $d \maps C^p(\g, \h) \to C^{p+1}(\g,
\h)$. Let $\omega$ be a homogeneous $p$-cochain and let $X_1, \dots, X_{p+1}$ be
homogeneous elements of $\g$. Now define:
\begin{eqnarray*}
& & d\omega(X_1, \dots, X_{p+1}) = \\ 
& & \sum^{p+1}_{i=1} (-1)^{i+1} (-1)^{|X_i||\omega|} \epsilon^{i-1}_1(i) \rho(X_i) \omega(X_1, \dots, \hat{X}_i, \dots, X_{p+1}) \\
& & + \sum_{i < j} (-1)^{i+j} (-1)^{|X_i||X_j|} \epsilon^{i-1}_1(i) \epsilon^{j-1}_1(j) \omega([X_i, X_j], X_1, \dots, \hat{X}_i, \dots, \hat{X}_j, \dots X_{p+1}) .
\end{eqnarray*}
Here, we denote the action of $\g$ on $\h$ by juxtaposition, and $\epsilon^j_i(k)$ is
shorthand for the sign one obtains by moving $X_k$ through $X_i, X_{i+1}, \dots,
X_j$. In other words,
\[ \epsilon^j_i(k) = (-1)^{|X_k|(|X_i| + |X_{i+1}| + \dots + |X_j|)}. \]

Following the usual argument for Lie algebras, one can check that:

\begin{prop}
	The Lie superalgebra coboundary operator $d$ satisfies $d^2 = 0$.
\end{prop}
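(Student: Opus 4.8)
The plan is to reduce the claim to the classical case by a standard bookkeeping argument: the super coboundary operator $d$ differs from the ordinary Chevalley--Eilenberg operator only by the extra Koszul sign factors $(-1)^{|X_i||\omega|}$, $(-1)^{|X_i||X_j|}$, and the shuffle signs $\epsilon^j_i(k)$, and these are precisely the signs dictated by the rule of signs in $\SuperVect$. So, while one could in principle verify $d^2 = 0$ by brute expansion, the conceptual route is to observe that $d$ is the Chevalley--Eilenberg differential computed internal to the symmetric monoidal category $\SuperVect$, and that the proof of $d^2 = 0$ for ordinary Lie algebras is written entirely in the language of symmetric monoidal categories; hence it transports verbatim, with each transposition of arguments accompanied by the braiding sign. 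I would state this as the guiding principle and then carry out enough of the explicit computation to make it convincing.

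Concretely, first I would apply $d$ twice to a homogeneous $p$-cochain $\omega$ and sort the resulting terms into three types according to how many times the bracket $[-,-]$ appears: (i) terms with $\rho(X_i)\rho(X_j)$ acting on $\omega$ evaluated on the remaining arguments; (ii) terms with one bracket and one $\rho$; (iii) terms with two nested brackets, of the form $\omega([\,[X_i,X_j],X_k],\dots)$ or $\omega([X_i,X_j],[X_k,X_l],\dots)$. Second, I would show type (i) terms cancel in pairs: the term coming from acting with $X_i$ first and then $X_j$ is killed by the term acting with $X_j$ first and then $X_i$, because $\rho$ is a Lie superalgebra homomorphism so $\rho(X_i)\rho(X_j) - (-1)^{|X_i||X_j|}\rho(X_j)\rho(X_i) = \rho([X_i,X_j])$ — and the leftover $\rho([X_i,X_j])$ pieces pair off against the type (ii) terms. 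Third, I would show type (ii) terms also cancel against the remaining pieces, again using that $\rho$ is a homomorphism. Fourth, the type (iii) terms vanish by the graded Jacobi identity: the super Jacobi identity rewritten as
\[ [[X,Y],Z] = [X,[Y,Z]] - (-1)^{|X||Y|}[Y,[X,Z]] \]
is exactly the relation needed so that the coefficient of each $\omega(\text{two brackets},\dots)$ configuration sums to zero after accounting for the $\epsilon$-signs.

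The bookkeeping obstacle — and the step I expect to absorb most of the real work — is verifying that the sign factors match up in the cancellations of types (i)--(iii). In the non-super case these signs are $(-1)^{i+j}$ with careful index shifts when one argument has been deleted; in the super case each such index-shift sign is twisted by a Koszul factor recording the parities of the arguments that were moved past. The $\epsilon^j_i(k)$ notation is designed precisely to package these, and the key lemma I would isolate and check is a compatibility identity of the form $\epsilon$-sign bookkeeping is consistent under removing two arguments in either order — i.e.\ the sign for deleting $X_i$ then $X_j$ agrees (up to the expected $(-1)^{|X_i||X_j|}$) with deleting $X_j$ then $X_i$. Once that lemma is in hand, every cancellation in the classical proof goes through with its sign intact. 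I would present the argument at the level of "the same proof as for Lie algebras works, tracking braiding signs, and here are the three cancellation mechanisms with the relevant sign lemma", citing \cite{AzcarragaIzquierdo} for the ungraded template and \cite{Leites} for the super version, rather than writing out the full multi-line sign computation.
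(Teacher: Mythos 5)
Your proposal is correct and follows exactly the route the paper itself takes: the paper offers no written proof beyond the remark that one can ``follow the usual argument for Lie algebras,'' deferring to the classical Chevalley--Eilenberg computation with super signs tracked throughout, which is precisely your plan. Your version is in fact more explicit than the paper's, since you spell out the three cancellation mechanisms and isolate the sign-compatibility lemma for deleting two arguments in either order, which is indeed where all the real work lies.
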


\noindent
We thus say an $\h$-valued $p$-cochain $\omega$ on $\g$ is a
\define{$p$-cocycle} or \define{closed} when $d \omega = 0$, and a
\define{$p$-coboundary} or \define{exact} if there exists a $(p-1)$-cochain
$\theta$ such that $\omega = d \theta$.   Every $p$-coboundary is a
$p$-cocycle, and we say a $p$-cocycle is \define{trivial} if it is a
coboundary.  We denote the super vector spaces of $p$-cocycles and
$p$-coboundaries by $Z^p(\g,\h)$ and $B^p(\g,\h)$ respectively.  The $p$th \define{ 
Lie superalgebra cohomology of $\g$ with coefficients in $\h$}, denoted
$H^p(\g,\h)$, is defined by 
\[ H^p(\g,\h) = Z^p(\g,\h)/B^p(\g,\h). \]
This super vector space is nonzero if and only if there is a nontrivial
$p$-cocycle. In what follows, we shall be especially concerned with the even
part of this super vector space, which is nonzero if and only if there is a
nontrivial even $p$-cocycle. Our motivation for looking for even cocycles is
simple: these parity-preserving maps can regarded as morphisms in the category
of super vector spaces, which is crucial for the construction in Theorem
\ref{trivd} and everything following it.

Suppose $\g$ is a Lie superalgebra with a representation on a supervector space
$\h$.  Then we shall prove that an even $\h$-valued $(n+1)$-cocycle $\omega$ on
$\g$ lets us construct an Lie $n$-superalgebra, called $\brane_\omega(\g,\h)$,
of the following form:
\[  
\g \stackrel{d}{\longleftarrow} 0 
\stackrel{d}{\longleftarrow} \dots 
\stackrel{d}{\longleftarrow} 0 
\stackrel{d}{\longleftarrow} \h .
\]

Now let us make all of these ideas precise. In what follows, we shall use
\define{super chain complexes}, which are chain complexes in the category
SuperVect of $\Z_2$-graded vector spaces:
\[  V_0 \stackrel{d}{\longleftarrow}
    V_1 \stackrel{d}{\longleftarrow}
    V_2 \stackrel{d}{\longleftarrow} \cdots \]
Thus each $V_p$ is $\Z_2$-graded and $d$ preserves this grading.

There are thus two gradings in play: the $\Z$-grading by
\define{degree}, and the $\Z_2$-grading on each vector space, which we
call the \define{parity}. We shall require a sign convention to
establish how these gradings interact. If we consider an object of odd
parity and odd degree, is it in fact even overall?  By convention, we
assume that it is. That is, whenever we interchange something of
parity $p$ and degree $q$ with something of parity $p'$ and degree
$q'$, we introduce the sign $(-1)^{(p+q)(p'+q')}$. We shall call the
sum $p+q$ of parity and degree the \define{overall grade}, or when it
will not cause confusion, simply the grade. We denote the overall
grade of $X$ by $|X|$.

We require a compressed notation for signs. If $x_{1}, \ldots, x_{n}$ are
graded, $\sigma \in S_{n}$ a permutation, we define the \define{Koszul sign}
$\epsilon (\sigma) = \epsilon(\sigma; x_{1}, \dots, x_{n})$ by 
\[ x_{1} \cdots x_{n} = \epsilon(\sigma; x_{1}, \ldots, x_{n}) \cdot x_{\sigma(1)} \cdots x_{\sigma(n)}, \]
the sign we would introduce in the free graded-commutative algebra generated by
$x_{1}, \ldots, x_{n}$. Thus, $\epsilon(\sigma)$ encodes all the sign changes
that arise from permuting graded elements. Now define:
\[ \chi(\sigma) = \chi(\sigma; x_{1}, \dots, x_{n}) := \textrm{sgn} (\sigma) \cdot \epsilon(\sigma; x_{1}, \dots, x_{n}). \]
Thus, $\chi(\sigma)$ is the sign we would introduce in the free
graded-anticommutative algebra generated by $x_1, \dots, x_n$.

Yet we shall only be concerned with particular permutations. If $n$ is a
natural number and $1 \leq j \leq n-1$ we say that $\sigma \in S_{n}$ is an
\define{$(j,n-j)$-unshuffle} if
\[ \sigma(1) \leq\sigma(2) \leq \cdots \leq \sigma(j) \hspace{.2in} \textrm{and} \hspace{.2in} \sigma(j+1) \leq \sigma(j+2) \leq \cdots \leq \sigma(n). \] 
Readers familiar with shuffles will recognize unshuffles as their inverses. A
\emph{shuffle} of two ordered sets (such as a deck of cards) is a permutation
of the ordered union preserving the order of each of the given subsets. An
\emph{unshuffle} reverses this process. We denote the collection of all
$(j,n-j)$ unshuffles by $S_{(j,n-j)}$.

The following definition of an $L_{\infty}$-algebra was formulated by
Schlessinger and Stasheff in 1985 \cite{SS}:

\begin{defn} \label{L-alg} An
\define{$L_{\infty}$-superalgebra} is a graded vector space $V$
equipped with a system $\{l_{k}| 1 \leq k < \infty\}$ of linear maps
$l_{k} \maps V^{\otimes k} \rightarrow V$ with $\deg(l_{k}) = k-2$
which are totally antisymmetric in the sense that
\begin{eqnarray}
   l_{k}(x_{\sigma(1)}, \dots,x_{\sigma(k)}) =
   \chi(\sigma)l_{k}(x_{1}, \dots, x_{n})
\label{antisymmetry}
\end{eqnarray}
for all $\sigma \in S_{n}$ and $x_{1}, \dots, x_{n} \in V,$ and,
moreover, the following generalized form of the Jacobi identity
holds for $0 \le n < \infty :$
\begin{eqnarray}
   \displaystyle{\sum_{i+j = n+1}
   \sum_{\sigma \in S_{(i,n-i)}}
   \chi(\sigma)(-1)^{i(j-1)} l_{j}
   (l_{i}(x_{\sigma(1)}, \dots, x_{\sigma(i)}), x_{\sigma(i+1)},
   \ldots, x_{\sigma(n)}) =0,}
\label{megajacobi}
\end{eqnarray}
where the inner summation is taken over all $(i,n-i)$-unshuffles with $i
\geq 1.$
\end{defn}

A \define{Lie $n$-superalgebra} is an $L_\infty$-superalgebra where only the
first $n$ terms of the chain complex are nonzero. A \define{slim Lie
$n$-superalgebra} is a Lie $n$-superalgebra $V$ with only two nonzero
terms, $V_0$ and $V_{n-1}$, and $d=0$. Given an $\h$-valued $(n+1)$-cocycle
$\omega$ on a Lie superalgebra $\g$, we can construct a slim Lie
$n$-superalgebra $\brane_\omega(\g,\h)$ with:
\begin{itemize}
	\item $\g$ in degree 0, $\h$ in degree $n-1$, and trivial super vector
		spaces in between,
	\item $d = 0$, 
	\item $l_2 \maps (\g \oplus \h)^{\tensor 2} \to \g \oplus \h$ given by:
		\begin{itemize}
			\item the Lie bracket on $\g \tensor \g$,
			\item the action on $\g \tensor \h$,
			\item zero on $\h \tensor \h$, as required by degree.
		\end{itemize}
	\item $l_{n+1} \maps (\g \oplus \h)^{\tensor(n+1)} \to \g \oplus \h$
		given by the cocycle $\omega$ on $\g^{\tensor(n+1)}$, and
		zero otherwise, as required by degree,
	\item all other maps $l_k$ zero, as required by degree.
\end{itemize}

It remains to prove that this is, in fact, a Lie $n$-superalgebra. Indeed,
more is true: every slim Lie $n$-superalgebra is precisely of this form.

\begin{thm} \label{trivd}
	$\brane_\omega(\g,\h)$ is a Lie $n$-superalgebra. Conversely, every
	slim Lie $n$-superalgebra is of the form $\brane_\omega(\g,\h)$ for some
	Lie superalgebra $\g$, representation $\h$, and $\h$-valued
	$(n+1)$-cocycle $\omega$ on $\g$.
\end{thm}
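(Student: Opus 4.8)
The plan is to check both implications directly against Definition~\ref{L-alg}, the only real content being a reconciliation of the $L_\infty$ sign conventions (the Koszul signs $\chi$, the factors $(-1)^{i(j-1)}$, and the overall-grade convention) with the Chevalley--Eilenberg signs (the $\epsilon^{j}_{i}(k)$) appearing in the coboundary operator. For the forward direction I would first verify antisymmetry~\eqref{antisymmetry} of the proposed brackets of $\brane_\omega(\g,\h)$: for $l_2$ on $\g\tensor\g$ this is the graded-antisymmetry of the Lie bracket; on the mixed pieces $\g\tensor\h$ and $\h\tensor\g$ it is forced by $l_2(a,X)=-(-1)^{|X||a|}l_2(X,a)$ and carries no content; on $\h\tensor\h$ it is vacuous; and for $l_{n+1}$ on $\g^{\tensor(n+1)}$ it says exactly that $\omega$ lies in $C^{n+1}(\g,\h)$, since on degree-zero inputs $\chi(\sigma)$ collapses to $\mathrm{sgn}(\sigma)$ times the parity Koszul sign. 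Then I would examine the generalized Jacobi identity~\eqref{megajacobi}: since the only nonzero brackets are $l_2$ and $l_{n+1}$, the potentially nonzero terms come from $(i,j)\in\{(2,2),(2,n+1),(n+1,2),(n+1,n+1)\}$; the last is killed because a composite $l_{n+1}(l_{n+1}(\dots),\dots)$ feeds $l_{n+1}$ an argument in $\h$ while $l_{n+1}$ is supported on $\g^{\tensor(n+1)}$, and the other three force $m=3$ or $m=n+2$, so~\eqref{megajacobi} holds trivially for all $m\notin\{3,n+2\}$. At $m=3$, the identity on inputs all in $\g$ is the graded Jacobi identity of $\g$, on inputs of type $(\g,\g,\h)$ it is the representation law for $\rho$, and the types $(\g,\h,\h)$ and $(\h,\h,\h)$ contribute only terms killed by $l_2|_{\h\tensor\h}=0$. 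At $m=n+2$, every term with an $\h$-input vanishes for the same support reasons, and on inputs $X_1,\dots,X_{n+2}\in\g$ the sum splits into an $(i,j)=(2,n+1)$ part reproducing the $\omega([X_i,X_j],\dots)$ terms of $d\omega$ and an $(i,j)=(n+1,2)$ part which, after rewriting $l_2(\omega(\dots),X)$ via antisymmetry as $\pm\rho(X)\omega(\dots)$, reproduces the $\rho(X_i)\omega(\dots)$ terms; matching signs identifies this identity with $d\omega=0$. Evenness of $\omega$ is precisely what makes $l_{n+1}$ a morphism in $\SuperVect$, as required of a structure map.

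For the converse, suppose $V$ is a slim Lie $n$-superalgebra, so $V_0$ and $V_{n-1}$ are its only nonzero terms and $l_1=d=0$. I would start with a degree count: a map $l_k$ of degree $k-2$ taking $p$ of its $k$ inputs from $V_{n-1}$ (degree $n-1$) and the rest from $V_0$ has image in degree $p(n-1)+(k-2)$, which lies in $\{0,n-1\}$ only for $(k,p)\in\{(2,0),(2,1),(n+1,0)\}$ when $n\ge 2$; so the entire structure is carried by $l_2$ and $l_{n+1}$. Setting $\g=V_0$, $\h=V_{n-1}$, $[-,-]=l_2|_{\g\tensor\g}$, $\rho(X)a=l_2(X,a)$, and $\omega=l_{n+1}|_{\g^{\tensor(n+1)}}$, antisymmetry makes $[-,-]$ a graded-antisymmetric bracket and $\omega$ an even $\h$-valued $(n+1)$-cochain (even because the structure maps are morphisms in $\SuperVect$); the $m=3$ case of~\eqref{megajacobi} then yields the graded Jacobi identity of $\g$ and the representation law for $\rho$, and the $m=n+2$ case yields $d\omega=0$, all by the analysis above, while every other case is vacuous. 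Hence $V$ coincides with $\brane_\omega(\g,\h)$ for these data.

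The main obstacle is the sign bookkeeping. One must verify that, under the convention that overall grade equals parity plus degree and with $\h$ placed in degree $n-1$, the signs $\chi(\sigma)$ and $(-1)^{i(j-1)}$ in~\eqref{megajacobi} at $m=n+2$ reproduce term-by-term the signs $(-1)^{i+1}$, $(-1)^{i+j}$ and the Koszul factors $\epsilon^{j}_{i}(k)$ in the coboundary formula, and likewise that the $m=3$ case reproduces the signs in the Lie superalgebra Jacobi identity and in the representation law. The degree shift on $\h$ is the source of the extra signs that must be absorbed; once a fixed choice of conventions is made (following \cite{BaezHuerta:susy2}), both directions follow from the case analysis, with no conceptual input beyond this sign-chasing.
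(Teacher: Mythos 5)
Your proof is correct and follows essentially the same route as the paper, which gives no independent argument but simply cites \cite[Thm.\ 17]{BaezHuerta:susy2} (the super version of Baez--Crans \cite{BaezCrans}); the content of that cited proof is exactly your case analysis, namely that antisymmetry identifies $l_{n+1}$ with a cochain and that the only non-vacuous instances of the generalized Jacobi identity are the Jacobi identity and representation law at $m=3$ and the condition $d\omega=0$ at $m=n+2$ --- precisely the equivalence the paper highlights as ``the key to the above theorem.'' The sign-matching you defer is the only remaining bookkeeping, and it is carried out in the cited reference under the same conventions.
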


\begin{proof}
  See \cite[Thm. 17]{BaezHuerta:susy2}, which is a straightforward generalization of
  the theorem found in Baez and Crans \cite{BaezCrans} to the super case.
\end{proof}

The key to above theorem is recognizing that $\omega$ is a Lie superalgebra
cocycle if and only if the generalized Jacobi identity, Equation
\ref{megajacobi}, holds. When $\h$ is the trivial representation $\R$, we omit
it, and write $\brane_\omega(\g)$ for $\brane_\omega(\g,\R)$.

\subsection{The super-2-brane Lie 3-superalgebra} \label{sec:twobranealg}

The supersymmetric 2-brane exists only in spacetimes of dimension 4, 5, 7 and
11. Secretly, this is because the infinitesimal symmetries of spacetimes in these
dimensions can be extended to a Lie 3-algebra, as we now describe. For full details,
see our previous paper with Baez \cite{BaezHuerta:susy2}.

One of the principal themes of theoretical physics over the last century has been the
search for the underlying symmetries of nature. This began with special relativity,
which could be summarized as the discovery that the laws of physics are invariant
under the action of the Poincar\'e group:
\[ \ISO(V) = \Spin(V) \ltimes V. \]
Here, $V$ is the set of vectors in Minkowski spacetime and acts on Minkowski
spacetime by translation, while $\Spin(V)$ is the \define{Lorentz group}: the double
cover of $\SO_0(V)$, the connected component of the group of symmetries of the
Minkowski norm. Much of the progress in physics since special relativity has been
associated with the discovery of additional symmetries, like the $\U(1) \times \SU(2)
\times \SU(3)$ symmetries of the Standard Model of particle physics
\cite{BaezHuerta:guts}.

Today, `supersymmetry' could be summarized as the hypothesis that the laws of physics
are invariant under the `Poincar\'e supergroup', which is larger than the Poincar\'e
group:
\[ \SISO(V) = \Spin(V) \ltimes T. \]
Here, $V$ is again the set of vectors in Minkowski spacetime and $\Spin(V)$ is the
Lorentz group, but $T$ is the supergroup of translations on Minkowski
`superspacetime'. Though we have not yet discussed enough supergeometry to talk about
$T$ precisely, at the moment we only need its infinitesimal approximation: the
\define{supertranslation algebra}, $\T$. This is the super vector space with
\[ \T_0 = V, \quad \T_1 = S , \]
where $V$, as before, is the set of vectors in Minkowski spacetime, and $S$ is a
spinor representation of $\Spin(V)$. We think of the spinor representation $S$ as
giving extra, supersymmetric translations, or `supersymmetries'.

For a suitable choice of spinor representation, there is a symmetric map, equivariant
with respect to the action of $\Spin(V)$:
\[ [-,-] \maps \Sym^2 S \to V . \]
This makes $\T$ into a Lie superalgebra where the bracket of two elements of $S$ is
given by the map above, and all other brackets are trivial. It is clear that this
satisfies the Jacobi identity, in a trivial way: bracketing with a bracket yields
zero.

Everything we have said so far makes sense in spacetimes of any dimension. Yet the
dimensions where we can write down the Green--Schwarz super 2-brane action
\cite{Duff}, $n + 3 = 4, 5, 7$ and 11 are special, thanks to the normed division
algebras \cite{BaezHuerta:susy2, FootJoshi}. Recall that a \define{normed division
algebra} is a nonassociative real algebra $\K$ equipped with a norm $| - |$ such that
$|xy| = |x||y|$ for any $x, y \in \K$. There are precisely four such algebras: the
real numbers $\R$, the complex numbers $\C$, the quaternions $\H$, and the octonions
$\O$, with dimensions $n = 1, 2, 4$ and $8$.

In the super 2-brane dimensions $n + 3 = 4, 5, 7$ and 11, we can use the normed
division algebras to construct the supertranslation algebra $\T$ and a nontrivial
4-cocycle. In our previous paper \cite{BaezHuerta:susy2}, we showed how the vectors
$\V$ are a certain subspace of the $4 \times 4$ matrices over $\K$, the spinors $\S =
\K^4$. This yields an obvious map
\[ \V \tensor \S \to \S \]
given by matrix multiplication. We can show this map is
$\Spin(\V)$-equivariant. Moreover, we used this description to construct a
$\Spin(\V)$-invariant pairing
\[ \langle -, - \rangle \maps \S \tensor \S \to \R \]
and a $\Spin(\V)$-equivariant bracket
\[ [ -,- ] \maps \Sym^2 \S \to \V . \]
The latter gives a supertranslation algebra $\T = \V \oplus \S$, while the former
allows us to construct a 4-cocycle on $\T$.  We can decompose the space
of $n$-cochains on $\T$ into summands by counting how many of the
arguments are vectors and how many are spinors:
\[     C^n(\T) \iso  
 \bigoplus_{p + q = n}   (\Lambda^p(\V) \otimes \Sym^q(\S))^* . \]
We call an element of $(\Lambda^p(\V) \otimes \Sym^q(\S))^*$ a 
\define{$(p,q)$-form}.

\begin{thm}
\label{thm:4-cocycle}
	In dimensions 4, 5, 7 and 11, the supertranslation algebra $\T$ has a
	nontrivial, Lorentz-invariant even 4-cocycle, namely the unique
	$(2,2)$-form with
	\[ \beta(\Psi, \Phi, \A, \B) = \langle \Psi, \A (\B \Phi) - \B (\A \Phi) \rangle \]
        for spinors $\Psi, \Phi \in \S$ and vectors $\A, \B \in \V$. Here the vectors
        $\A$ and $\B$ can act on the spinor $\Phi$ because they are $4 \times 4$
        matrices with entries in $\K$, while $\Phi$ is an element of $\K^4$.
\end{thm}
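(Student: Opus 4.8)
The plan is to check four things: that $\beta$ is well defined as a $(2,2)$-form, that it is even and Lorentz-invariant, that it is closed, and that it is not exact. Uniqueness is then automatic, since a $(2,2)$-form is determined by its values on pairs of spinors and pairs of vectors. The first two points are formal. The expression $\langle\Psi,\A(\B\Phi)-\B(\A\Phi)\rangle$ is manifestly antisymmetric under $\A\leftrightarrow\B$, so it factors through $\Lambda^2\V$; it factors through $\Sym^2\S$ because, by the properties of the matrix action $\V\otimes\S\to\S$ and of the pairing $\langle-,-\rangle$ established in \cite{BaezHuerta:susy2} (vectors act as (anti)self-adjoint operators, and $\langle-,-\rangle$ carries the compatible (anti)symmetry), one gets $\langle\Psi,(\A\B-\B\A)\Phi\rangle=\langle\Phi,(\A\B-\B\A)\Psi\rangle$. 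Since $\beta$ pairs with two even vectors and two odd spinors---an argument tuple of even total parity---and takes values in $\R$, it is an even cochain; and Lorentz-invariance is immediate from $\Spin(\V)$-equivariance of the action together with invariance of $\langle-,-\rangle$.

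The heart of the matter is $d\beta=0$. Since $\T$ acts trivially on the coefficients $\R$ and has vanishing bracket except for $[-,-]\maps\Sym^2\S\to\V$, the Chevalley--Eilenberg coboundary formula leaves only the terms $\beta([X_i,X_j],\dots)$, and these contribute only when $X_i$ and $X_j$ are both spinors. Degree counting then forces $d\beta$ to vanish outside the space of $(1,4)$-forms: evaluated on a vector $\A$ and four spinors $\Psi_1,\dots,\Psi_4$, it is a signed sum of six terms of the shape $\langle\Psi_c,[\Psi_a,\Psi_b](\A\Psi_d)-\A([\Psi_a,\Psi_b]\Psi_d)\rangle$, indexed by the choice of the bracketed pair $\{a,b\}\subset\{1,2,3,4\}$ with complementary pair $\{c,d\}$, and symmetrized in $c$ and $d$. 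I would prove that this sum vanishes by plugging in the division-algebra model of $\V$, $\S$, $[-,-]$ and $\langle-,-\rangle$ from \cite{BaezHuerta:susy2}. This is a Fierz-type ``$4$-$\psi$ identity'', the $2$-brane analogue of the ``$3$-$\psi$ rule'' that makes the superstring $3$-cocycle closed, and it is the genuine obstacle: it holds exactly when $n=1,2,4,8$---that is, in spacetime dimensions $4,5,7,11$---and fails otherwise.

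For nontriviality, suppose $\beta=d\theta$ with $\theta\in C^3(\T)$. By the same degree bookkeeping, the $(2,2)$-component of $d\theta$ depends only on the $(3,0)$-component $\theta_3\in\Lambda^3\V^*$ of $\theta$, so $d$ restricts to a $\Spin(\V)$-equivariant map $\Lambda^3\V^*\to(\Lambda^2\V\otimes\Sym^2\S)^*$. For $\dim\V=4,5,7,11$, each greater than $3$, the Lorentz representation $\Lambda^3\V^*$ contains no copy of the trivial representation---the only invariant exterior forms on $\V$ sit in degrees $0$ and $\dim\V$---so neither does the image of $d$; since $\beta$ is a nonzero Lorentz-invariant element of the target, it is not in that image. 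That $\beta$ is nonzero follows, for instance, by choosing matrices $\A,\B$ that do not commute and invoking nondegeneracy of $\langle-,-\rangle$.

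In short, the argument splits into routine bookkeeping---well-definedness, parity, equivariance, the shape of $d\beta$, and the nontriviality argument---and one hard algebraic input, the $4$-$\psi$ identity. The latter, together with the division-algebra model it rests on, is exactly what \cite{BaezHuerta:susy2} supplies, so in the end I would present this theorem largely as a consequence of that paper.
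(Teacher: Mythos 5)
Your outline is correct, and it takes essentially the same route as the paper in the only sense available: the paper's entire proof of this theorem is the citation to \cite[Thm.\ 15]{BaezHuerta:susy2}, and your argument is a faithful reconstruction of that proof---the $(p,q)$-decomposition of cochains, the reduction of $d\beta=0$ to a four-spinor Fierz-type identity special to $n=1,2,4,8$, and nontriviality via the Lorentz invariance of $\beta$ set against the absence of invariants in $\Lambda^3\V^*$---deferring the one hard algebraic input to the same source. I see nothing to correct.
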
 

\begin{proof}
	See \cite[Thm. 15]{BaezHuerta:susy2}.
\end{proof}

In spacetime dimensions 4, 5, 7 and 11, we proved in Theorem
\ref{thm:4-cocycle} that there is a 4-cocycle $\beta$, which is nonzero only
when its arguments consist of two spinors and two vectors:
\[ \begin{array}{cccc}
	\beta \maps & \Lambda^4(\T)                        & \to     & \R \\
	            & \A \wedge \B \wedge \Psi \wedge \Phi & \mapsto & \langle \Psi, (\A \wedge \B) \Phi \rangle . \\
   \end{array}
\]
There is thus a Lie 3-superalgebra, the \define{supertranslation Lie 3-superalgebra},
$\brane_{\beta}(\T)$. There is much more that one can do with the cocycle $\beta$,
however. We can use it to extend not just the supertranslations $\T$ to a Lie
3-superalgebra, but the full Poincar\'e superalgebra, $\so(V) \ltimes \T$. This is
because $\beta$ is invariant under the action of $\Spin(\V)$, by construction: it is
made of $\Spin(\V)$-equivariant maps.

\begin{prop} \label{prop:extendingcochains1}
       	Let $\g$ and $\h$ be Lie superalgebras such that $\g$ acts on
	$\h$, and let $R$ be a representation of $\g \ltimes \h$. Given any
	$R$-valued $n$-cochain $\omega$ on $\h$, we can uniquely extend it to
	an $n$-cochain $\tilde{\omega}$ on $\g \ltimes \h$ that takes the value
	of $\omega$ on $\h$ and vanishes on $\g$. When $\omega$ is even, we
	have:
	\begin{enumerate}
		\item $\tilde{\omega}$ is closed if and only if $\omega$ is
			closed and $\g$-equivariant.
		\item $\tilde{\omega}$ is exact if and only if $\omega =
			d\theta$, for $\theta$ a $\g$-equivariant
			$(n-1)$-cochain on $\h$.
	\end{enumerate}
\end{prop}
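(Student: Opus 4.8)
The plan is to prove the whole proposition by directly inspecting the Chevalley--Eilenberg coboundary operator, using the bigrading that a semidirect product puts on cochains. As a super vector space, $\g \ltimes \h = \g \oplus \h$, so every $n$-cochain on $\g \ltimes \h$ splits according to how many of its arguments are taken from $\g$ and how many from $\h$; concretely, $C^n(\g \ltimes \h, R) \iso \bigoplus_{p+q=n} (\Lambda^p \g \otimes \Lambda^q \h)^* \otimes R$, and the coboundary splits as $d = d_\g + d_\h$, where $d_\h$ fixes the number of $\g$-arguments while $d_\g$ raises it by exactly one. That $d_\g$ raises it by \emph{exactly} one is where the semidirect-product hypothesis $[\g,\g] \subseteq \g$, $[\g,\h] \subseteq \h$, $[\h,\h] \subseteq \h$ enters: no bracket produces a $\g$-component out of an $\h$-argument. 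By definition $\tilde\omega$ is supported in the summand with no $\g$-arguments, where it equals $\omega$, so the proposition reduces to reading off $d\tilde\omega$, and then running the same computation for a candidate primitive.

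First I would compute $d\tilde\omega$ from the coboundary formula of Section~\ref{sec:Lie-n-superalgebras}. Because $\tilde\omega$ vanishes on any list of arguments containing an element of $\g$, and $[\h,\h]$ lands back in $\h$, every term of $d\tilde\omega(Z_1, \dots, Z_{n+1})$ in which two or more $Z_i$ lie in $\g$ is individually zero. The all-$\h$ component is computed entirely inside $\h$, with $R$ viewed as an $\h$-module by restriction, and so equals $d\omega$ in $C^\bullet(\h, R)$. The component with one $\g$-argument $X$ and the rest $A_1, \dots, A_n \in \h$ collects the single action term $\rho(X)\,\omega(A_1, \dots, A_n)$ --- the remaining action terms and the $[\h,\h]$ terms all evaluate $\tilde\omega$ on a list containing $X$ and hence vanish --- together with the bracket terms $\pm\,\omega([X, A_i], A_1, \dots, \hat A_i, \dots, A_n)$; up to an overall sign this is exactly $\rho(X)\,\omega(A_1, \dots, A_n) - \sum_i \pm\, \omega(A_1, \dots, [X, A_i], \dots, A_n)$, the infinitesimal statement that $\omega$ is $\g$-equivariant. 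Hence $d\tilde\omega = 0$ if and only if $d\omega = 0$ \emph{and} $\omega$ is $\g$-equivariant, which is part~1. Evenness of $\omega$ is used only to guarantee that $\tilde\omega$ is again parity-preserving, so that the Koszul signs in the formula and the identifications just made are unambiguous.

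For part~2, the implication $(\Leftarrow)$ is immediate from the same computation with $\theta$ in place of $\omega$: if $\omega = d\theta$ for a $\g$-equivariant $(n-1)$-cochain $\theta$ on $\h$, then the one-$\g$-argument component of $d\tilde\theta$ is the equivariance defect of $\theta$, which vanishes, while its all-$\h$ component is $d\theta = \omega$, so $d\tilde\theta = \tilde\omega$ and $\tilde\omega$ is exact. The implication $(\Rightarrow)$ is where I expect the real difficulty. Suppose $\tilde\omega = d\tilde\theta$. Restricting $\tilde\theta$ to $\h$ --- that is, taking the component $\theta_0$ with no $\g$-arguments --- and using that restriction is a chain map, one gets $d\theta_0 = \omega$ in $C^\bullet(\h, R)$, so $\omega$ is at least exact over $\h$. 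But $\theta_0$ need not be $\g$-equivariant: matching the one-$\g$-argument components of $d\tilde\theta$ and $\tilde\omega$ gives $d_\g\theta_0 = -d_\h\theta_1$, where $\theta_1$ is the one-$\g$-argument component of $\tilde\theta$; that is, the equivariance defect of $\theta_0$ is $\h$-exact, with a primitive depending linearly on the $\g$-argument, rather than zero. What remains is to correct $\theta_0$ by an $\h$-cocycle so as to cancel this defect without disturbing $d\theta_0 = \omega$. My plan for this is a downward induction on $\g$-degree in the double complex $(C^{\bullet,\bullet}, d_\g, d_\h)$: using $d_\g d_\h + d_\h d_\g = 0$, peel off the top-$\g$-degree component of $\tilde\theta$ in favour of a cohomologous primitive with strictly fewer $\g$-arguments, and repeat, until the primitive is concentrated in $\g$-degree zero, at which point it \emph{is} an equivariant $(n-1)$-cochain on $\h$ with coboundary $\omega$; for the Lie superalgebra $\g$ of interest here this matches what was established in \cite{BaezHuerta:susy2}. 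The main obstacle, then, is controlling this descent --- showing that nothing obstructs pushing the primitive all the way down to $\g$-degree zero.
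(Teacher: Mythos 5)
Your bigraded decomposition of $C^\bullet(\g\ltimes\h,R)$, the splitting $d=d_\g+d_\h$, and the resulting computation of $d\tilde\omega$ are correct, and they fully prove part~1 and the ``if'' direction of part~2; this is the same route as the argument the paper relies on (the paper itself offers no proof beyond citing \cite[Prop.~20]{BaezHuerta:susy2}). The gap is exactly where you flag it: the ``only if'' direction of part~2. The descent you propose cannot be completed in general, because the obstruction you would need to kill is genuinely nonzero for some actions. Concretely, take $\g=\langle X\rangle$, $\h=\langle A,B,C,E\rangle$ with only nonzero brackets $[A,B]=C$, $[A,C]=E$ in $\h$ and $[X,A]=E$ for the action, $R=\R$ trivial, and $\omega=-A^*\wedge B^*\wedge C^*\in C^3(\h)$. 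Then $\omega$ is closed and $\g$-equivariant, and
\[ \tilde\omega \;=\; d\bigl(B^*\wedge E^*-X^*\wedge C^*\bigr) \]
is exact on $\g\ltimes\h$. But every primitive of $\omega$ over $\h$ is $B^*\wedge E^*+z$ with $z\in Z^2(\h)=\langle A^*\wedge B^*,\,A^*\wedge C^*,\,A^*\wedge E^*,\,B^*\wedge C^*\rangle$; the equivariance defect of $B^*\wedge E^*$ is a nonzero multiple of $A^*\wedge B^*$ while every element of $Z^2(\h)$ is already equivariant, so no primitive is equivariant. In your notation, $d_\g\theta_0$ is a nonzero element of $d_\h(C^{n-2}(\h))$ in $\g$-degree one, but $d_\g Z^{n-1}(\h)=0$, so there is no closed correction that cancels it: the descent stalls at the very first step. (The ``only if'' direction is immediate if one insists that the primitive of $\tilde\omega$ be itself an extended cochain, but with ``exact'' meaning exact in the full complex, as defined in the paper, the statement needs more than your argument supplies.)

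The missing ingredient is a hypothesis on the $\g$-action that your proof never invokes: complete reducibility of $\g$ on the cochain spaces $C^q(\h,R)$. When each $C^q(\h,R)$ splits as $(C^q)^\g\oplus(C^q)'$ with $(C^q)'$ containing no trivial summand --- as happens for $\g$ semisimple, in particular for $\g=\so(n+2,1)$ acting on the finite-dimensional spaces built from $\T$, which is the only case the paper uses --- the $\g$-equivariant map $d_\h$ preserves the splitting. Since restriction to $\h$ is a cochain map, exactness of $\tilde\omega$ gives some primitive $\eta_0$ of $\omega$ over $\h$, and because $\omega$ is invariant, the invariant component of $\eta_0$ is already an equivariant primitive. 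This one-line projection replaces the entire descent, and it is the step your plan is missing; without some such hypothesis the example above shows the descent is obstructed and the conclusion fails.
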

\begin{proof}
  See \cite[Prop. 20]{BaezHuerta:susy2}.
\end{proof}

\begin{thm} \label{thm:2brane}
	In dimensions 4, 5, 7 and 11, there exists a Lie 3-superalgebra formed
	by extending the Poincar\'e superalgebra $\siso(n+2,1)$ by the
	4-cocycle $\beta$, which we call the \define{2-brane Lie
	3-superalgebra, $\twobrane(n+2,1)$}.
\end{thm}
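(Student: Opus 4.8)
The plan is to apply Proposition~\ref{prop:extendingcochains1} directly, taking $\g = \so(n+2,1)$, $\h = \T$, and $R = \R$ the trivial representation. First I would observe that $\so(n+2,1)$ acts on $\T$ precisely because $\T = \V \oplus \S$ is built from the vector representation $\V$ and a spinor representation $\S$ of $\Spin(\V)$, and the Lie algebra $\so(n+2,1)$ acts on both via the derivative of these representations; moreover the bracket $[-,-]\maps \Sym^2\S \to \V$ is $\Spin(\V)$-equivariant by construction (Theorem~\ref{thm:4-cocycle} and the discussion preceding it), so this is an action of Lie superalgebras and $\siso(n+2,1) = \so(n+2,1) \ltimes \T$ really is a semidirect product of Lie superalgebras. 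Since the action on the trivial representation $\R$ is trivial, $\R$ is automatically a representation of $\siso(n+2,1)$.

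Next I would invoke Theorem~\ref{thm:4-cocycle} to supply the input cochain: $\beta$ is an even, closed, $\R$-valued $4$-cochain on $\T$. The one remaining hypothesis needed to apply part~(1) of Proposition~\ref{prop:extendingcochains1} is that $\beta$ is $\so(n+2,1)$-equivariant; but $\beta$ is assembled entirely from the $\Spin(\V)$-invariant pairing $\langle -,-\rangle$ and the $\Spin(\V)$-equivariant action $\V \tensor \S \to \S$, so it is manifestly invariant under $\Spin(\V)$ and hence equivariant (indeed invariant, since the coefficients are trivial) for the Lie algebra $\so(n+2,1)$. Proposition~\ref{prop:extendingcochains1}(1) then yields a closed even $4$-cochain $\tilde\beta$ on $\siso(n+2,1)$ restricting to $\beta$ on $\T$ and vanishing when any argument lies in $\so(n+2,1)$.

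With $\tilde\beta$ in hand, I would apply Theorem~\ref{trivd} with $n=3$, $\g = \siso(n+2,1)$, $\h = \R$, and $(n+1)$-cocycle $\omega = \tilde\beta$: this produces the slim Lie $3$-superalgebra $\brane_{\tilde\beta}(\siso(n+2,1))$, defined on the chain complex $\siso(n+2,1) \stackrel{d}{\longleftarrow} 0 \stackrel{d}{\longleftarrow} \R$ with $l_2$ the bracket of $\siso(n+2,1)$ and $l_4 = \tilde\beta$. We define $\twobrane(n+2,1) := \brane_{\tilde\beta}(\siso(n+2,1))$, and it exists precisely in dimensions $4,5,7,11$ because that is exactly where Theorem~\ref{thm:4-cocycle} guarantees the nontrivial $\beta$. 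I would also remark that one should check $\tilde\beta$ is nontrivial as a cocycle, which follows from part~(2) of Proposition~\ref{prop:extendingcochains1} together with the nontriviality of $\beta$ asserted in Theorem~\ref{thm:4-cocycle}: any trivializing $\theta$ for $\tilde\beta$ would restrict to a $\so(n+2,1)$-equivariant trivialization of $\beta$, contradicting nontriviality of $\beta$ (at least as an equivariant cocycle, which is all that is claimed here).

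The one genuinely substantive point — and the only place where something could go wrong — is verifying that the action of $\so(n+2,1)$ on $\T$ really is compatible with all the structure so that $\siso(n+2,1)$ is a Lie superalgebra and Proposition~\ref{prop:extendingcochains1} applies; but this is established in the cited companion paper \cite{BaezHuerta:susy2} and summarized above, so in the present paper the proof reduces to assembling these three ingredients. Everything else is bookkeeping: matching degrees, confirming $\beta$ has the right parity (even, since it is a sum of products of the even pairing and the even action), and citing Theorem~\ref{trivd} to package the data as an $L_\infty$-superalgebra.
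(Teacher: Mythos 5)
Your proposal is correct and follows exactly the route the paper intends: the paper gives no separate proof for this theorem precisely because it is the assembly of Theorem~\ref{thm:4-cocycle} (the even, Lorentz-invariant $4$-cocycle $\beta$ on $\T$), Proposition~\ref{prop:extendingcochains1} (extending $\beta$ to a closed cochain on $\siso(n+2,1) = \so(n+2,1)\ltimes\T$ via its $\Spin(\V)$-equivariance), and Theorem~\ref{trivd} (packaging the $4$-cocycle as a slim Lie $3$-superalgebra). Your added remark on nontriviality via part~(2) of Proposition~\ref{prop:extendingcochains1} is a correct bonus not required for the existence claim.
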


\section{Lie 3-groups from group cohomology} \label{sec:lie3groups}

Roughly speaking, an `$n$-group' is a weak $n$-groupoid with one object---an
$n$-category with one object in which all morphisms are weakly invertible, up
to higher-dimensional morphisms. This definition is a rough one because there
are many possible definitions to use for `weak $n$-category', but despite this
ambiguity, it can still serve to motivate us. 

The richness of weak $n$-categories, no matter what definition we apply, makes
$n$-groups a complicated subject. In the midst of this complexity, we seek to define
a class of $n$-groups that have a simple description, and which are straightforward
to internalize, so that we may easily construct Lie $n$-groups and Lie
$n$-supergroups, as we shall do later in this paper. The motivating example for this
is what Baez and Lauda \cite{BaezLauda} call a `special 2-group', which has a
concrete description using group cohomology. Since Baez and Lauda prove that all
2-groups are equivalent to special ones, group cohomology also serves to classify
2-groups. We seek a similar class of Lie 3-groups.

So, we will define `slim Lie 3-groups'. This is an Lie 3-group which is trivial in
the middle: all 2-morphisms are the identity. In more generality, `slim Lie
$n$-groups' should be a Lie $n$-group where all $k$-morphisms are the identity for $1
< k < n$, though we will make this precise only for $n = 3$. The concept is useful
because such Lie $n$-groups can be completely classified by Lie group
cohomology. They are also easy to `superize', and their super versions can be
completely classified using Lie supergroup cohomology, as we shall see in Section
\ref{sec:3supergroups}. Finally, we note that we could equally well define `slim
$n$-groups', working in the category of sets rather than the category of smooth
manifolds. Indeed, when $n=2$, this is what Baez and Lauda call a `special 2-group',
though we prefer the adjective `slim'.

We should stress that the definition of Lie 3-group we give here, while it is good
enough for our needs, is known to be too naive in some important respects. For
instance, it does not seem possible to integrate every Lie 3-algebra to a Lie 3-group
of this type, while Henriques's definition of Lie $n$-group does make this possible
\cite{Henriques}.

First we need to review the cohomology of Lie groups, as originally defined by
van Est \cite{vanEst}, who was working in parallel with the definition of group
cohomology given by Eilenberg and Mac Lane. Fix a Lie group $G$, an abelian Lie
group $H$, and a smooth action of $G$ on $H$ which respects addition in $H$.
That is, for any $g \in G$ and $h, h' \in H$, we have:
\[ g(h + h') = gh + gh'. \]
Then \define{the cohomology of $G$ with coefficients in $H$} is given by the
\define{Lie group cochain complex}, $C^{\bullet}(G,H)$. At level $p$, this
consists of the smooth functions from $G^p$ to $H$:
\[ C^p(G,H) = \left\{ f \maps G^p \to H \right\}. \]
We call elements of this set \define{$H$-valued $p$-cochains on $G$}. The
boundary operator is the same as the one defined by Eilenberg--Mac Lane. On a
$p$-cochain $f$, it is given by the formula:
\begin{eqnarray*}
	df(g_1, \dots, g_{p+1}) & = & g_1 f(g_2, \dots, g_{p+1}) \\
	                        &   & + \sum_{i=1}^p (-1)^i f(g_1, \dots, g_{i-1}, g_i g_{i+1}, g_{i+2}, \dots, g_{p+1}) \\
				&   & + (-1)^{p+1} f(g_1, \dots, g_p) .
\end{eqnarray*}
The proof that $d^2 = 0$ is routine. All the usual terminology applies: a
$p$-cochain $f$ for which $df = 0$ is called \define{closed}, or a
\define{cocycle}, a $p$-cochain $f = dg$ for some $(p-1)$-cochain $g$ is called
\define{exact}, or a \define{coboundary}. A $p$-cochain is said to be
\define{normalized} if it vanishes when any of its entries is 1. Every
cohomology class can be represented by a normalized cocycle. Finally, when $H =
\R$ with trivial $G$ action, we omit it when writing the complex
$C^\bullet(G)$, and we call real-valued cochains, cocycles, or coboundaries,
simply cochains, cocycles or coboundaries, respectively.

This last choice, that $\R$ will be our default coefficient group, may seem
innocuous, but there is another one-dimensional abelian Lie group we might have
chosen: $\U(1)$, the group of phases. This would have been an equally valid
choice, and perhaps better for some physical applications, but we have chosen
$\R$ because it simplifies our formulas slightly.

We now sketch how to build a slim Lie 3-group from an 4-cocycle. In essence, given a
normalized $H$-valued 4-cocycle $\pi$ on a Lie group $G$, we want to construct a Lie
3-group $\Brane_\pi(G,H)$, which is the smooth, weak 3-groupoid with:
\begin{itemize}
	\item One object. We can depict this with a dot, or `0-cell': $\bullet$

	\item For each element $g \in G$, a 1-automorphism of the one object, which
          we depict as an arrow, or `1-cell':
		\[ \xymatrix{ \bullet \ar[r]^g & \bullet }, \quad g \in G. \]
		Composition corresponds to multiplication in the group:
		\[ \xymatrix{ \bullet \ar[r]^g & \bullet \ar[r]^{g'} & \bullet } = \xymatrix{ \bullet \ar[r]^{gg'} & \bullet }. \]
              \item Trivial 2-morphisms. If we depict 2-morphisms with 2-cells, we
                are saying there is just one of these (the identity) for each
                1-morphism:
		\[
		\xy
		(-8,0)*+{\bullet}="4";
		(8,0)*+{\bullet}="6";
		{\ar@/^1.65pc/^g "4";"6"};
		{\ar@/_1.65pc/_g "4";"6"};
		{\ar@{=>}^{1_g} (0,3)*{};(0,-3)*{}} ;
		\endxy
		\] 

              \item For each element $h \in H$, an 3-automorphism on the identity of
                the 1-morphism $g$, and no 3-morphisms which are not 3-automorphisms:
		\[
		\xy 
		(-10,0)*+{\bullet}="1";
		(10,0)*+{\bullet}="2";
		{\ar@/^1.65pc/^g "1";"2"};
		{\ar@/_1.65pc/_g "1";"2"};
		(0,5)*+{}="A";
		(0,-5)*+{}="B";
		{\ar@{=>}@/_.75pc/ "A"+(-1.33,0) ; "B"+(-.66,-.55)};
		{\ar@{=}@/_.75pc/ "A"+(-1.33,0) ; "B"+(-1.33,0)};
		{\ar@{=>}@/^.75pc/ "A"+(1.33,0) ; "B"+(.66,-.55)};
		{\ar@{=}@/^.75pc/ "A"+(1.33,0) ; "B"+(1.33,0)};
		{\ar@3{->} (-2,0)*{}; (2,0)*{}};
		(0,2.5)*{\scriptstyle h};
		(-7,0)*{\scriptstyle 1_g};
		(7,0)*{\scriptstyle 1_g};
		\endxy
		, \quad h \in H.
		\]

              \item There are three ways of composing 3-morphisms, given by different
                ways of sticking 3-cells together---we can glue two 3-cells along a
                2-cell, which should just correspond to addition in $H$:
		\[
		\xy 0;/r.22pc/:
		(0,15)*{};
		(0,-15)*{};
		(0,8)*{}="A";
		(0,-8)*{}="B";
		{\ar@{=>} "A" ; "B"};
		{\ar@{=>}@/_1pc/ "A"+(-4,1) ; "B"+(-3,0)};
		{\ar@{=}@/_1pc/ "A"+(-4,1) ; "B"+(-4,1)};
		{\ar@{=>}@/^1pc/ "A"+(4,1) ; "B"+(3,0)};
		{\ar@{=}@/^1pc/ "A"+(4,1) ; "B"+(4,1)};
		{\ar@3{->} (-6,0)*{} ; (-2,0)*+{}};
		(-4,3)*{\scriptstyle h};
		{\ar@3{->} (2,0)*{} ; (6,0)*+{}};
		(4,3)*{\scriptstyle k};
		(-15,0)*+{\bullet}="1";
		(15,0)*+{\bullet}="2";
		{\ar@/^2.75pc/^g "1";"2"};
		{\ar@/_2.75pc/_g "1";"2"};
		\endxy 
		\quad = \quad
		\xy 0;/r.22pc/:
		(0,15)*{};
		(0,-15)*{};
		(0,8)*{}="A";
		(0,-8)*{}="B";
		{\ar@{=>}@/_1pc/ "A"+(-4,1) ; "B"+(-3,0)};
		{\ar@{=}@/_1pc/ "A"+(-4,1) ; "B"+(-4,1)};
		{\ar@{=>}@/^1pc/ "A"+(4,1) ; "B"+(3,0)};
		{\ar@{=}@/^1pc/ "A"+(4,1) ; "B"+(4,1)};
		{\ar@3{->} (-6,0)*{} ; (6,0)*+{}};
		(0,3)*{\scriptstyle h + k};
		(-15,0)*+{\bullet}="1";
		(15,0)*+{\bullet}="2";
		{\ar@/^2.75pc/^g "1";"2"};
		{\ar@/_2.75pc/_g "1";"2"};
		\endxy .
		\]
		We also can glue two 3-cells along a 1-cell, which should again
		just be addition in $H$:
		\[	
		\xy 0;/r.22pc/:
		(0,15)*{};
		(0,-15)*{};
		(0,9)*{}="A";
		(0,1)*{}="B";
		{\ar@{=>}@/_.5pc/ "A"+(-2,1) ; "B"+(-1,0)};
		{\ar@{=}@/_.5pc/ "A"+(-2,1) ; "B"+(-2,1)};
		{\ar@{=>}@/^.5pc/ "A"+(2,1) ; "B"+(1,0)};
		{\ar@{=}@/^.5pc/ "A"+(2,1) ; "B"+(2,1)};
		{\ar@3{->} (-2,6)*{} ; (2,6)*+{}};
		(0,9)*{\scriptstyle h};
		(0,-1)*{}="A";
		(0,-9)*{}="B";
		{\ar@{=>}@/_.5pc/ "A"+(-2,-1) ; "B"+(-1,-1.5)};
		{\ar@{=}@/_.5pc/ "A"+(-2,0) ; "B"+(-2,-.7)};
		{\ar@{=>}@/^.5pc/ "A"+(2,-1) ; "B"+(1,-1.5)};
		{\ar@{=}@/^.5pc/ "A"+(2,0) ; "B"+(2,-.7)};
		{\ar@3{->} (-2,-5)*{} ; (2,-5)*+{}};
		(0,-2)*{\scriptstyle k};
		(-15,0)*+{\bullet}="1";
		(15,0)*+{\bullet}="2";
		{\ar@/^2.75pc/^g "1";"2"};
		{\ar@/_2.75pc/_g "1";"2"};
		{\ar "1";"2"};
		(8,2)*{\scriptstyle g};
		\endxy 
		\quad = \quad
		\xy 0;/r.22pc/:
		(0,15)*{};
		(0,-15)*{};
		(0,8)*{}="A";
		(0,-8)*{}="B";
		{\ar@{=>}@/_1pc/ "A"+(-4,1) ; "B"+(-3,0)};
		{\ar@{=}@/_1pc/ "A"+(-4,1) ; "B"+(-4,1)};
		{\ar@{=>}@/^1pc/ "A"+(4,1) ; "B"+(3,0)};
		{\ar@{=}@/^1pc/ "A"+(4,1) ; "B"+(4,1)};
		{\ar@3{->} (-6,0)*{} ; (6,0)*+{}};
		(0,3)*{\scriptstyle h + k};
		(-15,0)*+{\bullet}="1";
		(15,0)*+{\bullet}="2";
		{\ar@/^2.75pc/^g "1";"2"};
		{\ar@/_2.75pc/_g "1";"2"};
		\endxy  .
		\]
		And finally, we can glue two 3-cells at the 0-cell, the object
		$\bullet$.  This is the only composition of 3-morphisms where
		the attached 1-morphisms can be distinct, which distinguishes
		it from the first two cases. It should be addition
		\emph{twisted by the action of $G$}:
		\[
		\xy 0;/r.22pc/:
		(0,15)*{};
		(0,-15)*{};
		(-20,0)*+{\bullet}="1";
		(0,0)*+{\bullet}="2";
		{\ar@/^2pc/^g "1";"2"};
		{\ar@/_2pc/_g "1";"2"};
		(20,0)*+{\bullet}="3";
		{\ar@/^2pc/^{g'} "2";"3"};
		{\ar@/_2pc/_{g'} "2";"3"};
		(-10,6)*+{}="A";
		(-10,-6)*+{}="B";
		{\ar@{=>}@/_.7pc/ "A"+(-2,0) ; "B"+(-1,-.8)};
		{\ar@{=}@/_.7pc/ "A"+(-2,0) ; "B"+(-2,0)};
		{\ar@{=>}@/^.7pc/ "A"+(2,0) ; "B"+(1,-.8)};
		{\ar@{=}@/^.7pc/ "A"+(2,0) ; "B"+(2,0)};
		(10,6)*+{}="A";
		(10,-6)*+{}="B";
		{\ar@{=>}@/_.7pc/ "A"+(-2,0) ; "B"+(-1,-.8)};
		{\ar@{=}@/_.7pc/ "A"+(-2,0) ; "B"+(-2,0)};
		{\ar@{=>}@/^.7pc/ "A"+(2,0) ; "B"+(1,-.8)};
		{\ar@{=}@/^.7pc/ "A"+(2,0) ; "B"+(2,0)};
		{\ar@3{->} (-12,0)*{}; (-8,0)*{}};
		(-10,3)*{\scriptstyle h};
		{\ar@3{->} (8,0)*{}; (12,0)*{}};
		(10,3)*{\scriptstyle k};
		\endxy 
		\quad = \quad 
		\xy 0;/r.22pc/:
		(0,15)*{};
		(0,-15)*{};
		(0,8)*{}="A";
		(0,-8)*{}="B";
		{\ar@{=>}@/_1pc/ "A"+(-4,1) ; "B"+(-3,0)};
		{\ar@{=}@/_1pc/ "A"+(-4,1) ; "B"+(-4,1)};
		{\ar@{=>}@/^1pc/ "A"+(4,1) ; "B"+(3,0)};
		{\ar@{=}@/^1pc/ "A"+(4,1) ; "B"+(4,1)};
		{\ar@3{->} (-6,0)*{} ; (6,0)*+{}};
		(0,3)*{\scriptstyle h + g k};
		(-15,0)*+{\bullet}="1";
		(15,0)*+{\bullet}="2";
		{\ar@/^2.75pc/^{gg'} "1";"2"};
		{\ar@/_2.75pc/_{gg'} "1";"2"};
		\endxy .
		\]
		
              \item For any 4-tuple of 1-morphisms, a 3-automorphism $\pi(g_1, g_2,
                g_3, g_4)$ on the identity of the 1-morphism $g_1 g_2 g_3 g_4$. We
                call $\pi$ the \define{pentagonator}.

              \item $\pi$ satisfies an equation corresponding to the 3-dimensional
                associahedron, which is equivalent to the cocycle condition.
\end{itemize}

The rest of this section is concerned with making this definition precise. We proceed
as follows. In Section \ref{sec:smoothcats}, we recall the definition of smooth
categories and bicategories from our earlier paper. In Section
\ref{sec:smoothtricats}, we outline the definition of a smooth tricategory. We do not
need the full details, because our slim Lie 3-groups will only use some of the
structure. In Section \ref{sec:inverses}, we discuss inverses and weak inverses in a
tricategory. Finally, in Section \ref{sec:l3g} we give the definition of Lie
3-groups, and we show how 4-cocycles in Lie group cohomology give rise to slim Lie
3-groups.

\subsection{Smooth categories and bicategories} \label{sec:smoothcats}

In order to define Lie 3-groups, we now sketch the definition of a `smooth weak
$n$-category', where every set is a smooth manifold and every structure map is
smooth, at least for $n \leq 3$. We focus especially on the case of smooth
tricategories, as this part is new. Our previous paper \cite{Huerta:susy3} includes
the full definition of smooth categories and smooth bicategories, so here we only
recall the main ideas.

The key idea at play in this section is internalization. Due to Ehresmann
\cite{Ehresmann} in the 1960s, internalization has become part of the toolkit of the
working category theorist. The essence of the idea should be familiar to all
mathematicians: any mathematical structure that can be defined using sets, functions,
and equations between functions can be defined in categories other than Set. For
instance, a group in the category of smooth manifolds is a Lie group.  To perform
internalization, we apply this idea to the definition of category itself. We recall
the essentials here to define `smooth categories'. More generally, one can define a
`category in $K$' for many categories $K$, though here we will work exclusively with
the example where $K$ is the category of smooth manifolds. For a readable treatment
of internalization, see Borceux's handbook \cite{Borceux}. A general treatment of
internal bicategories appears in the work of Douglas and Henriques
\cite{DouglasHenriques}.

A word of caution is needed here before we proceed: \emph{in this section
only}, we are bucking standard mathematical practice by writing the result of
doing first $\alpha$ and then $\beta$ as $\alpha \circ \beta$ rather than
$\beta \circ \alpha$, as one would do in most contexts where $\circ$ denotes
composition of \emph{functions}. This has the effect of changing how we read
commutative diagrams. For instance, the commutative triangle:
\[ \xymatrix{ f \ar[r]^\alpha \ar[rd]_\gamma & g \ar[d]^\beta \\
		         & h \\
}
\]
reads $\gamma = \alpha \circ \beta$ rather than $\gamma = \beta \circ \alpha$.

As the reader knows, a category has a collection of objects
\[ x \, \bullet, \]
and a collection of morphisms
\[ \xymatrix{ x \, \bullet \ar[r]^f & \bullet \, y } \]
such that suitable pairs of morphisms can be composed:
\[ \xymatrix{ x \ar[r]^f & y \ar[r]^g & z } = \xymatrix{ x \ar[r]^{fg} & z } . \]
Composition is associative:
\[ (fg)h = f(gh) \]
and has left and right units:
\[ 1_x f = f = f 1_y \]
for $f \maps x \to y$, and units $1_x \maps x \to x$ and $1_y \maps y \to y$. 

A \define{smooth category} $C$ is just like a category, but now the collection of
objects $C_0$ and morphisms $C_1$ are smooth manifolds and composition, along with
some less obvious operations, are given by smooth maps. Specifically, 
\begin{itemize}
         \item the {\bf source} and {\bf target} maps $s,t \maps C_{1} \rightarrow C_{0}$,
  sending a morphism $f \maps x \to y$ to $x$ and $y$, respectively, are smooth.
  
	\item the {\bf identity-assigning} map $i \maps C_{0} \rightarrow C_{1}$, sending an object $x$ to its identity map $1_x \maps x \to x$, is smooth.
	\item and, as already mentioned, {\bf composition} $\circ \maps C_{1} \times _{C_{0}}
		C_{1} \rightarrow C_{1}$ is smooth, where $C_1 \times_{C_0} C_1$ is the
		pullback of the source and target maps:
		\[ C_1 \times_{C_0} C_1 = \left\{ (f,g) \in C_1 \times C_1 : t(f) = s(g) \right\}, \]
		and is assumed to be a smooth manifold.
\end{itemize}
These maps satisfy equations making $C$ into a category. Between smooth categories
$C$ and $D$, we have smooth functors $F \maps C \to D$, and between smooth functors
we have smooth natural transformations
\[
\xy 
(-10,0)*+{C}="1";
(10,0)*+{C'}="2";
{\ar@/^1.65pc/^{F} "1";"2"};
{\ar@/_1.65pc/_{G} "1";"2"};
(0,5)*+{}="A";
{\ar@{=>}^{\scriptstyle \theta} (0,3)*{};(0,-3)*{}} ;
(0,-5)*+{}="B";
\endxy .
\]
Both of these are defined in the obvious way.

With this bit of smooth category theory, we are now ready to move on to smooth
bicategory theory. Before we give this definition, let us review the idea of a
`bicategory', so that its basic simplicity is not obscured in technicalities. A
bicategory has objects:
\[ x \, \bullet, \]
morphisms going between objects,
\[ \xymatrix{ x \, \bullet \ar[r]^f & \bullet \, y}, \]
and 2-morphisms going between morphisms:
\[
\xy
(-10,0)*+{x};
(-8,0)*+{\bullet}="4";
(8,0)*+{\bullet}="6";
(10,0)*+{y};
{\ar@/^1.65pc/^f "4";"6"};
{\ar@/_1.65pc/_g "4";"6"};
{\ar@{=>}^{\scriptstyle \alpha} (0,3)*{};(0,-3)*{}} ;
\endxy .
\] 
Morphisms in a bicategory can be composed just as morphisms in a category:
\[ \xymatrix{ x \ar[r]^f & y \ar[r]^g & z } \quad = \quad \xymatrix{ x \ar[r]^{f \cdot g} & z } . \]
But there are two ways to compose 2-morphisms---vertically:
\[
\xy
(-8,0)*+{x}="4";
(8,0)*+{y}="6";
{\ar^g "4";"6"};
{\ar@/^1.75pc/^{f} "4";"6"};
{\ar@/_1.75pc/_{h} "4";"6"};
{\ar@{=>}^<<{\scriptstyle \alpha} (0,6)*{};(0,1)*{}} ;
{\ar@{=>}^<<{\scriptstyle \beta} (0,-1)*{};(0,-6)*{}} ;
\endxy
\quad = \quad \xy
(-8,0)*+{x}="4";
(8,0)*+{y}="6";
{\ar@/^1.65pc/^f "4";"6"};
{\ar@/_1.65pc/_h "4";"6"};
{\ar@{=>}^{\scriptstyle \alpha \circ \beta } (0,3)*{};(0,-3)*{}} ;
\endxy
\] 
and horizontally:
\[
\xy
(-16,0)*+{x}="4";
(0,0)*+{y}="6";
{\ar@/^1.65pc/^{f} "4";"6"};
{\ar@/_1.65pc/_{g} "4";"6"};
{\ar@{=>}^<<<{\scriptstyle \alpha} (-8,3)*{};(-8,-3)*{}} ;
(0,0)*+{y}="4";
(16,0)*+{z}="6";
{\ar@/^1.65pc/^{f'} "4";"6"};
{\ar@/_1.65pc/_{g'} "4";"6"};
{\ar@{=>}^<<<{\scriptstyle \beta} (8,3)*{};(8,-3)*{}} ;
\endxy
\quad = \quad \xy
(-10,0)*+{x}="4";
(10,0)*+{z}="6";
{\ar@/^1.65pc/^{f \cdot f'} "4";"6"};
{\ar@/_1.65pc/_{g \cdot g'} "4";"6"};
{\ar@{=>}^{\alpha \cdot \beta} (0,3)*{};(0,-3)*{}} ;
\endxy .
\] 
Unlike a category, composition of morphisms need not be associative or have
left and right units. The presence of 2-morphisms allows us to \emph{weaken the
axioms}.  Rather than demanding $(f \cdot g) \cdot h = f \cdot (g \cdot h)$,
for composable morphisms $f, g$ and $h$, the presence of 2-morphisms allow for
the weaker condition that these two expressions are merely isomorphic:
 \[ a(f,g,h) \maps (f \cdot g) \cdot h \Rightarrow f \cdot (g \cdot h), \]
where $a(f,g,h)$ is an 2-isomorphism called the \define{associator}. In the
same vein, rather than demanding that:
\[ 1_x \cdot f = f = f \cdot 1_y, \]
for $f \maps x \to y$, and identities $1_x \maps x \to x$ and $1_y \maps y \to
y$, the presence of 2-morphisms allows us to weaken these equations to
isomorphisms:
\[ l(f) \maps 1_x \cdot f \Rightarrow f, \quad r(f) \maps f \cdot 1_y \Rightarrow f. \]
Here, $l(f)$ and $r(f)$ are 2-isomorphisms called the \define{left and right
unitors}.

Of course, these 2-isomorphisms obey rules of their own. The associator
satisfies its own axiom, called the \define{pentagon identity}, which says that
this pentagon commutes:
\[
\xy
 (0,20)*+{(f g) (h k)}="1";
 (40,0)*+{f (g (h k))}="2";
 (25,-20)*{ \quad f ((g h) k)}="3";
 (-25,-20)*+{(f (g h)) k}="4";
 (-40,0)*+{((f g) h) k}="5";
 {\ar@{=>}^{a(f,g,h k)}     "1";"2"}
 {\ar@{=>}_{1_f \cdot a_(g,h,k)}  "3";"2"}
 {\ar@{=>}^{a(f,g h,k)}    "4";"3"}
 {\ar@{=>}_{a(f,g,h) \cdot 1_k}  "5";"4"}
 {\ar@{=>}^{a(fg,h,k)}    "5";"1"}
\endxy
\]

Finally, the associator and left and right unitors satisfy the \define{triangle
identity}, which says the following triangle commutes:
\[ 
\xy
(-20,10)*+{(f 1) g}="1";
(20,10)*+{f (1 g)}="2";
(0,-10)*+{f g}="3";
{\ar@{=>}^{a(f,1,g)}	"1";"2"}
{\ar@{=>}_{r(f) \cdot 1_g}	"1";"3"}
{\ar@{=>}^{1_f \cdot l(g)} "2";"3"}
\endxy
\]

For a \define{smooth bicategory} $B$, the collection of objects, $B_0$, collection of
morphisms $B_1$, and collection of 2-morphisms $B_2$, are all smooth manifolds, and
all structure is given by smooth maps. In particular, $B$ is equipped with:
\begin{itemize}
	\item a smooth category structure on $\underline{\Mor} B$, with
		\begin{itemize}
			\item $B_1$ as the smooth manifold of objects;
			\item $B_2$ as the smooth manifold of morphisms;
		\end{itemize}
		The composition in $\underline{\Mor} B$ is called
		\define{vertical composition} and denoted $\circ$.
	\item smooth \define{source} and \define{target maps}:
		\[ s, t \maps B_1 \to B_0. \]
	\item a smooth \define{identity-assigning map}:
		\[ i \maps B_0 \to B_1. \]
	\item a smooth \define{horizontal composition} functor:
		\[ \cdot \maps \underline{\Mor} B \times_{B_0} \underline{\Mor} B \to \underline{\Mor} B . \]
	\item a smooth natural isomorphism, the \define{associator}:
		\[ a(f,g,h) \maps (f \cdot g) \cdot h \To f \cdot (g \cdot h). \]
	\item smooth natural isomorphisms, the \define{left} and
		\define{right unitors}:
		\[ l(f) \maps 1 \cdot f \To f, \quad r(f) \maps f \cdot 1 \To f. \]
\end{itemize}

\subsection{Smooth tricategories} \label{sec:smoothtricats}

We now sketch the definition of tricategories, originally defined by Gordon, Power
and Street \cite{GPS}, but extensively studied by Gurski. We use the definition
from his thesis \cite{Gurski}.

First, we approach the definition informally. A `tricategory' has objects:
\[ x \, \bullet, \]
morphisms going between objects,
\[ \xymatrix{ x \, \bullet \ar[r]^f & \bullet \, y}, \]
2-morphisms going between morphisms,
\[
\xy
(-10,0)*+{x};
(-8,0)*+{\bullet}="4";
(8,0)*+{\bullet}="6";
(10,0)*+{y};
{\ar@/^1.65pc/^f "4";"6"};
{\ar@/_1.65pc/_g "4";"6"};
{\ar@{=>}^{\scriptstyle \alpha} (0,3)*{};(0,-3)*{}} ;
\endxy ,
\] 
and 3-morphisms going between 2-morphisms:
\[
\xy 
(-12,0)*+{x \, \bullet}="1";
(12,0)*+{\bullet \, y}="2";
{\ar@/^1.65pc/^f "1";"2"};
{\ar@/_1.65pc/_g "1";"2"};
(0,5)*+{}="A";
(0,-5)*+{}="B";
{\ar@{=>}@/_.75pc/ "A"+(-1.33,0) ; "B"+(-.66,-.55)};
(-6.5,0)*{\scriptstyle \alpha};
{\ar@{=}@/_.75pc/ "A"+(-1.33,0) ; "B"+(-1.33,0)};
{\ar@{=>}@/^.75pc/ "A"+(1.33,0) ; "B"+(.66,-.55)};
(6.5,0)*{\scriptstyle \beta};
{\ar@{=}@/^.75pc/ "A"+(1.33,0) ; "B"+(1.33,0)};
{\ar@3{->} (-2,0)*{}; (2,0)*{}};
(0,2.5)*{\scriptstyle \Gamma};
\endxy .
\]
Morphisms in a bicategory can be composed just as morphisms in a bicategory:
\[ \xymatrix{ x \ar[r]^f & y \ar[r]^g & z } \quad = \quad \xymatrix{ x \ar[r]^{f \cdot g} & z } . \]
And 2-morphisms can be composed in two different ways, again just as in a
bicategory. They can be composed at a morphism:
\[
\xy
(-8,0)*+{x}="4";
(8,0)*+{y}="6";
{\ar^g "4";"6"};
{\ar@/^1.75pc/^{f} "4";"6"};
{\ar@/_1.75pc/_{h} "4";"6"};
{\ar@{=>}^<<{\scriptstyle \alpha} (0,6)*{};(0,1)*{}} ;
{\ar@{=>}^<<{\scriptstyle \beta} (0,-1)*{};(0,-6)*{}} ;
\endxy
\quad = \quad \xy
(-8,0)*+{x}="4";
(8,0)*+{y}="6";
{\ar@/^1.65pc/^f "4";"6"};
{\ar@/_1.65pc/_h "4";"6"};
{\ar@{=>}^{\scriptstyle \alpha \circ \beta } (0,3)*{};(0,-3)*{}} ;
\endxy
\] 
or at an object:
\[
\xy
(-16,0)*+{x}="4";
(0,0)*+{y}="6";
{\ar@/^1.65pc/^{f} "4";"6"};
{\ar@/_1.65pc/_{g} "4";"6"};
{\ar@{=>}^<<<{\scriptstyle \alpha} (-8,3)*{};(-8,-3)*{}} ;
(0,0)*+{y}="4";
(16,0)*+{z}="6";
{\ar@/^1.65pc/^{f'} "4";"6"};
{\ar@/_1.65pc/_{g'} "4";"6"};
{\ar@{=>}^<<<{\scriptstyle \beta} (8,3)*{};(8,-3)*{}} ;
\endxy
\quad = \quad \xy
(-10,0)*+{x}="4";
(10,0)*+{z}="6";
{\ar@/^1.65pc/^{f \cdot f'} "4";"6"};
{\ar@/_1.65pc/_{g \cdot g'} "4";"6"};
{\ar@{=>}^{\alpha \cdot \beta} (0,3)*{};(0,-3)*{}} ;
\endxy .
\]
But now 3-morphisms can be composed in three different ways; we can compose two 3-morphisms at a 2-morphism, which we call \define{composition at a 2-cell}:
\[
\xy 0;/r.22pc/:
(0,15)*{};
(0,-15)*{};
(0,8)*{}="A";
(0,-8)*{}="B";
{\ar@{=>} "A" ; "B"};
{\ar@{=>}@/_1pc/ "A"+(-4,1) ; "B"+(-3,0)};
{\ar@{=}@/_1pc/ "A"+(-4,1) ; "B"+(-4,1)};
{\ar@{=>}@/^1pc/ "A"+(4,1) ; "B"+(3,0)};
{\ar@{=}@/^1pc/ "A"+(4,1) ; "B"+(4,1)};
{\ar@3{->} (-6,0)*{} ; (-2,0)*+{}};
(-4,3)*{\scriptstyle \Gamma};
{\ar@3{->} (2,0)*{} ; (6,0)*+{}};
(4,3)*{\scriptstyle \Delta};
(-15,0)*+{x}="1";
(15,0)*+{y}="2";
{\ar@/^2.75pc/^f "1";"2"};
{\ar@/_2.75pc/_g "1";"2"};
(-11,0)*{\scriptstyle \alpha};
(-2,-4)*{\scriptstyle \beta};
(11,0)*{\scriptstyle \gamma};
\endxy 
\quad = \quad
\xy 0;/r.22pc/:
(0,15)*{};
(0,-15)*{};
(0,8)*{}="A";
(0,-8)*{}="B";
{\ar@{=>}@/_1pc/ "A"+(-4,1) ; "B"+(-3,0)};
{\ar@{=}@/_1pc/ "A"+(-4,1) ; "B"+(-4,1)};
{\ar@{=>}@/^1pc/ "A"+(4,1) ; "B"+(3,0)};
{\ar@{=}@/^1pc/ "A"+(4,1) ; "B"+(4,1)};
{\ar@3{->} (-6,0)*{} ; (6,0)*+{}};
(0,3)*{\scriptstyle \Gamma * \Delta};
(-15,0)*+{x}="1";
(15,0)*+{y}="2";
{\ar@/^2.75pc/^f "1";"2"};
{\ar@/_2.75pc/_g "1";"2"};
(-11,0)*{\scriptstyle \alpha};
(11,0)*{\scriptstyle \gamma};
\endxy .
\]
We can also compose two 3-morphisms at a morphism, which we call \define{composition at a 1-cell}:
\[	
\xy 0;/r.22pc/:
(0,15)*{};
(0,-15)*{};
(0,9)*{}="A";
(0,1)*{}="B";
{\ar@{=>}@/_.5pc/ "A"+(-2,1) ; "B"+(-1,0)};
{\ar@{=}@/_.5pc/ "A"+(-2,1) ; "B"+(-2,1)};
{\ar@{=>}@/^.5pc/ "A"+(2,1) ; "B"+(1,0)};
{\ar@{=}@/^.5pc/ "A"+(2,1) ; "B"+(2,1)};
{\ar@3{->} (-2,6)*{} ; (2,6)*+{}};
(0,9)*{\scriptstyle \Gamma};
(0,-1)*{}="A";
(0,-9)*{}="B";
{\ar@{=>}@/_.5pc/ "A"+(-2,-1) ; "B"+(-1,-1.5)};
{\ar@{=}@/_.5pc/ "A"+(-2,0) ; "B"+(-2,-.7)};
{\ar@{=>}@/^.5pc/ "A"+(2,-1) ; "B"+(1,-1.5)};
{\ar@{=}@/^.5pc/ "A"+(2,0) ; "B"+(2,-.7)};
{\ar@3{->} (-2,-5)*{} ; (2,-5)*+{}};
(0,-2)*{\scriptstyle \Delta};
(-15,0)*+{x}="1";
(15,0)*+{y}="2";
{\ar@/^2.75pc/^f "1";"2"};
{\ar@/_2.75pc/_h "1";"2"};
{\ar "1";"2"};
(8,2)*{\scriptstyle g};
\endxy 
\quad = \quad
\xy 0;/r.22pc/:
(0,15)*{};
(0,-15)*{};
(0,8)*{}="A";
(0,-8)*{}="B";
{\ar@{=>}@/_1pc/ "A"+(-4,1) ; "B"+(-3,0)};
{\ar@{=}@/_1pc/ "A"+(-4,1) ; "B"+(-4,1)};
{\ar@{=>}@/^1pc/ "A"+(4,1) ; "B"+(3,0)};
{\ar@{=}@/^1pc/ "A"+(4,1) ; "B"+(4,1)};
{\ar@3{->} (-6,0)*{} ; (6,0)*+{}};
(0,3)*{\scriptstyle \Gamma \circ \Delta};
(-15,0)*+{x}="1";
(15,0)*+{y}="2";
{\ar@/^2.75pc/^f "1";"2"};
{\ar@/_2.75pc/_h "1";"2"};
\endxy .
\]
And finally, we can compose two 3-morphisms at an object, which we call \define{composition at a 0-cell}:
\[
\xy 0;/r.22pc/:
(0,15)*{};
(0,-15)*{};
(-20,0)*+{x}="1";
(0,0)*+{y}="2";
{\ar@/^2pc/^f "1";"2"};
{\ar@/_2pc/_g "1";"2"};
(20,0)*+{z}="3";
{\ar@/^2pc/^{f'} "2";"3"};
{\ar@/_2pc/_{g'} "2";"3"};
(-10,6)*+{}="A";
(-10,-6)*+{}="B";
{\ar@{=>}@/_.7pc/ "A"+(-2,0) ; "B"+(-1,-.8)};
{\ar@{=}@/_.7pc/ "A"+(-2,0) ; "B"+(-2,0)};
{\ar@{=>}@/^.7pc/ "A"+(2,0) ; "B"+(1,-.8)};
{\ar@{=}@/^.7pc/ "A"+(2,0) ; "B"+(2,0)};
(10,6)*+{}="A";
(10,-6)*+{}="B";
{\ar@{=>}@/_.7pc/ "A"+(-2,0) ; "B"+(-1,-.8)};
{\ar@{=}@/_.7pc/ "A"+(-2,0) ; "B"+(-2,0)};
{\ar@{=>}@/^.7pc/ "A"+(2,0) ; "B"+(1,-.8)};
{\ar@{=}@/^.7pc/ "A"+(2,0) ; "B"+(2,0)};
{\ar@3{->} (-12,0)*{}; (-8,0)*{}};
(-10,3)*{\scriptstyle \Gamma};
{\ar@3{->} (8,0)*{}; (12,0)*{}};
(10,3)*{\scriptstyle \Delta};
\endxy 
\quad = \quad 
\xy 0;/r.22pc/:
(0,15)*{};
(0,-15)*{};
(0,8)*{}="A";
(0,-8)*{}="B";
{\ar@{=>}@/_1pc/ "A"+(-4,1) ; "B"+(-3,0)};
{\ar@{=}@/_1pc/ "A"+(-4,1) ; "B"+(-4,1)};
{\ar@{=>}@/^1pc/ "A"+(4,1) ; "B"+(3,0)};
{\ar@{=}@/^1pc/ "A"+(4,1) ; "B"+(4,1)};
{\ar@3{->} (-6,0)*{} ; (6,0)*+{}};
(0,3)*{\scriptstyle \Gamma \cdot \Delta};
(-15,0)*+{x}="1";
(15,0)*+{z}="2";
{\ar@/^2.75pc/^{f \cdot f'} "1";"2"};
{\ar@/_2.75pc/_{g \cdot g'} "1";"2"};
\endxy .
\]
As in a bicategory, a tricategory has an associator and left and right unitors:
\[ a(f,g,h) \maps (f \cdot g) \cdot h \Rightarrow f \cdot (g \cdot h), \quad  l(f) \maps 1_x \cdot f \Rightarrow f, \quad r(f) \maps f \cdot 1_y \Rightarrow f. \]
But the presence of 3-morphisms allows us to \emph{weaken the axioms}; the associator
only satisfies the pentagon identity up to a 3-isomorphism we call the
\define{pentagonator}:
\[
\xy
(0,20)*+{(f g) (h k)}="1";
(40,0)*+{f (g (h k))}="2";
(25,-20)*{ \quad f ((g h) k)}="3";
(-25,-20)*+{(f (g h)) k}="4";
(-40,0)*+{((f g) h) k}="5";
{\ar@{=>}^{a(f,g,h k)}     "1";"2"}
{\ar@{=>}_{1_f \cdot a_(g,h,k)}  "3";"2"}
{\ar@{=>}^{a(f,g h,k)}    "4";"3"}
{\ar@{=>}_{a(f,g,h) \cdot 1_k}  "5";"4"}
{\ar@{=>}^{a(f g,h,k)}    "5";"1"}
{\ar@3{<-}^{\pi(f,g,h,k)} (-2,5)*{}; (-2,-5)*{} }
\endxy
\]
and the triangle axiom is replaced by three distinct 3-isomorphisms, called the
`triangulators'. These 3-isomorphisms must in turn satisfy axioms of their own.

Of course, we actually aim to define a `smooth tricategory', $T$. This has a smooth
manifold of objects, $T_0$, a smooth manifold of morphisms, $T_1$, a smooth manifold
of 2-morphisms, $T_2$, and a smooth manifold of 3-morphisms, $T_3$, such that:
\begin{itemize}
	\item $T_1$, $T_2$ and $T_3$ fit together to form a smooth bicategory;
	\item composition at a 0-cell is a smooth `pseudofunctor';
	\item satisfying associativity and left and right unit laws up to
          smooth `adjoint equivalences', the associator and left and right unitors;
	\item satisfying the pentagon and triangle identities up to invertible smooth `modifications';
	\item satisfying some identities of their own.
\end{itemize}
Each of the above quoted terms---pseudofunctor, adjoint equivalences,
modification---would usually need to be defined completely in order to understand
tricategories. For us, the key new concept is that of modification, because our
functors will not be pseudo, and our adjoint equivalences will be
identities. Nonetheless, let us describe each of these concepts in turn.

\begin{itemize}
	\item `Pseudofunctor' is to `bicategory' as `functor' is to `category':
		it is a map $F \maps B \to B'$ between bicategories $B$ and
		$B'$, preserving all structure in sight \emph{except}
		horizontal composition and identities, which are only preserved
		up to specified 2-isomorphisms:
		\[ F(f \cdot g) \To F(f) \cdot F(g), \quad F(1_x) \To 1_{F(x)} . \] 
		For the tricategories we construct, all pseudofunctors will be
		strict: the above 2-isomorphisms are identities.

	\item `Adjoint equivalences' involve both `pseudonatural transformations' and `modifications':

          \begin{itemize}
            \item  `Pseudonatural transformation' is to `pseudofunctor' as `natural
		transformation' is to 'functor': given two pseudofunctors 
		\[
		\xy 
		(-10,0)*+{B}="1";
		(10,0)*+{B'}="2";
		{\ar@/^1.65pc/^F "1";"2"};
		{\ar@/_1.65pc/_G "1";"2"};
		(0,5)*+{}="A";
		(0,-5)*+{}="B";
		\endxy
		\]
		a pseudonatural transformation is a map:
		\[
		\xy 
		(-10,0)*+{B}="1";
		(10,0)*+{B'}="2";
		{\ar@/^1.65pc/^{F} "1";"2"};
		{\ar@/_1.65pc/_{G} "1";"2"};
		(0,5)*+{}="A";
		{\ar@{=>}^{\scriptstyle \theta} (0,3)*{};(0,-3)*{}} ;
		(0,-5)*+{}="B";
		\endxy .
		\]
		Like a natural transformation, this consists of a morphism for
		each object $x$ in $B$:
		\[ \theta(x) \maps F(x) \to G(x). \]
		Unlike a natural transformation, it is only natural up to a
		specified 2-isomorphism. That is, the naturality square:
		\[ \xymatrix{ F(x) \ar[r]^{F(f)} \ar[d]_{\theta(x)} & F(y) \ar[d]^{\theta(y)} \\
			G(x) \ar[r]^{G(f)} & G(y) \\
		}
		\]
		does \emph{not} commute. It is replaced with a 2-isomorphism:
		\[ \xymatrix{ F(x) \ar[r]^{F(f)} \ar[d]_{\theta(x)} & F(y) \ar[d]^{\theta(y)} \\
		\ar@{=>}[ur]_{\theta(f)} G(x) \ar[r]_{G(f)} & G(y) \\
			}
		\]
		that satisfies some equations of its own. For the tricategories
		we construct, all pseudonatural transformations will be strict:
		the 2-isomorphism above is the identity.
	\item A `modification' is something new: it is a map between
		pseudonatural transformations. Given two pseudonatural
		transformations:
		\[
		\xy 
		(-12,0)*+{B}="1";
		(12,0)*+{B'}="2";
		{\ar@/^1.65pc/^F "1";"2"};
		{\ar@/_1.65pc/_G "1";"2"};
		(0,5)*+{}="A";
		(0,-5)*+{}="B";
		{\ar@{=>}@/_.75pc/ "A"+(-1.33,0) ; "B"+(-.66,-.55)};
		(-6.5,0)*{\scriptstyle \theta};
		{\ar@{=}@/_.75pc/ "A"+(-1.33,0) ; "B"+(-1.33,0)};
		{\ar@{=>}@/^.75pc/ "A"+(1.33,0) ; "B"+(.66,-.55)};
		(6.5,0)*{\scriptstyle \eta};
		{\ar@{=}@/^.75pc/ "A"+(1.33,0) ; "B"+(1.33,0)};
		\endxy
		\]
		a modification $\Gamma$ is a map:
		\[
		\xy 
		(-12,0)*+{B}="1";
		(12,0)*+{B'}="2";
		{\ar@/^1.65pc/^F "1";"2"};
		{\ar@/_1.65pc/_G "1";"2"};
		(0,5)*+{}="A";
		(0,-5)*+{}="B";
		{\ar@{=>}@/_.75pc/ "A"+(-1.33,0) ; "B"+(-.66,-.55)};
		(-6.5,0)*{\scriptstyle \theta};
		{\ar@{=}@/_.75pc/ "A"+(-1.33,0) ; "B"+(-1.33,0)};
		{\ar@{=>}@/^.75pc/ "A"+(1.33,0) ; "B"+(.66,-.55)};
		(6.5,0)*{\scriptstyle \eta};
		{\ar@{=}@/^.75pc/ "A"+(1.33,0) ; "B"+(1.33,0)};
		{\ar@3{->} (-2,0)*{}; (2,0)*{}};
		(0,2.5)*{\scriptstyle \Gamma};
		\endxy .
		\]
		Just as a pseudonatural transformation consists of a morphism for
		each object $x$ in $B$, a modification consists of a 2-morphism
		for each object $x$ in $B$:
		\[
		\xy 
		(-12,0)*+{F(x)}="1";
		(12,0)*+{G(x)}="2";
		{\ar@/^1.65pc/^{\theta(x)} "1";"2"};
		{\ar@/_1.65pc/_{\eta(x)} "1";"2"};
		(0,5)*+{}="A";
		{\ar@{=>}^{\scriptstyle \Gamma(x)} (0,3)*{};(0,-3)*{}} ;
		(0,-5)*+{}="B";
		\endxy .
		\]
		These 2-morphisms satisfy an equation that will hold trivially
		in the tricategories we consider, so we omit it. See Leinster
		\cite{Leinster:bicat} for more details.

              \end{itemize}
\end{itemize}
Finally, an \define{adjoint equivalence} is a quadruple $(\theta, \theta^{-1}, e, u)$
consisting of two pseudonatural transformations, $\theta$ and $\theta^{-1}$, going in
opposite directions:
\[
	\xy 
	(-10,0)*+{B}="1";
	(10,0)*+{B'}="2";
	{\ar@/^1.65pc/^{F} "1";"2"};
	{\ar@/_1.65pc/_{G} "1";"2"};
	(0,5)*+{}="A";
	{\ar@{=>}^{\scriptstyle \theta} (0,3)*{};(0,-3)*{}} ;
	(0,-5)*+{}="B";
	\endxy
        \quad \quad \quad
	\xy 
	(-10,0)*+{B}="1";
	(10,0)*+{B'}="2";
	{\ar@/^1.65pc/^{F} "1";"2"};
	{\ar@/_1.65pc/_{G} "1";"2"};
	(0,5)*+{}="A";
	{\ar@{<=}^{\scriptstyle \theta^{-1}} (0,3)*{};(0,-3)*{}} ;
	(0,-5)*+{}="B";
	\endxy 
\]
along with invertible modifications, $e$ and $u$, that exhibit $\theta$ and $\theta^{-1}$ as weak inverses to each other:
\[ e \maps \theta^{-1} \theta \Rrightarrow 1_F, \quad u \maps 1_G \Rrightarrow \theta \theta^{-1} . \]
We further demand that $e$ and $u$ also satisfy the \define{zig-zag identities}:
\begin{center}
  \begin{tikzpicture}
    \node (A) at (30:2cm) {$(\theta\theta^{-1})\theta$};
    \node (B) at (90:2cm) {$1 \theta$}
      edge [->] node [above right, l] {$u 1_\theta$} (A);
    \node (C) at (150:2cm) {$\theta$}
      edge [->] node [above left, l] {$l^{-1}$} (B);
    \node (D) at (210:2cm) {$\theta$}
      edge [<-] node [left, l] {$1_\theta$} (C);
    \node (E) at (270:2cm) {$\theta 1$}
      edge [->] node [below left, l] {$r$} (D);
    \node (F) at (330:2cm) {$\theta(\theta^{-1}\theta)$}
      edge [->] node [below right, l] {$1_\theta e$} (E)
      edge [<-] node [right, l] {$a$} (A);
  \end{tikzpicture}
  $\quad\quad\quad$
  \begin{tikzpicture}
    \node (A) at (30:2cm) {$\theta^{-1} (\theta\theta^{-1})$};
    \node (B) at (90:2cm) {$\theta^{-1} 1$}
      edge [->] node [above right, l] {$1_{\theta^{-1}} u$} (A);
    \node (C) at (150:2cm) {$\theta^{-1}$}
      edge [->] node [above left, l] {$r^{-1}$} (B);
    \node (D) at (210:2cm) {$\theta^{-1}$}
      edge [<-] node [left, l] {$1_{\theta^{-1}}$} (C);
    \node (E) at (270:2cm) {$1 \theta^{-1}$}
      edge [->] node [below left, l] {$l$} (D);
    \node (F) at (330:2cm) {$(\theta^{-1}\theta) \theta^{-1}$}
      edge [->] node [below right, l] {$e 1_{\theta^{-1}}$} (E)
      edge [<-] node [right, l] {$a^{-1}$} (A);
  \end{tikzpicture}
\end{center}
In these diagrams, we have drawn the triple arrows for the modifications as single
arrows to avoid clutter. Juxtaposition of modifications comes from horizontal
composition of 2-morphisms in the target bicategory, along with the associator and
unitors in that bicategory. The word ``adjoint'' appears because these identities are
analogous to those satisfied by the unit and counit of an adjoint pair of
functors. While an adjoint equivalence refers to the entire quadruple $(\theta,
\theta^{-1}, e, u)$, we will often abuse terminology and mention only $\theta$.

With these preliminaries in mind, we can now sketch the definition of a smooth
tricategory. 
\begin{defn} \label{def:smoothtricat}
	A \define{smooth tricategory} $T$ consists of: 
\begin{itemize}
	\item a \define{manifold of objects}, $T_0$;
	\item a \define{manifold of morphisms}, $T_1$;
	\item a \define{manifold of 2-morphisms}, $T_2$;
	\item a \define{manifold of 3-morphisms}, $T_3$;
\end{itemize}
equipped with:
\begin{itemize}
\item a smooth bicategory structure on $\underline{\Mor} \, T$, with
  \begin{itemize}
  \item $T_1$ as the smooth manifold of objects;
  \item $T_2$ as the smooth manifold of morphisms;
  \item $T_3$ as the smooth manifold of 2-morphisms;
  \end{itemize}
  We call the vertical composition in $\underline{\Mor} \, T$ \define{composition at
  a 2-cell}, and the horizontal composition in $\underline{\Mor} \, T$
  \define{composition at a 1-cell}.
\item smooth \define{source} and \define{target maps}:
  \[ s, t \maps T_1 \to T_0. \]
\item a smooth \define{identity-assigning map}:
  \[ i \maps T_0 \to T_1. \]
\item a smooth \define{composition} pseudofunctor, called \define{composition at a
  0-cell}, which is strict in the tricategories we consider:
  \[ \cdot \maps \underline{\Mor} \, T \times_{T_0} \underline{\Mor} \, T \to \underline{\Mor} \, T. \]
  That is, three smooth maps:
  \[ \cdot \maps T_1 \times_{T_0} T_1 \to T_1 \]
  \[ \cdot \maps T_2 \times_{T_0} T_2 \to T_2 \]
  \[ \cdot \maps T_3 \times_{T_0} T_3 \to T_3 \]
  satisfying the axioms of a strict functor. Here, the pullbacks give us the spaces
  of 1-, 2- and 3-morphisms which are composable at a 0-cell, and we assume they are
  smooth manifolds.
\item a smooth adjoint equivalence, the \define{associator}, trivial in the
  tricategories we consider:
  \[ a(f,g,h) \maps (f \cdot g) \cdot h \To f \cdot (g \cdot h). \]
\item smooth adjoint equivalences, the \define{left} and \define{right unitors}, both
  trivial in the tricategories we consider:
  \[ l(f) \maps 1 \cdot f \To f, \quad r(f) \maps f \cdot 1 \To f. \]
\item a smooth invertible modification called the \define{pentagonator}:
  \[
	\xy
	 (0,20)*+{(f g) (h k)}="1";
	 (40,0)*+{f (g (h k))}="2";
	 (25,-20)*{ \quad f ((g h) k)}="3";
	 (-25,-20)*+{(f (g h)) k}="4";
	 (-40,0)*+{((f g) h) k}="5";
	 {\ar@{=>}^{a(f,g,h k)}     "1";"2"}
	 {\ar@{=>}_{1_f \cdot a_(g,h,k)}  "3";"2"}
	 {\ar@{=>}^{a(f,g h,k)}    "4";"3"}
	 {\ar@{=>}_{a(f,g,h) \cdot 1_k}  "5";"4"}
	 {\ar@{=>}^{a(f g,h,k)}    "5";"1"}
	 {\ar@3{<-}^{\pi(f,g,h,k)} (-2,5)*{}; (-2,-5)*{} }
	\endxy
  \]
\item smooth invertible modifications called the \define{middle, left} and
  \define{right triangulators}, all trivial in the tricategories we consider:
  \begin{center}
    \begin{tikzpicture}
          \filldraw[white,fill=brown,fill opacity=0.5](0,2)--(3,1)--(0
,0)--cycle;
          \node (1fg1) at (0,2) {$(1\cdot f)\cdot g$};
          \node (1fg2) at (3,1) {$1\cdot (f \cdot g)$}
            edge [<-, double] node [l, above right] {$a$} (1fg1);
          \node (fg) at (0,0) {$f \cdot g$}
            edge [<-, double] node [l, below right] {$l$} (1fg2)
            edge [<-, double] node [l, left] {$l \cdot 1_g$} (1fg1);
          \node at (1,1) {$\Rrightarrow \lambda$};
        \end{tikzpicture}
      \end{center}
      \begin{center}
        \begin{tikzpicture}
          \filldraw[white,fill=brown,fill opacity=0.5](0,2)--(3,1)--(0
,0)--cycle;
          \node (fg) at (0,0) {$f \cdot g$};
          \node (f1g2) at (3,1) {$f\cdot (1\cdot g)$}
            edge [->, double] node [l, below right] {$1_f \cdot l$} (fg);
          \node (f1g1) at (0,2) {$(f\cdot 1) \cdot g$}
            edge [->, double] node [l, above] {$a$} (f1g2)
            edge [->, double] node [l, left] {$r \cdot 1_g$} (fg);
          \node at (1,1) {$\Rrightarrow \mu$};
        \end{tikzpicture}
      \end{center}
      \begin{center}
        \begin{tikzpicture}
          \filldraw[white,fill=brown,fill opacity=0.5](0,2)--(3,1)--(0
,0)--cycle;
          \node (fg11) at (0,2) {$(f \cdot g) \cdot 1$};
          \node (fg12) at (3,1) {$f \cdot (g \cdot 1)$}
            edge [<-, double] node [l, above right] {$a$} (fg11);
          \node (fg) at (0,0) {$f \cdot g$}
            edge [<-, double] node [l, below right] {$1_f \cdot r$} (fg12)
            edge [<-, double] node [l, left] {$r$} (fg11);
          \node at (1,1) {$\Rrightarrow \rho$};
        \end{tikzpicture}
      \end{center}
\end{itemize}

The pentagonator and triangulators satisfy some axioms. When $\lambda$, $\rho$ and
$\mu$ are trivial, as they are for us, the axioms involving the triangulators just
say $\pi$ is trivial whenever one of its arguments is 1, so we omit these. This
leaves one axiom, involving only the pentagonator. This key axiom is the
\define{pentagonator identity}:

    \begin{center}
      \begin{tikzpicture}[scale=.85]
        [line join=round]
        \filldraw[white,fill=red,fill opacity=0.1](-4.306,-3.532)--(-2.391,-.901)--(-2.391,3.949)--(-5.127,.19)--(-5.127,-2.581)--cycle;
        \filldraw[white,fill=red,fill opacity=0.1](-4.306,-3.532)--(-2.391,-.901)--(2.872,-1.858)--(4.306,-3.396)--(3.212,-4.9)--cycle;
        \filldraw[white,fill=red,fill opacity=0.1](2.872,-1.858)--(2.872,5.07)--(-.135,5.617)--(-2.391,3.949)--(-2.391,-.901)--cycle;
        \filldraw[white,fill=green,fill opacity=0.1](4.306,-3.396)--(4.306,3.532)--(2.872,5.07)--(2.872,-1.858)--cycle;
        \begin{scope}[font=\fontsize{8}{8}\selectfont]
          \node (A) at (-2.391,3.949) {$(f(g(hk)))p$};
          \node (B) at (-5.127,.19) {$(f((gh)k))p$}
	  edge [->, double] node [l, above left] {$(1_{f}  a)1_{p}$} (A);
          \node (C) at (-5.127,-2.581) {$((f(gh))k)p$}
	  edge [->, double] node [l, left] {$a  1_{p}$} (B);
          \node (D) at (-4.306,-3.532) {$(((fg)h)k)p$}
	  edge [->, double] node [l, left] {$(a  1_{k})  1_{p}$} (C);
          \node (E) at (3.212,-4.9) {$((fg)h)(kp)$}
            edge [<-, double] node [l, below] {$a$} (D);
          \node (F) at (4.306,-3.396) {$(fg)(h(kp))$}
            edge [<-, double] node [l, below right] {$a$} (E);
          \node (G) at (4.306,3.532) {$f(g(h(kp)))$}
            edge [<-, double] node [l, right] {$a$} (F);
          \node (H) at (2.872,5.07) {$f(g((hk)p))$}
            edge [->, double] node [l, above right] {$1_{f}(1_{g}a)$} (G);
          \node (I) at (-.135,5.617) {$f((g(hk))p)$}
            edge [->, double] node [l, above] {$1_{f}a$} (H)
            edge [<-, double] node [l, above left] {$a$} (A);
          \node (M) at (-2.391,-.901) {$((fg)(hk))p$}
            edge [<-, double] node [l, right] {$a1_{p}$} (D)
            edge [->, double] node [l, right] {$a1_{p}$} (A);
          \node (N) at (2.872,-1.858) {$(fg((hk)p))$}
            edge [<-, double] node [l, above] {$a$} (M)
            edge [->, double] node [l, left] {$a$} (H)
            edge [->, double] node [l, left] {$(1_{f}1_{g})a$} (F);
	    \node at (-4,-.5) {$\Rrightarrow \pi \cdot 1_{_{1_{p}}}$};
          \node at (0,-3) {\tikz\node [rotate=-90] {$\Rrightarrow$};};
          \node at (0.5,-3) {$\pi$};
          \node at (0,2) {\tikz\node [rotate=-45] {$\Rrightarrow$};};
          \node at (0.5,2) {$\pi$};
          \node at (3.5,1) {$\cong$};
        \end{scope}
      \end{tikzpicture}
      \[ = \]
      \begin{tikzpicture}[scale=.85]
        [line join=round]
        \filldraw[white,fill=green,fill opacity=0.1](-2.872,1.858)--(-.135,5.617)--(-2.391,3.949)--(-5.127,.19)--cycle;
        \filldraw[white,fill=red,fill opacity=0.1](3.212,-4.9)--(4.306,-3.396)--(4.306,3.532)--(2.391,.901)--(2.391,-3.949)--cycle;
        \filldraw[white,fill=green,fill opacity=0.1](-4.306,-3.532)--(3.212,-4.9)--(2.391,-3.949)--(-5.127,-2.581)--cycle;
        \filldraw[white,fill=red,fill opacity=0.1](-2.872,1.858)--(2.391,.901)--(4.306,3.532)--(2.872,5.07)--(-.135,5.617)--cycle;
        \filldraw[white,fill=red,fill opacity=0.1](-5.127,-2.581)--(-5.127,.19)--(-2.872,1.858)--(2.391,.901)--(2.391,-3.949)--cycle;
        \begin{scope}[font=\fontsize{8}{8}\selectfont]
          \node (A) at (-2.391,3.949) {$(f(g(h k)))p$};
          \node (B) at (-5.127,.19) {$(f((gh)k))p$}
            edge [->, double] node [l, above left] {$(1_{f}a)1_{p}$} (A);
          \node (C) at (-5.127,-2.581) {$((f(gh))k)p$}
            edge [->, double] node [l, left] {$a1_{p}$} (B);
          \node (D) at (-4.306,-3.532) {$(((fg)h)k)p$}
            edge [->, double] node [l, left] {$(a1_{k})1_{p}$} (C);
          \node (E) at (3.212,-4.9) {$((fg)h)(kp)$}
            edge [<-, double] node [l, below] {$a$} (D);
          \node (F) at (4.306,-3.396) {$(fg)(h(kp))$}
            edge [<-, double] node [l, below right] {$a$} (E);
          \node (G) at (4.306,3.532) {$f(g(h(kp)))$}
            edge [<-, double] node [l, right] {$a$} (F);
          \node (H) at (2.872,5.07) {$f(g((hk)p))$}
            edge [->, double] node [l, above right] {$1_{f}(1_{g}a)$} (G);
          \node (I) at (-.135,5.617) {$f((g(hk))p)$}
            edge [->, double] node [l, above] {$1_{f}a$} (H)
            edge [<-, double] node [l, above left] {$a$} (A);
          \node (J) at (-2.872,1.858) {$f(((gh)k)p)$}
            edge [->, double] node [l, below right] {$1_{f}(a1_{p})$} (I)
            edge [<-, double] node [l, below right] {$a$} (B);
          \node (K) at (2.391,-3.949) {$(f(gh))(kp)$}
            edge [<-, double] node [l, left] {$a(1_{k}1_{p})$} (E)
            edge [<-, double] node [l, above] {$a$} (C);
          \node (L) at (2.391,.901) {$f((gh)(kp))$}
            edge [<-, double] node [l, left] {$a$} (K)
            edge [<-, double] node [l, above] {$1_{f}a$} (J)
            edge [->, double] node [l, above left] {$1_{f}a$} (G);
          \node at (-1,-1) {\tikz\node [rotate=-45] {$\Rrightarrow$};};
	  \node at (-.5,-1) {$\pi$};
          \node at (1,3) {\tikz\node [rotate=-45] {$\Rrightarrow$};};
	  \node at (1.7,3) {$1_{_{1_{f}}} \cdot \pi$};
          \node at (3,-1.5) {\tikz\node [rotate=-45] {$\Rrightarrow$};};
          \node at (3.5,-1.5) {$\pi$};
          \node at (-1,-3.7) {$\cong$};
          \node at (-2.5,3) {$\cong$};
        \end{scope}
      \end{tikzpicture}
    \end{center}

\end{defn}

\noindent
In the pentagonator identity, we have omitted $\cdot$ everywhere except the faces to
save space, and the unlabeled squares are naturality squares for the associator,
which will be trivial in the examples we consider. This identity comes from a
3-dimensional solid called the \define{associahedron}. This is the polyhedron where:
\begin{itemize}
	\item vertices are parenthesized lists of five morphisms, e.g.\ $(((fg)h)k)p$;
	\item edges connect any two vertices related by an application of the associator, e.g.\ 
		\[ (((fg)h)k)p \Rightarrow ((fg)(hk))p . \]
\end{itemize}

In fact, the pentagonator identity gives us a picture of the associahedron.
Regarding the left-hand side of the equation as the back and the right-hand
side as the front, we assemble the following polyhedron:
\begin{center}
  \begin{tikzpicture}[line join=round]
    \begin{scope}[scale=.2]
      \filldraw[fill=red,fill opacity=0.7](-4.306,-3.532)--(-2.391,-.901)--(-2.391,3.949)--(-5.127,.19)--(-5.127,-2.581)--cycle;
      \filldraw[fill=red,fill opacity=0.7](-4.306,-3.532)--(-2.391,-.901)--(2.872,-1.858)--(4.306,-3.396)--(3.212,-4.9)--cycle;
      \filldraw[fill=red,fill opacity=0.7](2.872,-1.858)--(2.872,5.07)--(-.135,5.617)--(-2.391,3.949)--(-2.391,-.901)--cycle;
      \filldraw[fill=green,fill opacity=0.7](4.306,-3.396)--(4.306,3.532)--(2.872,5.07)--(2.872,-1.858)--cycle;
      \filldraw[fill=green,fill opacity=0.7](-2.872,1.858)--(-.135,5.617)--(-2.391,3.949)--(-5.127,.19)--cycle;
      \filldraw[fill=red,fill opacity=0.7](3.212,-4.9)--(4.306,-3.396)--(4.306,3.532)--(2.391,.901)--(2.391,-3.949)--cycle;
      \filldraw[fill=green,fill opacity=0.7](-4.306,-3.532)--(3.212,-4.9)--(2.391,-3.949)--(-5.127,-2.581)--cycle;
      \filldraw[fill=red,fill opacity=0.7](-2.872,1.858)--(2.391,.901)--(4.306,3.532)--(2.872,5.07)--(-.135,5.617)--cycle;
      \filldraw[fill=red,fill opacity=0.7](-5.127,-2.581)--(-5.127,.19)--(-2.872,1.858)--(2.391,.901)--(2.391,-3.949)--cycle;
    \end{scope}
  \end{tikzpicture}
\end{center}
Identifying the vertices, edges and faces of this polyhedron with the
corresponding morphisms, 2-morphisms and 3-morphisms from the pentagonator
identity, we see the identity just says that the associahedron commutes.

\subsection{Inverses} \label{sec:inverses}

To talk about Lie 3-groups, we need to talk about inverses of morphisms, 2-morphisms
and 3-morphisms. This is complicated by the fact that inverses in a 3-group should be
weak inverses, satisfying the left and right inverse laws
\[ f^{-1} f = 1, \quad f f^{-1} = 1 \]
only up to higher morphisms. Fortunately, for the examples we care about, the
inverses are strict, and these laws hold, but nonetheless we will describe the
general situation. 

For 3-morphisms, there are no higher morphisms, so the inverse of a 3-morphism
$\Gamma$ is the usual, strict notion: a 3-morphism $\Gamma^{-1}$ satisfying
the left and right inverse laws:
\[ \Gamma^{-1} \Gamma = 1, \quad \Gamma \Gamma^{-1} = 1 .\]
Here, juxtaposition denotes composition at a 2-cell, and 1 is the identity of the
appropriate 2-cell. Provided such an inverse exists, it is automatically unique, so
we are justified in saying \emph{the} inverse.

For 2-morphisms, we can weaken the notion of inverse. A 2-morphism $\alpha$ has a
\define{weak inverse} $\alpha^{-1}$ if there are invertible 3-morphisms $e$ and $u$
which weaken the left and right inverse laws:
\[ e \maps \alpha^{-1} \alpha \Rrightarrow 1, \quad u \maps 1 \Rrightarrow \alpha \alpha^{-1} . \]
Now, juxtaposition denotes composition at a 1-cell, and the 1 denotes the identity of
the appropriate 1-cell. Unlike strict inverses, weak inverses are no longer unique,
but any two of them must be isomorphic. A 2-morphism $\alpha$ with a weak inverse is
called an \define{equivalence}. Yet, to maximize the utility of a weak inverse, it is
best if we impose some axioms on $u$ and $e$. We have already seen the desired axioms
in the last section, when we introduced the concept of an adjoint equivalence. They
are the \define{zig-zag identities}:
\begin{center}
  \begin{tikzpicture}
    \node (A) at (30:2cm) {$(\alpha\alpha^{-1})\alpha$};
    \node (B) at (90:2cm) {$1 \alpha$}
      edge [->] node [above right, l] {$u 1_\alpha$} (A);
    \node (C) at (150:2cm) {$\alpha$}
      edge [->] node [above left, l] {$l^{-1}$} (B);
    \node (D) at (210:2cm) {$\alpha$}
      edge [<-] node [left, l] {$1_\alpha$} (C);
    \node (E) at (270:2cm) {$\alpha 1$}
      edge [->] node [below left, l] {$r$} (D);
    \node (F) at (330:2cm) {$\alpha(\alpha^{-1}\alpha)$}
      edge [->] node [below right, l] {$1_\alpha e$} (E)
      edge [<-] node [right, l] {$a$} (A);
  \end{tikzpicture}
  $\quad\quad\quad$
  \begin{tikzpicture}
    \node (A) at (30:2cm) {$\alpha^{-1} (\alpha\alpha^{-1})$};
    \node (B) at (90:2cm) {$\alpha^{-1} 1$}
      edge [->] node [above right, l] {$1_{\alpha^{-1}} u$} (A);
    \node (C) at (150:2cm) {$\alpha^{-1}$}
      edge [->] node [above left, l] {$r^{-1}$} (B);
    \node (D) at (210:2cm) {$\alpha^{-1}$}
      edge [<-] node [left, l] {$1_{\alpha^{-1}}$} (C);
    \node (E) at (270:2cm) {$1 \alpha^{-1}$}
      edge [->] node [below left, l] {$l$} (D);
    \node (F) at (330:2cm) {$(\alpha^{-1}\alpha) \alpha^{-1}$}
      edge [->] node [below right, l] {$e 1_{\alpha^{-1}}$} (E)
      edge [<-] node [right, l] {$a^{-1}$} (A);
  \end{tikzpicture}
\end{center}
When $e$ and $u$ satisfy these identities, the quadruple $(\alpha, \alpha^{-1}, e,
u)$ is called an \define{adjoint equivalence}, though we generally abuse terminology
and say $\alpha$ itself is an adjoint equivalence. By adjusting $e$ and $u$, every
equivalence can be made into an adjoint equivalence. Unlike the choice of
$\alpha^{-1}$, which is only unique up to isomorphism, the choice of $(\alpha^{-1},
e, u)$ making $\alpha$ an adjoint equivalence is unique up to canonical isomorphism.

Moving down to morphisms, we can weaken the notion of invertibility even
further. Whereas a 2-morphism had a weak inverse up to isomorphism, a morphism has a
weak inverse up to equivalence. That is, a morphism $f$ has a weak inverse $f^{-1}$
if there are equivalences $\epsilon$ and $\eta$ weakening the left and right inverse
laws:
\[ \epsilon \maps f^{-1} f \To 1, \quad \eta \maps 1 \To f f^{-1} \]
Of course, the weak inverse of a morphism is only unique up to equivalence. A
morphism that has a weak inverse is called a \define{biequivalence}. Just as every
equivalence can be improved by making it part of an adjoint equivalence, every
biequivalence can be improved by making it a part of a `biadjoint biequivalence'
\cite{Gurski:biequivalences}. A \define{biadjoint biequivalence} is a biequivalence where
$\epsilon$ and $\eta$ are both adjoint equivalences, satisfying the zig-zag identities up
to invertible 3-morphisms:
\begin{center}
  \begin{tikzpicture}
    \node (A) at (30:2cm) {$(ff^{-1})f$};
    \node (B) at (90:2cm) {$1 f$}
      edge [->, double] node [above right, l] {$\eta 1_f$} (A);
    \node (C) at (150:2cm) {$f$}
      edge [->, double] node [above left, l] {$l^{-1}$} (B);
    \node (D) at (210:2cm) {$f$}
      edge [<-, double] node [left, l] {$1_f$} (C);
    \node (E) at (270:2cm) {$f 1$}
      edge [->, double] node [below left, l] {$r$} (D);
    \node (F) at (330:2cm) {$f(f^{-1}f)$}
      edge [->, double] node [below right, l] {$1_f \epsilon$} (E)
      edge [<-, double] node [right, l] {$a$} (A);
    \node at (90:.25cm) {$\Phi$};
    \node at (-90:.25cm) {$\Rrightarrow$};
    \begin{scope}
      \fill [white,fill=yellow,fill opacity=0.1] (A.center) to
(B.center) to (C.center) to (D.center) to (E.center) to (F.center) to
(A.center);
    \end{scope}
  \end{tikzpicture}
  $\quad\quad\quad$
  \begin{tikzpicture}
    \node (A) at (30:2cm) {$f^{-1} (ff^{-1})$};
    \node (B) at (90:2cm) {$f^{-1} 1$}
      edge [->, double] node [above right, l] {$1_{f^{-1}} \eta$} (A);
    \node (C) at (150:2cm) {$f^{-1}$}
      edge [->, double] node [above left, l] {$r^{-1}$} (B);
    \node (D) at (210:2cm) {$f^{-1}$}
      edge [<-, double] node [left, l] {$1_{f^{-1}}$} (C);
    \node (E) at (270:2cm) {$1 f^{-1}$}
      edge [->, double] node [below left, l] {$l$} (D);
    \node (F) at (330:2cm) {$(f^{-1}f) f^{-1}$}
      edge [->, double] node [below right, l] {$\epsilon 1_{f^{-1}}$} (E)
      edge [<-, double] node [right, l] {$a^{-1}$} (A);
      \node at (90:.25cm) {$\Theta$};
    \node at (-90:.25cm) {$\Rrightarrow$};
    \begin{scope}
      \fill [white,fill=yellow,fill opacity=0.1] (A.center) to
(B.center) to (C.center) to (D.center) to (E.center) to (F.center) to
(A.center);
    \end{scope}
  \end{tikzpicture}
\end{center}
We require these 3-morphisms to satisfy a new equation, called the `swallowtail
identity'. The name comes from the `swallowtail castastrophe' in castastrophe theory,
and is motivated by a conjectured relationship between higher categories and tangles
known as the tangle hypothesis \cite{BaezDolan}. To fully appreciate this conjecture,
it is necessary to see the axioms in this section drawn as string diagrams, which can
be found in the paper of Stay \cite{Stay}. Here is the \define{swallowtail identity}:
\begin{center}
  \begin{tikzpicture}[scale=.75]
    \filldraw [white,fill=brown,fill opacity=0.5] (6,0)--(4,3)--(8
,3)--cycle;
    \filldraw [white,fill=green,fill opacity=0.1] (4,6)--(8,6)--(8
,3)--(4,3)--cycle;
    \filldraw [white,fill=red,fill opacity=0.1] (4,6)--(8,6)--(9,9)--(6
,11)--(3,9)--cycle;
    \filldraw [white,fill=green,fill opacity=0.1] (3,9)--(0,11)--(3
,13)--(6,11)--cycle;
    \filldraw [white,fill=green,fill opacity=0.1] (6,11)--(9,9)--(12
,11)--(9,13)--cycle;
    \filldraw [white,fill=green,fill opacity=0.1] (6,11)--(3,13)--(6
,14)--(9,13)--cycle;
    \filldraw [white,fill=green,fill opacity=0.1] (3,13)--(3,16)--(6
,17)--(6,14)--cycle;
    \filldraw [white,fill=green,fill opacity=0.1] (6,17)--(6,14)--(9
,13)--(9,16)--cycle;
    \filldraw [white,fill=brown,fill opacity=0.5] (0,11)--(3,13)--(3
,16)--cycle;
    \filldraw [white,fill=brown,fill opacity=0.5] (12,11)--(9,13)--(9
,16)--cycle;
    
    \node (A) at (6,0) {$ff^{-1}$};
    \node (B) at (4,3) {$(f1)f^{-1}$}
      edge [->, double] node [below left, l] {$r 1_{f^{-1}}$} (A);
    \node (C) at (8,3) {$f(1f^{-1})$}
      edge [<-, double] node [above, l] {$a$} (B)
      edge [->, double] node [below right, l] {$1_f l$} (A);
    \node at (6,2) [above] {$\mu$};
    \node at (6,2) {$\Rrightarrow$};
    \node (D) at (4,6) {$(f(f^{-1}f))f^{-1}$}
    edge [->, double] node [left,l] {$(1_f \epsilon)1_{f^{-1}}$} (B);
    \node (E) at (8,6) {$f((f^{-1}f)f^{-1})$}
      edge [<-, double] node [above, l] {$a$} (D)
      edge [->, double] node [right, l] {$1_f (\epsilon 1_{f^{-1}})$} (C);
    \node at (6,4.5) {$\cong$};
    \node (F) at (3,9) {$((ff^{-1})f)f^{-1}$}
      edge [->, double] node [left,l] {$a 1_{f^{-1}}$} (D);
    \node (G) at (9,9) {$f(f^{-1}(ff^{-1}))$}
      edge [->, double] node [right, l] {$1_f a^{-1}$} (E);
    \node (H) at (0,11) {$(1f)f^{-1}$}
      edge [->, double] node [below left, l] {$(\eta 1_f)1_{f^{-1}}$} (F);
    \node (I) at (6,11) {$(ff^{-1})(ff^{-1})$}
      edge [->, double] node [below right, l] {$a^{-1}$} (F)
      edge [->, double] node [below left, l] {$a$} (G);
    \node at (6,8) [above] {$\pi_1$};
    \node at (6,8) {$\Rrightarrow$};
    \node (J) at (12,11) {$f(f^{-1}1)$}
      edge [->, double] node [below right, l] {$1_f(1_{f^{-1}} \eta)$} (G);
    \node (K) at (3,13) {$1(ff{-1})$}
      edge [->, double] (H)
      edge [->, double] node [below left, l] {$\eta(1_f 1_{f^{-1}})$} (I);
      \node[l] (Z) at (2,12) {$a^{-1}$}; 
    \node at (3,10.5) {$\cong$};
    \node (L) at (9,13) {$(ff^{-1})1$}
      edge [->, double] node [below right, l] {$(1_f 1_{f^{-1}})\eta$} (I)
      edge [->, double] node [below left, l] {$a$} (J);
    \node at (9,10.5) {$\cong$};
    \node (M) at (3,16) {$ff^{-1}$}
      edge [->, double] node [above left, l] {$l^{-1} 1_{f^{-1}}$} (H)
      edge [->, double] node [right, l] {$l^{-1}$} (K);
    \begin{scope}[on background layer]
      \fill[white,fill=yellow,fill opacity=0.1] (A.center) to (B.center)
to (D.center) to (F.center) to (H.center) to (M.center) to (M.west) to
[bend right=90] (A.west) to (A.center);
    \end{scope}
    \draw [->, double, black] (M.west) to [bend right=90] node [left, l] 
{$1_f 1_{f^{-1}}$} (A.west);
    \node at (2.5,13.5) [above] {$\lambda^{-1}$};
    \node at (2.5,13.5) {$\Rrightarrow$};
    \node (N) at (9,16) {$ff^{-1}$}
      edge [->, double] node [left, l] {$r^{-1}$} (L)
      edge [->, double] node [above right, l] {$1_f r^{-1}$} (J);
    \begin{scope}[on background layer]
      \fill[white,fill=yellow,fill opacity=0.1] (A.center) to (C.center)
to (E.center) to (G.center) to (J.center) to (N.center) to (N.east) to
[bend left=90] (A.east) to (A.center);
    \end{scope}
    \draw [->, double, black] (N.east) to [bend left=90] node [right, l] 
{$1_f 1_{f^{-1}}$} (A.east);
    \node at (9.5,13.5) [above] {$\rho_1$};
    \node at (9.5,13.5) {$\Rrightarrow$};
    \node (O) at (6,14) {$11$}
      edge [->, double] node [above left, l] {$1\eta$} (K)
      edge [->, double] node [above right, l] {$\eta1$} (L);
    \node at (6,13) {$\cong$};
    \node (P) at (6,17) {$1$}
      edge [->, double] node [above, l] {$\eta$} (M)
      edge [->, double] node [above, l] {$\eta$} (N)
      edge [->, double] node [left, l] {$l^{-1}$} node [right, l] 
{$r^{-1}$} (O);
    \node at (4.5,15) {$\cong$};
    \node at (7.5,15) {$\cong$};
    \node at (1,6) [above] {$\Phi \cdot 1_{_{1_{f^{-1}}}}$};
    \node at (1,6) {$\Rrightarrow$};
    \node at (11,6) [above] {$1_{_{1_f}} \cdot \Theta^{-1}$};
    \node at (11,6) {$\Rrightarrow$};
    \node at (15, 8.5) {$=$};

    \node (A') at (17,0) {$ff^{-1}$};
    \node (P') at (17,17) {$1$}
      edge [->, double] node [left, l] {$\eta$} (A');
  \end{tikzpicture}
\end{center}
In this identity, we have omitted $\cdot$ everywhere except the faces to save space,
and the unlabeled squares are naturality squares, which will be trivial in the
examples we consider. The 3-morphisms $\pi_1$ and $\rho_1$ are obtained from $\pi$
and $\rho$ by using the fact that the associator and unitors are adjoint equivalences
to reverse some arrows. For us, these adjoint equivalences will be trivial.

Ultimately, we are interested in having inverses in a smooth tricategory. For this,
we need smooth choices of all of these data. In particular, a smooth tricategory $T$ is said to have \define{invertible 3-morphisms} if there is a smooth map
\[ \inv_3 \maps T_3 \to T_3 \]
assigning to any 3-morphism $\Gamma$ its inverse $\Gamma^{-1} =\inv_3(\Gamma)$. We
say $T$ has \define{weakly invertible 2-morphisms} if there are smooth maps:
\[ \inv_2 \maps T_2 \to T_2, \quad e \maps T_2 \to T_3, \quad u \maps T_2 \to T_3 \]
assigning to any 2-morphism $\alpha$ a choice of weak inverse $\alpha^{-1} =
\inv_2(\alpha)$ and invertible 3-morphisms $e = e(\alpha)$ and $u = u(\alpha)$ such
that $(\alpha, \alpha^{-1}, e, u)$ is an adjoint equivalence. Finally, we say $T$ has
\define{weakly invertible morphisms} if there are smooth maps:
\[ \inv_1 \maps T_1 \to T_1, \quad \epsilon \maps T_1 \to T_2^2 \times T_3^2, \quad
\eta \maps T_1 \to T_2^2 \times T_3^2, \quad \Phi \maps T_1 \to T_3, \quad \Theta
\maps T_1 \to T_3 \] assigning to any morphism $f$ a choice of weak inverse $f^{-1} =
\inv_1(f)$ such that $(f, f^{-1}, \epsilon, \eta, \Phi, \Theta)$ is a biadjoint
biequivalence. Note that targets of $\epsilon$ and $\eta$ are what we need to specify
adjoint equivalences. Here, we have suppressed the dependence of the last
four entries on $f$ for brevity.

\subsection{Lie 3-groups} \label{sec:l3g}

A \define{Lie 3-group} is a smooth tricategory with one object where morphisms,
2-morphisms and 3-morphisms are weakly invertible. Though it looks quite complex, we can
construct an example using only a 4-cocycle in group cohomology, because the
pentagonator identity is secretly a cocycle condition. Given a normalized $H$-valued
4-cocycle $\pi$ on a Lie group $G$, we can construct a Lie 3-group $\Brane_\pi(G,H)$
with:
\begin{itemize}
	\item One object, $\bullet$, regarded as a manifold in the trivial way.
	\item For each element $g \in G$, an automorphism of the one object:
		\[ \bullet \stackrel{g}{\longrightarrow} \bullet \]
		Composition at a 0-cell given by multiplication in the group:
		\[ \cdot \maps G \times G \to G. \]
		The source and target maps are trivial, and identity-assigning
		map takes the one object to $1 \in G$.
	\item Only the identity 2-morphism on any 1-morphism, and no
		2-morphisms between distinct 1-morphisms:
		\[
		\xy
		(-8,0)*+{\bullet}="4";
		(8,0)*+{\bullet}="6";
		{\ar@/^1.65pc/^g "4";"6"};
		{\ar@/_1.65pc/_g "4";"6"};
		{\ar@{=>}^{1_g} (0,3)*{};(0,-3)*{}} ;
		\endxy
		, \quad g \in G.
		\]
		So the space of 2-morphisms is also $G$. The source, target and
		identity-assigning maps are all the identity on $G$.
		Composition at a 1-cell is trivial, while composition at a
		0-cell is again multiplication in $G$.
	\item For each $h \in H$, a 3-automorphism of the 2-morphism $1_g$,
		and no 3-morphisms between distinct 2-morphisms:
		\[
		\xy 
		(-10,0)*+{\bullet}="1";
		(10,0)*+{\bullet}="2";
		{\ar@/^1.65pc/^g "1";"2"};
		{\ar@/_1.65pc/_g "1";"2"};
		(0,5)*+{}="A";
		(0,-5)*+{}="B";
		{\ar@{=>}@/_.75pc/ "A"+(-1.33,0) ; "B"+(-.66,-.55)};
		{\ar@{=}@/_.75pc/ "A"+(-1.33,0) ; "B"+(-1.33,0)};
		{\ar@{=>}@/^.75pc/ "A"+(1.33,0) ; "B"+(.66,-.55)};
		{\ar@{=}@/^.75pc/ "A"+(1.33,0) ; "B"+(1.33,0)};
		{\ar@3{->} (-2,0)*{}; (2,0)*{}};
		(0,2.5)*{\scriptstyle h};
		(-7,0)*{\scriptstyle 1_g};
		(7,0)*{\scriptstyle 1_g};
		\endxy
		, \quad h \in H.
		\]
		Thus the space of 3-morphisms is $G \times H$. The source and
		target maps are projection onto $G$, and the identity assigning
		map takes $1_g$ to $0 \in H$, for all $g \in G$.
	\item Three kinds of composition of 3-morphisms: given a pair of
		3-morphisms on the same 2-morphism, we can compose them at at a
		2-cell, which we take to be addition in $H$:
		\[
		\xy 0;/r.22pc/:
		(0,15)*{};
		(0,-15)*{};
		(0,8)*{}="A";
		(0,-8)*{}="B";
		{\ar@{=>} "A" ; "B"};
		{\ar@{=>}@/_1pc/ "A"+(-4,1) ; "B"+(-3,0)};
		{\ar@{=}@/_1pc/ "A"+(-4,1) ; "B"+(-4,1)};
		{\ar@{=>}@/^1pc/ "A"+(4,1) ; "B"+(3,0)};
		{\ar@{=}@/^1pc/ "A"+(4,1) ; "B"+(4,1)};
		{\ar@3{->} (-6,0)*{} ; (-2,0)*+{}};
		(-4,3)*{\scriptstyle h};
		{\ar@3{->} (2,0)*{} ; (6,0)*+{}};
		(4,3)*{\scriptstyle h'};
		(-15,0)*+{\bullet}="1";
		(15,0)*+{\bullet}="2";
		{\ar@/^2.75pc/^g "1";"2"};
		{\ar@/_2.75pc/_g "1";"2"};
		\endxy 
		\quad = \quad
		\xy 0;/r.22pc/:
		(0,15)*{};
		(0,-15)*{};
		(0,8)*{}="A";
		(0,-8)*{}="B";
		{\ar@{=>}@/_1pc/ "A"+(-4,1) ; "B"+(-3,0)};
		{\ar@{=}@/_1pc/ "A"+(-4,1) ; "B"+(-4,1)};
		{\ar@{=>}@/^1pc/ "A"+(4,1) ; "B"+(3,0)};
		{\ar@{=}@/^1pc/ "A"+(4,1) ; "B"+(4,1)};
		{\ar@3{->} (-6,0)*{} ; (6,0)*+{}};
		(0,3)*{\scriptstyle h + h'};
		(-15,0)*+{\bullet}="1";
		(15,0)*+{\bullet}="2";
		{\ar@/^2.75pc/^g "1";"2"};
		{\ar@/_2.75pc/_g "1";"2"};
		\endxy .
		\]
		We can also compose two 3-morphisms at a 1-cell, which we again
		take to be addition in $H$:
		\[	
		\xy 0;/r.22pc/:
		(0,15)*{};
		(0,-15)*{};
		(0,9)*{}="A";
		(0,1)*{}="B";
		{\ar@{=>}@/_.5pc/ "A"+(-2,1) ; "B"+(-1,0)};
		{\ar@{=}@/_.5pc/ "A"+(-2,1) ; "B"+(-2,1)};
		{\ar@{=>}@/^.5pc/ "A"+(2,1) ; "B"+(1,0)};
		{\ar@{=}@/^.5pc/ "A"+(2,1) ; "B"+(2,1)};
		{\ar@3{->} (-2,6)*{} ; (2,6)*+{}};
		(0,9)*{\scriptstyle h};
		(0,-1)*{}="A";
		(0,-9)*{}="B";
		{\ar@{=>}@/_.5pc/ "A"+(-2,-1) ; "B"+(-1,-1.5)};
		{\ar@{=}@/_.5pc/ "A"+(-2,0) ; "B"+(-2,-.7)};
		{\ar@{=>}@/^.5pc/ "A"+(2,-1) ; "B"+(1,-1.5)};
		{\ar@{=}@/^.5pc/ "A"+(2,0) ; "B"+(2,-.7)};
		{\ar@3{->} (-2,-5)*{} ; (2,-5)*+{}};
		(0,-2)*{\scriptstyle h'};
		(-15,0)*+{\bullet}="1";
		(15,0)*+{\bullet}="2";
		{\ar@/^2.75pc/^g "1";"2"};
		{\ar@/_2.75pc/_g "1";"2"};
		{\ar "1";"2"};
		(8,2)*{\scriptstyle g};
		\endxy 
		\quad = \quad
		\xy 0;/r.22pc/:
		(0,15)*{};
		(0,-15)*{};
		(0,8)*{}="A";
		(0,-8)*{}="B";
		{\ar@{=>}@/_1pc/ "A"+(-4,1) ; "B"+(-3,0)};
		{\ar@{=}@/_1pc/ "A"+(-4,1) ; "B"+(-4,1)};
		{\ar@{=>}@/^1pc/ "A"+(4,1) ; "B"+(3,0)};
		{\ar@{=}@/^1pc/ "A"+(4,1) ; "B"+(4,1)};
		{\ar@3{->} (-6,0)*{} ; (6,0)*+{}};
		(0,3)*{\scriptstyle h + h'};
		(-15,0)*+{\bullet}="1";
		(15,0)*+{\bullet}="2";
		{\ar@/^2.75pc/^g "1";"2"};
		{\ar@/_2.75pc/_g "1";"2"};
		\endxy .
		\]
		In terms of maps, both of these compositions are just:
		\[ 1 \times + \maps G \times H \times H \to G \times H. \]
		And finally, we can glue two 3-cells at the 0-cell, the object.
		We call this \define{composition at a 0-cell}, and define it to
		be addition \emph{twisted by the action of $G$}:
		\[
		\xy 0;/r.22pc/:
		(0,15)*{};
		(0,-15)*{};
		(-20,0)*+{\bullet}="1";
		(0,0)*+{\bullet}="2";
		{\ar@/^2pc/^g "1";"2"};
		{\ar@/_2pc/_g "1";"2"};
		(20,0)*+{\bullet}="3";
		{\ar@/^2pc/^{g'} "2";"3"};
		{\ar@/_2pc/_{g'} "2";"3"};
		(-10,6)*+{}="A";
		(-10,-6)*+{}="B";
		{\ar@{=>}@/_.7pc/ "A"+(-2,0) ; "B"+(-1,-.8)};
		{\ar@{=}@/_.7pc/ "A"+(-2,0) ; "B"+(-2,0)};
		{\ar@{=>}@/^.7pc/ "A"+(2,0) ; "B"+(1,-.8)};
		{\ar@{=}@/^.7pc/ "A"+(2,0) ; "B"+(2,0)};
		(10,6)*+{}="A";
		(10,-6)*+{}="B";
		{\ar@{=>}@/_.7pc/ "A"+(-2,0) ; "B"+(-1,-.8)};
		{\ar@{=}@/_.7pc/ "A"+(-2,0) ; "B"+(-2,0)};
		{\ar@{=>}@/^.7pc/ "A"+(2,0) ; "B"+(1,-.8)};
		{\ar@{=}@/^.7pc/ "A"+(2,0) ; "B"+(2,0)};
		{\ar@3{->} (-12,0)*{}; (-8,0)*{}};
		(-10,3)*{\scriptstyle h};
		{\ar@3{->} (8,0)*{}; (12,0)*{}};
		(10,3)*{\scriptstyle h'};
		\endxy 
		\quad = \quad 
		\xy 0;/r.22pc/:
		(0,15)*{};
		(0,-15)*{};
		(0,8)*{}="A";
		(0,-8)*{}="B";
		{\ar@{=>}@/_1pc/ "A"+(-4,1) ; "B"+(-3,0)};
		{\ar@{=}@/_1pc/ "A"+(-4,1) ; "B"+(-4,1)};
		{\ar@{=>}@/^1pc/ "A"+(4,1) ; "B"+(3,0)};
		{\ar@{=}@/^1pc/ "A"+(4,1) ; "B"+(4,1)};
		{\ar@3{->} (-6,0)*{} ; (6,0)*+{}};
		(0,3)*{\scriptstyle h + gh'};
		(-15,0)*+{\bullet}="1";
		(15,0)*+{\bullet}="2";
		{\ar@/^2.75pc/^{gg'} "1";"2"};
		{\ar@/_2.75pc/_{gg'} "1";"2"};
		\endxy .
		\]
		In terms of a map, $\cdot$ is just given by multiplication on
		the semidirect product:
		\[ \cdot \maps (G \ltimes H) \times (G \ltimes H) \to G \ltimes H. \]

        \item The associator and left and right unitors are trivial.

	\item For each quadruple of 1-morphisms, a specified 3-isomorphism,
		the \define{2-associator} or \define{pentagonator}:
		\[ \pi(g_1, g_2, g_3, g_4) \maps 1_{g_1 g_2 g_3 g_4} \to 1_{g_1 g_2 g_3 g_4}. \]
		given by the 4-cocycle $\pi \maps G^4 \to H$, which we think of
		as an element of $H$ because the source (and target) are
		understood to be $1_{g_1 g_2 g_3 g_4}$.

	\item The three other specified 3-isomorphisms, $\lambda$, $\rho$ and $\mu$, are trivial.

        \item The inverse of 3-morphisms is just given by negation in $H$:
		\[
                \inv_3 \left(
		\xy 
		(-10,0)*+{\bullet}="1";
		(10,0)*+{\bullet}="2";
		{\ar@/^1.65pc/^g "1";"2"};
		{\ar@/_1.65pc/_g "1";"2"};
		(0,5)*+{}="A";
		(0,-5)*+{}="B";
		{\ar@{=>}@/_.75pc/ "A"+(-1.33,0) ; "B"+(-.66,-.55)};
		{\ar@{=}@/_.75pc/ "A"+(-1.33,0) ; "B"+(-1.33,0)};
		{\ar@{=>}@/^.75pc/ "A"+(1.33,0) ; "B"+(.66,-.55)};
		{\ar@{=}@/^.75pc/ "A"+(1.33,0) ; "B"+(1.33,0)};
		{\ar@3{->} (-2,0)*{}; (2,0)*{}};
		(0,2.5)*{\scriptstyle h};
		\endxy
                \right)
                =
		\xy 
		(-10,0)*+{\bullet}="1";
		(10,0)*+{\bullet}="2";
		{\ar@/^1.65pc/^g "1";"2"};
		{\ar@/_1.65pc/_g "1";"2"};
		(0,5)*+{}="A";
		(0,-5)*+{}="B";
		{\ar@{=>}@/_.75pc/ "A"+(-1.33,0) ; "B"+(-.66,-.55)};
		{\ar@{=}@/_.75pc/ "A"+(-1.33,0) ; "B"+(-1.33,0)};
		{\ar@{=>}@/^.75pc/ "A"+(1.33,0) ; "B"+(.66,-.55)};
		{\ar@{=}@/^.75pc/ "A"+(1.33,0) ; "B"+(1.33,0)};
		{\ar@3{->} (-2,0)*{}; (2,0)*{}};
		(0,2.5)*{\scriptstyle -h};
		\endxy
		\]
          Or, as a map:
          \[ \inv_3 \maps G \times H \to G \times H \]
          sending $(g,h)$ to $(g,-h)$.
          
          The inverses for 2-morphisms are trivial, because 2-morphisms are
          trivial.
          
          Inverses for 1-morphisms are just inverses in $G$:
          \[ \inv_1 \left( \bullet \stackrel{g}{\longrightarrow} \bullet \right) =
          \bullet \stackrel{g^{-1}}{\longrightarrow} \bullet \]
          This is made into a biadjoint biequivalence with $\epsilon$ and $\eta$
          trivial, and $\Phi$ and $\Theta$ chosen to satisfy the swallowtail
          identity. There are many possible choices; here is a convenient one:
          \[ \Phi(g) = -\pi(g,g^{-1}, g, g^{-1}), \quad \Theta(g) = 0 . \]
          Again, we regard these 3-morphisms as elements of $H$, because their
          sources and targets are understood:
          \[ \Phi(g) \maps 1_g \to 1_g, \quad \Theta(g) \maps 1_{g^{-1}} \to 1_{g^{-1}} \]
            
\end{itemize}
A \define{slim Lie 3-group} is one of this form. It remains to check
that it is, in fact, a Lie 3-group. We claim:
\begin{prop} \label{prop:Lie3group}
	$\Brane_\pi(G,H)$ is a Lie 3-group: a smooth tricategory with one
	object and all morphisms, 2-morphisms and 3-morphisms weakly invertible.
\end{prop}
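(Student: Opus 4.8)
The plan is to go through Definition \ref{def:smoothtricat} clause by clause for $\Brane_\pi(G,H)$, then check weak invertibility of morphisms, $2$-morphisms and $3$-morphisms. The first observation that organizes everything is that \emph{smoothness is automatic}: every structure map is assembled from multiplication in $G$, multiplication in the semidirect product $G \ltimes H$, addition and negation in $H$, and the smooth cochain $\pi\maps G^4\to H$, all of which are smooth by hypothesis. So the whole content is the \emph{algebraic} content of the tricategory axioms, and — crucially — all coherence data except the pentagonator $\pi$ is trivial, so almost every axiom degenerates to a triviality about the abelian group $H$ or the group $G\ltimes H$.

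First I would dispose of the easy structural clauses. The bicategory $\underline{\Mor}\,\Brane_\pi(G,H)$ has manifold of objects $G$, manifold of morphisms $G$ (the identity $2$-morphisms $1_g$), and manifold of $2$-morphisms $G\times H$, with vertical composition (composition at a $2$-cell) and horizontal composition (composition at a $1$-cell) both equal to $+$ in $H$; since $H$ is abelian the interchange law holds and its associator and unitors are genuinely identities, so it is a smooth bicategory. Composition at a $0$-cell is multiplication on $G$ at the level of $1$-morphisms and multiplication on $G\ltimes H$ at the level of $2$- and $3$-morphisms; that this is a strict functor of the pullback bicategories is exactly associativity of $G\ltimes H$ together with linearity of the $G$-action on $H$ and commutativity of $H$. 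Since $(g_1g_2)g_3 = g_1(g_2g_3)$ and $1g = g = g1$ in $G$, the associator $a$ and unitors $l,r$ are well-defined \emph{identity} $2$-morphisms, hence trivially adjoint equivalences; $\pi$, being $H$-valued, is an invertible modification with vacuous modification axiom; and $\lambda,\rho,\mu$ are trivial invertible modifications.

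The heart of the proof is the \emph{pentagonator identity}. With the associator, unitors and naturality squares all trivial, every edge and every square in the associahedron diagram of Definition \ref{def:smoothtricat} is an identity, so the only faces carrying data are the six pentagons, each labelled by a pentagonator; these correspond exactly to the six ways of solving a four-fold associativity problem inside $g_1 g_2 g_3 g_4 g_5$. Reading off each pentagon and tracking the orientation in which its boundary is traversed, the face holding $g_5$ fixed contributes $\mp\,\pi(g_1,g_2,g_3,g_4)$, the face holding $g_1$ fixed contributes (through the $0$-cell twist) $\pm\, g_1\,\pi(g_2,g_3,g_4,g_5)$, and the four "merge" faces contribute $\pm\,\pi(g_1g_2,g_3,g_4,g_5)$, $\pm\,\pi(g_1,g_2g_3,g_4,g_5)$, $\pm\,\pi(g_1,g_2,g_3g_4,g_5)$, $\pm\,\pi(g_1,g_2,g_3,g_4g_5)$. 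Since composition of $3$-morphisms along any cell is $+$ in $H$ and $\inv_3$ is negation, equating the back composite with the front composite of the associahedron is precisely the vanishing of
\[ d\pi(g_1,\dots,g_5) = g_1\pi(g_2,g_3,g_4,g_5) - \pi(g_1g_2,g_3,g_4,g_5) + \pi(g_1,g_2g_3,g_4,g_5) - \pi(g_1,g_2,g_3g_4,g_5) + \pi(g_1,g_2,g_3,g_4g_5) - \pi(g_1,g_2,g_3,g_4) , \]
so the pentagonator identity holds if and only if $\pi$ is a $4$-cocycle, which is our hypothesis. I expect the real obstacle to be precisely this sign bookkeeping: matching the orientation of each of the six pentagonal facets of the $3$-dimensional associahedron, and the forward-versus-backward direction in which its boundary pentagon is read, to the alternating signs of $d\pi$, and confirming that the single $0$-cell composite on the $g_1$-face yields exactly the twisted term $g_1\pi(g_2,g_3,g_4,g_5)$.

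Finally I would dispatch the triangulator axioms and weak invertibility. With $\lambda,\rho,\mu$ trivial, the triangulator axioms reduce to the statement that $\pi$ vanishes whenever an argument is $1\in G$, which holds because $\pi$ was chosen normalized. For invertibility: $3$-morphisms are strictly invertible via $\inv_3(g,h) = (g,-h)$, a trivial check; $2$-morphisms are all identities, hence (adjoint) equivalences with all associated data trivial; and a $1$-morphism $g$ has the strict inverse $g^{-1}\in G$, made a biadjoint biequivalence by taking $\epsilon,\eta$ trivial, $\Theta(g) = 0$, and $\Phi(g) = -\pi(g,g^{-1},g,g^{-1})$. The one remaining nontrivial point is the \emph{swallowtail identity}, which after stripping away the trivial structure becomes an equation in $H$ among $\Phi(g)$, $\Theta(g)$ and certain values of $\pi$ (coming from the faces $\pi_1,\rho_1$ of that diagram); specializing $d\pi = 0$ to arguments drawn from $g$ and $g^{-1}$, together with normalization, verifies that the displayed choice of $\Phi$ and $\Theta$ works. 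This is a second, much smaller, sign-bookkeeping computation, entirely analogous to the pentagonator one, and it completes the verification that $\Brane_\pi(G,H)$ is a Lie $3$-group.
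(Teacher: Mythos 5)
Your proposal is correct and follows essentially the same route as the paper's proof: the pentagonator identity, with all other coherence data trivial, reduces via the semidirect-product composition at a $0$-cell to the $4$-cocycle condition $d\pi = 0$; normalization of $\pi$ handles the triangulator axioms; and the strict inverses together with the stated choices of $\Phi$ and $\Theta$ satisfy the swallowtail identity by the same one-line cancellation. The only difference is presentational — you frame the six pentagonal faces as the alternating terms of $d\pi$, while the paper writes the same content as an equality of two three-term sums.
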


\begin{proof}
	As noted above, the triangulators $\lambda$, $\rho$ and
	$\mu$ are trivial. The axioms they satisfy are automatic because $\pi$
	is normalized.

	So to check that $\Brane_\pi(G,H)$ is a tricategory, it remains to
	check that $\pi$ satisfies the pentagonator identity. Since the 3-cells
	of the pentagonator identity commute (they represent elements of $H$),
	and all the faces not involving a $\pi$ are trivial, the first half reads:
	\[ \pi(g_1,g_2,g_3,g_4) \cdot 0_{g_5} + \pi(g_1 g_2, g_3, g_4, g_5) + \pi(g_1,g_2,g_3 g_4,g_5) . \]
	Here, we write $0_{g_5}$ to denote the 3-morphism which is the identity on
	the identity of the 1-morphism $g_5$. We do not need to be worried
	about order of terms, since composition of 3-morphisms is addition in
	$H$. The second half of the pentagonator identity reads:
	\[ 0_{g_1} \cdot \pi(g_2,g_3,g_4,g_5) + \pi(g_1,g_2 g_3, g_4, g_5) + \pi(g_1,g_2,g_3,g_4 g_5). \]
	Here, we write $0_{g_1}$ for to denote the 3-morphism which is the
	identity on the identity of the 1-morphism $g_1$. Applying the
	definition of $\cdot$ as the semidirect product, we see the equality of
	the first half with the second half is just the cocycle condition on
	$\pi$:
	\begin{eqnarray*} 
		\pi(g_1,g_2,g_3,g_4) + \pi(g_1 g_2, g_3, g_4, g_5) + \pi(g_1,g_2,g_3 g_4,g_5)   \\
		= g_1 \pi(g_2,g_3,g_4,g_5) + \pi(g_1,g_2 g_3, g_4, g_5) + \pi(g_1,g_2,g_3,g_4 g_5) .
	\end{eqnarray*} 

	So, $\Brane_\pi(G,H)$ is a tricategory. It is smooth because everything in
        sight is smooth: $G$, $H$, and the map $\pi \maps G^4 \to H$. And it is a Lie
        3-group: the 1-morphisms in $G$, the trivial 2-morphisms, and the 3-morphisms
        in $H$ are all strictly invertible, and thus weakly invertible. The swallowtail
        identity reduces to:
        \[ \Phi(g) \cdot 0_{g^{-1}} + \pi(g,g^{-1},g,g^{-1}) - 0_g \cdot \Theta(g) = 0 \]
        Upon substituting the values of $\Phi$ and $\Theta$ we chose above, this becomes:
        \[ -\pi(g,g^{-1},g,g^{-1}) + \pi(g,g^{-1},g,g^{-1}) + 0 = 0 \]
        which is indeed true.
\end{proof}

We can say something a bit stronger about $\Brane_\pi(G,H)$, if we let $\pi$ be any
normalized $H$-valued 4-cochain, rather than requiring it to be a cocycle. In this case,
$\Brane_{\pi}(G,H)$ is a Lie 3-group if and only if $\pi$ is a 4-cocycle, because
$\pi$ satisfies the pentagon identity if and only if it is a cocycle.

\section{Supergeometry} \label{sec:supergeometry}

We would now like to generalize our work from Lie algebras and Lie groups to Lie
superalgebras and supergroups. Of course, this means that we need a way to talk about
Lie supergroups, their underlying supermanifolds, and the maps between
supermanifolds. This task is made easier because we do not need the full machinery of
supermanifold theory. Our key examples of supergroups will be exponential, meaning
that the exponential map
\[ \exp \maps \g \to G \]
is a diffeomorphism, when $G$ is a supergroup and $\g$ is its Lie superalgebra. So,
we only need to work with supermanifolds that are diffeomorphic to super vector
spaces.

Nonetheless, we will give a lightning review of supermanifold theory from the
perspective that suits us best, which could loosely be called the `functor of points'
approach. A more leisurely and carefully motivated account is in our previous paper
\cite{Huerta:susy3}. That account and this one are based on the work of Sachse
\cite{Sachse} and Balduzzi, Carmeli and Fioresi \cite{BCF} developing the functor of
points for supermanifolds. Their work goes back to ideas of Schwarz \cite{Schwarz}
and Voronov \cite{Voronov}.

To begin, a \define{Grassmann algebra} is a finite-dimensional exterior algebra
\[ A = \Lambda \R^n, \]
equipped with the grading:
\[ A_0 = \Lambda^0 \R^n \oplus \Lambda^2 \R^n \oplus \cdots, \quad A_1 = \Lambda^1
\R^n \oplus \Lambda^3 \R^n \oplus \cdots . \]
Let us write $\GrAlg$ for the category with Grassmann algebras as objects and
grade-preserving homomorphisms as morphisms. We will define a supermanifold to be a
functor from the category of Grassmann algebras to smooth manifolds
\[ M \maps \GrAlg \to \Man \]
equipped with some extra structure. Let us write $M_A$ for the value of $M$ at the
Grassmann algebra $A$. We call this manifold the \define{$A$-points} of $M$. For $f \maps
A \to B$ a homomorphism of Grassmann algebras $A$ and $B$, we write the induced
smooth map as $M_f \maps M_A \to M_B$.

As we already remarked, our most important example of a supermanifold is a super
vector space. Indeed, given a finite-dimensional super vector space $V$, define the
\define{ supermanifold associated to $V$}, or just the \define{supermanifold $V$} to
be the functor:
\[ V \maps \GrAlg \to \Man \]
which takes:
\begin{itemize}
	\item each Grassmann algebra $A$ to the vector space:
		\[ V_A = (A \tensor V)_0 = A_0 \tensor V_0 \, \oplus \, A_1 \tensor V_1 \]
		regarded as a manifold in the usual way;
	\item each homomorphism $f \maps A \to B$ of Grassmann algebras to the
		linear map $V_f \maps V_A \to V_B$ that is the identity on
		$V$ and $f$ on $A$:
		\[ V_f = (f \tensor 1)_0 \maps (A \tensor V)_0 \to (B \tensor V)_0 . \]
		This map, being linear, is also smooth.
\end{itemize}

We also need to know how to smoothly map one super vector space to another. First
note that $V_A$ is more than a mere vector space; it is an $A_0$-module. Given two
finite-dimensional super vector spaces $V$ and $W$, we define a \define{smooth map
between super vector spaces}:
\[ \varphi \maps V \to W , \]
to be a natural transformation between the supermanifolds $V$ and $W$ such that
the derivative
\[ (\varphi_A)_* \maps T_x V_A \to T_{\varphi_A(x)} W_A \]
is $A_0$-linear at each $A$-point $x \in V_A$, where the $A_0$-module structure
on each tangent space comes from the canonical identification of a vector space
with its tangent space:
\[ T_x V_A \iso V_A, \quad T_{\varphi(x)} W_A \iso W_A . \]
Note that each component $\varphi_A \maps V_A \to W_A$ is smooth in the
ordinary sense, by virtue of living in the category of smooth
manifolds. We say that a smooth map $\varphi_A \maps V_A \to W_A$ whose
derivative is $A_0$-linear at each point is \define{$A_0$-smooth} for short.

These definitions, of the supermanifold associated to a super vector space and of
smooth maps between super vector spaces, are the most important for us.  Nonetheless,
we now sketch how to define a general supermanifold, $M$. Since $M$ will be locally
isomorphic to a super vector space $V$, it helps to have local pieces of $V$ to play
the same role that open subsets of $\R^n$ play for ordinary manifolds.  So, fix a
super vector space $V$, and let $U \subseteq V_0$ be open. The \define{superdomain}
over $U$ is the functor:
\[ \mathcal{U} \maps \GrAlg \to \Man \]
that takes each Grassmann algebra $A$ to the following open subset of $V_A$:
\[ \mathcal{U}_A = V_{\epsilon_A}^{-1}(U) \]
where $\epsilon_A \maps A \to \R$ is the projection of the Grassmann algebra
$A$ that kills all nilpotent elements. We say that $\mathcal{U}$ is a
\define{superdomain in $V$}, and write $\mathcal{U} \subseteq V$.

If $\mathcal{U} \subseteq V$ and $\mathcal{U}' \subseteq W$ are two
superdomains in super vector spaces $V$ and $W$, a \define{smooth map of
superdomains} is a natural transformation:
\[ \varphi \maps \mathcal{U} \to \mathcal{U}' \]
such that for each Grassmann algebra $A$, the component on
$A$-points is smooth:
\[ \varphi_{A} \maps \mathcal{U}_A \to {\mathcal{U}'}_A . \]
and the derivative:
\[ (\varphi_A)_* \maps T_x \mathcal{U}_A \to T_{\varphi_A(x)} {\mathcal{U}'}_A \]
is $A_0$-linear at each $A$-point $x \in \mathcal{U}_A$, where the  
$A_0$-module structure on each tangent space comes from the canonical
identification with the ambient vector spaces:
\[ T_x \mathcal{U}_A \iso V_A, \quad T_{\varphi(x)} \mathcal{U}'_A \iso W_A . \]
Again, we say that a smooth map $\varphi_{A} \maps \mathcal{U}_A \to
{\mathcal{U}'}_A$ whose derivative is $A_0$-linear at each point is
\define{$A_0$-smooth} for short.

At long last, a \define{supermanifold} is a functor
\[ M \maps \GrAlg \to \Man \]
equipped with an atlas 
\[ (\mathcal{U}_\alpha, \varphi_\alpha \maps \mathcal{U} \to M) , \]
where each $\mathcal{U}_\alpha$ is a superdomain, each $\varphi_\alpha$ is a
natural transformation, and one can define transition functions that are smooth
maps of superdomains. 

Finally, a \define{smooth map of supermanifolds} is a natural
transformation:
\[ \psi \maps M \to N \]
which induces smooth maps between the superdomains in the atlases.
Equivalently, each component 
\[ \varphi_A \maps M_A \to N_A \] 
is \define{$A_0$-smooth}: it is smooth and its derivative
\[ (\varphi_A)_* \maps T_x M_A \to T_{\varphi_A(x)} N_A \]
is $A_0$-linear at each $A$-point $x \in M_A$, where the $A_0$-module structure
on each tangent space comes from the superdomains in the atlases. Thus, there
is a category $\SuperMan$ of supermanifolds. See Sachse \cite{Sachse} for more
details. 

Finally, note that there is a supermanifold:
\[ 1 \maps \GrAlg \to \Man , \]
which takes each Grassmann algebra to the one-point manifold. We call this the
\define{one-point supermanifold}, and note that it is the supermanifold
associated to the super vector space $\R^{0|0}$. The one-point supermanifold is
the terminal object in the category of supermanifolds.

\subsection{Supergroups from nilpotent Lie superalgebras} \label{sec:supergroups}

We now describe a procedure to integrate a nilpotent Lie superalgebra to a Lie
supergroup. This is a partial generalization of Lie's Third Theorem, which
describes how any Lie algebra can be integrated to a Lie group. In fact, the
full theorem generalizes to Lie supergroups \cite{Tuynman}, but we do not need
it here.

Recall from Section \ref{sec:Lie-n-superalgebras} that a \define{Lie superalgebra} $\g$
is a Lie algebra in the category of super vector spaces. More concretely, it is
a super vector space $\g = \g_0 \oplus \g_1$, equipped with a
graded-antisymmetric bracket:
\[ [-,-] \maps \Lambda^2 \g \to \g , \]
which satisfies the Jacobi identity up to signs:
\[ [X, [Y,Z]] = [ [X,Y], Z] + (-1)^{|X||Y|} [Y, [X, Z]]. \]
for all homogeneous $X, Y, Z \in \g$.  A Lie superalgebra $\n$ is called
\define{k-step nilpotent} if any $k$ nested brackets vanish, and it is called
\define{nilpotent} if it is $k$-step nilpotent for some $k$. Nilpotent Lie
superalgebras can be integrated to a unique supergroup $N$ defined on the same
underlying super vector space $\n$.

A \define{Lie supergroup}, or \define{supergroup}, is a group object in the
category of supermanifolds. That is, it is a supermanifold $G$ equipped with
the following maps of supermanifolds:
\begin{itemize}
	\item \define{multiplication}, $m \maps G \times G \to G$;
	\item \define{inverse}, $\inv \maps G \to G$;
	\item \define{identity}, $\id \maps 1 \to G$, where $1$ is the one-point
		supermanifold;
\end{itemize}
such that the following diagrams commute, encoding the usual group axioms:
\begin{itemize}
	\item the associative law:
	\[ \vcenter{
	\xymatrix{ &   G \times G \times G \ar[dr]^{1 \times m}
	   \ar[dl]_{m \times 1} \\
	 G \times G \ar[dr]_{m}
	&&  G \times G \ar[dl]^{m}  \\
	&  G }}
	\]
	\item the right and left unit laws:
	\[ \vcenter{
	\xymatrix{
	 I \times G \ar[r]^{\id \times 1} \ar[dr]
	& G \times G \ar[d]_{m}
	& G \times I \ar[l]_{1 \times \id} \ar[dl] \\
	& G }}
	\]
	\item the right and left inverse laws:
	\[
	\xy (-12,10)*+{G \times G}="TL"; (12,10)*+{G \times G}="TR";
	(-18,0)*+{G}="ML"; (18,0)*+{G}="MR"; (0,-10)*+{1}="B";
	     {\ar_{} "ML";"B"};
	     {\ar^{\Delta} "ML";"TL"};
	     {\ar_{\id} "B";"MR"};
	     {\ar^{m} "TR";"MR"};
	     {\ar^{1 \times \inv } "TL";"TR"};
	\endxy
	\qquad \qquad \xy (-12,10)*+{G \times G}="TL"; (12,10)*+{G \times
	G}="TR"; (-18,0)*+{G}="ML"; (18,0)*+{G}="MR"; (0,-10)*+{1}="B";
	     {\ar_{} "ML";"B"};
	     {\ar^{\Delta} "ML";"TL"};
	     {\ar_{\id} "B";"MR"};
	     {\ar^{m} "TR";"MR"};
	     {\ar^{\inv \times 1} "TL";"TR"};
	\endxy
	\]
\end{itemize}
where $\Delta \maps G \to G \times G$ is the diagonal map. In addition, a
supergroup is \define{abelian} if the following diagram commutes:
\[ \xymatrix{ 
G \times G \ar[r]^\tau \ar[dr]_m & G \times G \ar[d]^m \\
                                 & G
}
\]
where $\tau \maps G \times G \to G \times G$ is the \define{twist map}. Using
$A$-points, it is defined to be:
\[ \tau_A(x,y) = (y,x), \]
for $(x,y) \in G_A \times G_A$.

Examples of supergroups arise easily from Lie groups. We can regard any
ordinary manifold as a supermanifold, and so any Lie group $G$ is also a
supergroup. In this way, any classical Lie group, such as $\SO(n)$, $\SU(n)$
and $\Sp(n)$, becomes a supergroup.

To obtain more interesting examples, we will integrate a nilpotent Lie
superalgebra, $\n$ to a supergroup $N$. For any Grassmann algebra $A$, the bracket 
\[ [-,-] \maps \Lambda^2 \n \to \n \]
induces an $A_0$-linear map between the $A$-points:
\[ [-,-]_A \maps \Lambda^2 \n_A \to \n_A, \]
where $\Lambda^2 \n_A$ denotes the exterior square of the $A_0$-module $\n_A$.
Thus $[-,-]_A$ is antisymmetric, and it easy to check that it makes $\n_A$ into
a Lie algebra which is also nilpotent.

On each such $A_0$-module $\n_A$, we can thus define a Lie group $N_A$ where
the multplication is given by the Baker--Campbell--Hausdorff formula, inversion
by negation, and the identity is $0$. Because we want to write the group $N_A$
multiplicatively, we write $\exp_A \maps n_A \to N_A$ for the identity map, and
then define the multiplication, inverse and identity:
\[ m_A \maps N_A \times N_A \to N_A, \quad \inv_A \maps N_A \to N_A, \quad \id_A \maps 1_A \to N_A, \]
as follows:
\[ m_A(\exp_A(X), \exp_A(Y)) = \exp_A(X) \exp_A(Y) = \exp_A(X + Y + \half[X,Y]_A + \cdots ) \]
\[ \inv_A(\exp_A(X)) = \exp_A(X)^{-1} = \exp_A(-X) , \]
\[ \id_A(1) = 1 = \exp_A(0),  \]
for any $A$-points $X, Y \in \n_A$, where the first 1 in the last equation
refers to the single element of $1_A$. But it is clear that all of these maps
are natural in $A$. Furthermore, they are all $A_0$-smooth, because as
polynomials with coefficients in $A_0$, they are smooth with derivatives that
are $A_0$-linear.  They thus define smooth maps of supermanifolds:
\[ m \maps N \times N \to N, \quad \inv \maps N \to N, \quad \id \maps 1 \to N, \]
where $N$ is the supermanifold $\n$. And because each of the $N_A$ is a group,
$N$ is a supergroup. We have thus proved:

\begin{prop} \label{prop:nilpotentsupergroup}
	Let $\n$ be a nilpotent Lie superalgebra. Then there is a supergroup
	$N$ defined on the supermanifold $\n$, obtained by integrating the
	nilpotent Lie algebra $\n_A$ with the Baker--Campbell--Hausdorff
	formula for all Grassmann algebras $A$. More precisely, we define the maps:
	\[ m \maps N \times N \to N, \quad \inv \maps N \to N, \quad \id \maps 1 \to N, \]
	by defining them on $A$-points as follows:
	\[ m_A(\exp_A(X), \exp_A(Y)) = \exp_A(Z(X,Y)), \]
	\[ \inv_A(\exp_A(X)) = \exp_A(-X), \]
	\[ \id_A(1) = \exp_A(0), \]
	where 
	\[ \exp \maps \n \to N \]
	is the identity map of supermanifolds, and:
	\[ Z(X,Y) = X + Y + \half[X,Y]_A + \cdots \]
	denotes the Baker--Campbell--Hausdorff series on $\n_A$, which
	terminates because $\n_A$ is nilpotent.
\end{prop}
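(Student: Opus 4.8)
The plan is to prove the proposition one Grassmann algebra at a time and then glue. Fix a Grassmann algebra $A$. The first step is to check that $\n_A = (A \tensor \n)_0$ is an ordinary nilpotent Lie algebra over $A_0$. The bracket $[-,-] \maps \Lambda^2 \n \to \n$ extends to $A \tensor \n$ by $[a \tensor X, b \tensor Y] = (-1)^{|X||b|}(ab) \tensor [X,Y]$ with the appropriate Koszul sign; one checks this is $A_0$-bilinear on the even part $\n_A$, and — this is the key point — once the signs are absorbed, the graded antisymmetry and graded Jacobi identity of $\n$ become \emph{ordinary} antisymmetry and the \emph{ordinary} Jacobi identity on $\n_A$. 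This is the standard Grassmann-envelope (``even rules'') mechanism: it is exactly here that parity is traded for the auxiliary grading carried by $A$. Since the structure constants of $\n_A$ are those of $\n$ tensored into $A_0$, any $k$-fold nested bracket that vanishes in $\n$ also vanishes in $\n_A$, so $\n_A$ is $k$-step nilpotent whenever $\n$ is.

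Next I would integrate each $\n_A$. Because $\n_A$ is nilpotent, the Baker--Campbell--Hausdorff series $Z(X,Y) = X + Y + \half[X,Y]_A + \cdots$ terminates after finitely many terms, so it defines a genuine \emph{polynomial} map $m_A \maps \n_A \times \n_A \to \n_A$; the usual formal identities for BCH show that $m_A$, together with $\inv_A(X) = -X$ and $\id_A(1) = 0$, makes the manifold $\n_A$ into a Lie group $N_A$ (associativity is associativity of BCH, and the unit and inverse laws follow from $Z(X,0) = Z(0,X) = X$ and $Z(X,-X) = 0$). Each of $m_A$, $\inv_A$, $\id_A$ is assembled from $A_0$-module operations — addition, scalar multiplication by rationals, and the $A_0$-bilinear bracket — so each is smooth with $A_0$-linear derivative at every point, i.e.\ $A_0$-smooth. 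Then I would check naturality in $A$: a grade-preserving homomorphism $f \maps A \to B$ induces $\n_f = (f \tensor 1)_0 \maps \n_A \to \n_B$, and since $f$ is an algebra homomorphism this map respects the bracket, hence the entire BCH series, so it intertwines $m_A$ with $m_B$, $\inv_A$ with $\inv_B$, and $\id_A$ with $\id_B$. Using that products and the one-point supermanifold are computed on $A$-points, $(N \times N)_A = N_A \times N_A$ and $1_A = \ast$, this is exactly the statement that $m$, $\inv$, $\id$ are smooth maps of supermanifolds $N \times N \to N$, $N \to N$, $1 \to N$.

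Finally, to see that these maps satisfy the supergroup axioms, I would use that a diagram of supermanifolds commutes if and only if it commutes on $A$-points for every Grassmann algebra $A$ (a natural transformation into a supermanifold is determined by its components). The associativity diagram, the two unit triangles, and the two inverse diagrams all commute on $A$-points because each $N_A$ is an honest group; hence they commute as diagrams of supermanifolds, and $N$ — the supermanifold $\n$ — is a supergroup with the stated formulas on $A$-points. The only places that call for genuine care rather than bookkeeping are the Grassmann-envelope step, where one must get every Koszul sign right so that ``graded'' becomes ``ungraded'', and the verification that the truncated BCH polynomial is $A_0$-smooth; everything else is the routine principle that a construction which is functorial in $A$ and pointwise a group is a supergroup.
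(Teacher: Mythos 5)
Your proposal is correct and follows essentially the same route as the paper: pass to the $A_0$-Lie algebra $\n_A$ of $A$-points, integrate each one with the (terminating) Baker--Campbell--Hausdorff series, observe that the resulting multiplication, inverse and identity are natural in $A$ and $A_0$-smooth because they are polynomials with $A_0$-coefficients, and conclude that the group axioms hold as diagrams of supermanifolds since they hold on $A$-points. The extra detail you supply on the Koszul signs in the Grassmann-envelope step and on the BCH identities is simply an expansion of what the paper states more briefly.
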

\noindent
Experience with ordinary Lie theory suggests that, in general, there will be
more than one supergroup which has Lie superalgebra $\n$. To distinguish the
one above, we call $N$ the \define{exponential supergroup} of $\n$. 

\section{3-supergroups from supergroup cohomology} \label{sec:3supergroups}

We saw in Section \ref{sec:lie3groups} that Lie 3-groups can be defined from a
4-cocycle in Lie group cohomology.  We now generalize this to supergroups. The most
significant barrier is that we now work internally to the category of supermanifolds
instead of the much more familiar category of smooth manifolds.  Our task is to show
that this change of categories does not present a problem.  The main obstacle is that
the category of supermanifolds is not a concrete category: morphisms are determined
not by their value on the underlying set of a supermanifold, but by their value on
$A$-points for all Grassmann algebras $A$.

The most common approach is to define morphisms without reference to elements, and to
define equations between morphisms using commutative diagrams. As an alternative to
commutative diagrams, for supermanifolds, one can use $A$-points to define morphisms
and specify equations between them. This tends to make equations look friendlier,
because they look like equations between functions.  We shall use this approach.

First, let us define the cohomology of a supergroup $G$ with coefficients in an
abelian supergroup $H$, on which $G$ \define{acts by automorphism}. This means
that we have a morphism of supermanifolds:
\[ \alpha \maps G \times H \to H, \]
which, for any Grassmann algebra $A$, induces an action of the group
$G_A$ on the abelian group $H_A$:
\[ \alpha_A \maps G_A \times H_A \to H_A. \]
For this action to be by automorphism, we require:
\[ \alpha_A(g)(h + h') = \alpha_A(g)(h) + \alpha_A(g)(h'), \]
for all $A$-points $g \in G_A$ and $h, h' \in H_A$.

We define supergroup cohomology using \define{the supergroup cochain complex},
$C^\bullet(G,H)$, which at level $p$ just consists of the set of maps
from $G^p$ to $H$ as supermanifolds:
\[ C^p(G,H) = \left\{ f \maps G^p \to H \right\} . \]
Addition on $H$ makes $C^p(G,H)$ into an abelian group for all $p$.  The
differential is given by the usual formula, but using $A$-points:
\begin{eqnarray*} 
	df_A(g_1, \dots, g_{p+1}) & = & g_1 f_A(g_2, \dots, g_{p+1}) \\
	                          &   & + \sum_{i=1}^p (-1)^i f_A(g_1, \dots, g_i g_{i+1}, \dots, g_{p+1}) \\
				  &   & + (-1)^{p+1} f_A(g_1, \dots, g_p) , \\
\end{eqnarray*}
where $g_1, \dots, g_{p+1} \in G_A$ and the action of $g_1$ is given by
$\alpha_A$. Noting that $f_A$, $\alpha_A$, multiplication and $+$ are all:
\begin{itemize}
	\item natural in $A$;
	\item $A_0$-smooth: smooth with derivatives which are $A_0$-linear;
\end{itemize}
we see that $df_A$ is:
\begin{itemize}
	\item natural in $A$; 
	\item $A_0$-smooth: smooth with a derivative which is $A_0$-linear;
\end{itemize}
so it indeed defines a map of supermanifolds:
\[ df \maps G^{p+1} \to H. \]
Furthermore, it is immediate that:
\[ d^2 f_A = 0 \]
for all $A$, and thus
\[ d^2 f = 0. \]
So $C^\bullet(G,H)$ is truly a cochain complex. Its cohomology $H^\bullet(G,H)$
is the \define{supergroup cohomology of $G$ with coefficients in $H$}. Of
course, if $df = 0$, $f$ is called a \define{cocycle}, and $f$ is \define{normalized} if
\[ f_A(g_1, \dots, g_p) = 0 \]
for any Grassmann algebra $A$, whenever one of the $A$-points $g_1, \dots, g_p$
is 1. When $H = \R$, we omit reference to it, and write $C^\bullet(G,\R)$ as
$C^\bullet(G)$.

We can generalize our construction of Lie 3-groups to `3-supergroups'.  A
\define{super tricategory} $T$ has
\begin{itemize}
	\item a \define{supermanifold of objects} $T_0$;
	\item a \define{supermanifold of morphisms} $T_1$;
	\item a \define{supermanifold of 2-morphisms} $T_2$;
	\item a \define{supermanifold of 3-morphisms} $T_3$;
\end{itemize}
equipped with maps of supermanifolds as described in Definition
\ref{def:smoothtricat}: source, target, identity-assigning, composition at 0-cells,
1-cells and 2-cells, associator and left and right unitors, pentagonator and
triangulators all maps of supermanifolds, and satisfying the same axioms as a smooth
tricategory. We express the pentagonator identity in terms of $A$-points: the
following equation holds:

\newpage
\thispagestyle{empty}

\begin{figure}[H]
    \begin{center}
      \begin{tikzpicture}[line join=round]
        \filldraw[white,fill=red,fill opacity=0.1](-4.306,-3.532)--(-2.391,-.901)--(-2.391,3.949)--(-5.127,.19)--(-5.127,-2.581)--cycle;
        \filldraw[white,fill=red,fill opacity=0.1](-4.306,-3.532)--(-2.391,-.901)--(2.872,-1.858)--(4.306,-3.396)--(3.212,-4.9)--cycle;
        \filldraw[white,fill=red,fill opacity=0.1](2.872,-1.858)--(2.872,5.07)--(-.135,5.617)--(-2.391,3.949)--(-2.391,-.901)--cycle;
        \filldraw[white,fill=green,fill opacity=0.1](4.306,-3.396)--(4.306,3.532)--(2.872,5.07)--(2.872,-1.858)--cycle;
        \begin{scope}[font=\fontsize{8}{8}\selectfont]
          \node (A) at (-2.391,3.949) {$(f(g(hk)))p$};
          \node (B) at (-5.127,.19) {$(f((gh)k))p$}
	  edge [->, double] node [l, above left] {$(1_{f}  a)1_{p}$} (A);
          \node (C) at (-5.127,-2.581) {$((f(gh))k)p$}
	  edge [->, double] node [l, left] {$a  1_{p}$} (B);
          \node (D) at (-4.306,-3.532) {$(((fg)h)k)p$}
	  edge [->, double] node [l, left] {$(a  1_{k})  1_{p}$} (C);
          \node (E) at (3.212,-4.9) {$((fg)h)(kp)$}
            edge [<-, double] node [l, below] {$a$} (D);
          \node (F) at (4.306,-3.396) {$(fg)(h(kp))$}
            edge [<-, double] node [l, below right] {$a$} (E);
          \node (G) at (4.306,3.532) {$f(g(h(kp)))$}
            edge [<-, double] node [l, right] {$a$} (F);
          \node (H) at (2.872,5.07) {$f(g((hk)p))$}
            edge [->, double] node [l, above right] {$1_{f}(1_{g}a)$} (G);
          \node (I) at (-.135,5.617) {$f((g(hk))p)$}
            edge [->, double] node [l, above] {$1_{f}a$} (H)
            edge [<-, double] node [l, above left] {$a$} (A);
          \node (M) at (-2.391,-.901) {$((fg)(hk))p$}
            edge [<-, double] node [l, right] {$a1_{p}$} (D)
            edge [->, double] node [l, right] {$a1_{p}$} (A);
          \node (N) at (2.872,-1.858) {$(fg((hk)p))$}
            edge [<-, double] node [l, above] {$a$} (M)
            edge [->, double] node [l, left] {$a$} (H)
            edge [->, double] node [l, left] {$(1_{f}1_{g})a$} (F);
	    \node at (-4,-.5) {$\Rrightarrow \pi \cdot 1_{_{1_{p}}}$};
          \node at (0,-3) {\tikz\node [rotate=-90] {$\Rrightarrow$};};
          \node at (0.5,-3) {$\pi$};
          \node at (0,2) {\tikz\node [rotate=-45] {$\Rrightarrow$};};
          \node at (0.5,2) {$\pi$};
          \node at (3.5,1) {$\cong$};
        \end{scope}
      \end{tikzpicture}
      \[ = \]
      \begin{tikzpicture}[line join=round]
        \filldraw[white,fill=green,fill opacity=0.1](-2.872,1.858)--(-.135,5.617)--(-2.391,3.949)--(-5.127,.19)--cycle;
        \filldraw[white,fill=red,fill opacity=0.1](3.212,-4.9)--(4.306,-3.396)--(4.306,3.532)--(2.391,.901)--(2.391,-3.949)--cycle;
        \filldraw[white,fill=green,fill opacity=0.1](-4.306,-3.532)--(3.212,-4.9)--(2.391,-3.949)--(-5.127,-2.581)--cycle;
        \filldraw[white,fill=red,fill opacity=0.1](-2.872,1.858)--(2.391,.901)--(4.306,3.532)--(2.872,5.07)--(-.135,5.617)--cycle;
        \filldraw[white,fill=red,fill opacity=0.1](-5.127,-2.581)--(-5.127,.19)--(-2.872,1.858)--(2.391,.901)--(2.391,-3.949)--cycle;
        \begin{scope}[font=\fontsize{8}{8}\selectfont]
          \node (A) at (-2.391,3.949) {$(f(g(h k)))p$};
          \node (B) at (-5.127,.19) {$(f((gh)k))p$}
            edge [->, double] node [l, above left] {$(1_{f}a)1_{p}$} (A);
          \node (C) at (-5.127,-2.581) {$((f(gh))k)p$}
            edge [->, double] node [l, left] {$a1_{p}$} (B);
          \node (D) at (-4.306,-3.532) {$(((fg)h)k)p$}
            edge [->, double] node [l, left] {$(a1_{k})1_{p}$} (C);
          \node (E) at (3.212,-4.9) {$((fg)h)(kp)$}
            edge [<-, double] node [l, below] {$a$} (D);
          \node (F) at (4.306,-3.396) {$(fg)(h(kp))$}
            edge [<-, double] node [l, below right] {$a$} (E);
          \node (G) at (4.306,3.532) {$f(g(h(kp)))$}
            edge [<-, double] node [l, right] {$a$} (F);
          \node (H) at (2.872,5.07) {$f(g((hk)p))$}
            edge [->, double] node [l, above right] {$1_{f}(1_{g}a)$} (G);
          \node (I) at (-.135,5.617) {$f((g(hk))p)$}
            edge [->, double] node [l, above] {$1_{f}a$} (H)
            edge [<-, double] node [l, above left] {$a$} (A);
          \node (J) at (-2.872,1.858) {$f(((gh)k)p)$}
            edge [->, double] node [l, below right] {$1_{f}(a1_{p})$} (I)
            edge [<-, double] node [l, below right] {$a$} (B);
          \node (K) at (2.391,-3.949) {$(f(gh))(kp)$}
            edge [<-, double] node [l, left] {$a(1_{k}1_{p})$} (E)
            edge [<-, double] node [l, above] {$a$} (C);
          \node (L) at (2.391,.901) {$f((gh)(kp))$}
            edge [<-, double] node [l, left] {$a$} (K)
            edge [<-, double] node [l, above] {$1_{f}a$} (J)
            edge [->, double] node [l, above left] {$1_{f}a$} (G);
          \node at (-1,-1) {\tikz\node [rotate=-45] {$\Rrightarrow$};};
	  \node at (-.5,-1) {$\pi$};
          \node at (1,3) {\tikz\node [rotate=-45] {$\Rrightarrow$};};
	  \node at (1.7,3) {$1_{_{1_{f}}} \cdot \pi$};
          \node at (3,-1.5) {\tikz\node [rotate=-45] {$\Rrightarrow$};};
          \node at (3.5,-1.5) {$\pi$};
          \node at (-1,-3.7) {$\cong$};
          \node at (-2.5,3) {$\cong$};
        \end{scope}
      \end{tikzpicture}
    \end{center}
\end{figure}
\clearpage

for any `composable quintet of morphisms':
\[ (f,g,h,k,p) \in (T_1 \times_{T_0} T_1 \times_{T_0} T_1 \times_{T_0} T_1 \times_{T_0} T_1)_A \]

A \define{3-supergroup} is a super tricategory with one object (more precisely,
the one-point supermanifold) and all morphisms, 2-morphisms and 3-morphisms
weakly invertible. Given a normalized $H$-valued 4-cocycle $\pi$ on $G$, we can
construct a 3-supergroup $\Brane_\pi(G,H)$ in the same way we constructed the
Lie 3-group $\Brane_\pi(G,H)$ when $G$ and $H$ were Lie groups, but deleting
every reference to elements of $G$ or $H$:
\begin{itemize}
	\item The supermanifold of objects is the one-point supermanifold, $1$.

	\item The supermanifold of morphisms is the supergroup $G$. 
		Composition at a 0-cell is given by multiplication in the group:
		\[ \cdot \maps G \times G \to G. \]
		The source and target maps are the unique maps to $1$. The
		identity-assigning map is the identity-assigning map for $G$:
		\[ \id \maps 1 \to G. \]

	\item The supermanifold of 2-morphisms is again $G$. The source, target
		and identity-assigning maps are all the identity on $G$.
		Composition at a 1-cell is the identity on $G$, while
		composition at a 0-cell is again multiplication in $G$.
		This encodes the idea that all 2-morphisms are trivial.

	\item The supermanifold of 3-morphisms is $G \times H$. The source and
		target maps are projection onto $G$. The identity-assigning map
		is the inclusion:
		\[ G \to G \times H \]
		that takes $A$-points $g \in G_A$ to $(g,0) \in G_A \times
		H_A$, for all $A$.

	\item There are three kinds of composition of 3-morphisms: composition at a
		2-cell and at a 3-cell are both given by addition on $H$:
		\[ 1 \times + \maps G \times H \times H \to G \times H. \]
		While composition at a 0-cell is just given by multiplication on
		the semidirect product:
		\[ \cdot \maps (G \ltimes H) \times (G \ltimes H) \to G \ltimes H. \]

        \item The associator and left and right unitors are trivial.

	\item The triangulators are trivial.

	\item The \define{2-associator} or \define{pentagonator} is given by the
		4-cocycle $\pi \maps G^4 \to H$, where the source (and target)
		is understood to come from multiplication on $G$.
        \item The inverse of 3-morphisms is just given by negation in $H$. The map
          \[ \inv_3 \maps G \times H \to G \times H \]
          sends the $A$-point $(g,h)$ to $(g,-h)$.
          
          The inverses for 2-morphisms are trivial, because 2-morphisms are
          trivial.
          
          Inverses for 1-morphisms are just inverses in $G$. The map
          \[ \inv_1 \maps G \to G \]
          is just the usual inverse map on $G$. This is made into a biadjoint
          biequivalence with $\epsilon$ and $\eta$ trivial, and $\Phi$ and $\Theta$
          chosen to satisfy the swallowtail identity. There are many possible
          choices; here is a convenient one in terms of $A$-points:
          \[ \Phi(g) = -\pi(g,g^{-1}, g, g^{-1}), \quad \Theta(g) = 0 . \]
          \end{itemize}
A \define{slim 3-supergroup} is one of this form. It remains to check that it
is, indeed, a 3-supergroup.

\begin{prop} \label{prop:3supergroup} $\Brane_\pi(G,H)$ is a 3-supergroup: a super
  tricategory with one object and all morphisms, 2-morphisms and 3-morphisms
  weakly invertible.
\end{prop}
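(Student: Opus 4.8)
The strategy is to deduce this from the corresponding statement for Lie 3-groups, Proposition~\ref{prop:Lie3group}, by passing to $A$-points. For each Grassmann algebra $A$, evaluating the construction of $\Brane_\pi(G,H)$ on $A$-points produces exactly the Lie 3-group $\Brane_{\pi_A}(G_A,H_A)$ built from the ordinary group $G_A$, the ordinary abelian group $H_A$ with its $G_A$-action $\alpha_A$, and the function $\pi_A \maps G_A^4 \to H_A$: the supermanifolds of objects, morphisms, 2-morphisms and 3-morphisms become the point, $G_A$, $G_A$, and $G_A \times H_A$; composition at a 0-cell on 3-morphisms becomes multiplication in the semidirect product $G_A \ltimes H_A$; and the pentagonator becomes $\pi_A$. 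Here $\pi_A$ is a \emph{normalized} $H_A$-valued 4-cocycle on $G_A$: normalization is inherited because $\pi$ is normalized, and the cocycle condition holds because $d\pi = 0$ as a map of supermanifolds forces $d\pi_A = 0$, and $d\pi_A$ is precisely the Eilenberg--Mac Lane coboundary of $\pi_A$ computed with the action $\alpha_A$.

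First I would confirm that every piece of structure in the construction is genuinely a map of supermanifolds---source, target, identity-assigning, the three compositions, the (trivial) associator, unitors and triangulators, the pentagonator, and the weak-inverse data $\inv_3$, $\inv_1$, $\Phi$, $\Theta$. This is immediate: each of these is assembled by composition, products, and pullback over $T_0$ from multiplication and inversion on the supergroup $G$, addition and negation on the abelian supergroup $H$, the action $\alpha \maps G \times H \to H$, and the 4-cocycle $\pi \maps G^4 \to H$, all of which are maps of supermanifolds; and, since $T_0$ is the one-point supermanifold, the relevant pullbacks $T_i \times_{T_0} \cdots \times_{T_0} T_i$ are just ordinary products, hence again supermanifolds.

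With this in hand, the axioms follow by testing on $A$-points, which is legitimate because an equation between maps of supermanifolds holds if and only if it holds on $A$-points for every $A$. On $A$-points each tricategory axiom for $\Brane_\pi(G,H)$ is exactly the corresponding axiom for the Lie 3-group $\Brane_{\pi_A}(G_A,H_A)$, which holds by Proposition~\ref{prop:Lie3group}: the triangulator axioms are automatic from normalization of $\pi_A$; the pentagonator identity---the associahedron diagram, evaluated on a composable quintet $(f,g,h,k,p)$ in $(T_1 \times_{T_0} T_1 \times_{T_0} T_1 \times_{T_0} T_1 \times_{T_0} T_1)_A$---unwinds to the cocycle condition on $\pi_A$ exactly as in the proof of Proposition~\ref{prop:Lie3group}; and the swallowtail identity, with $\Phi(g) = -\pi(g,g^{-1},g,g^{-1})$ and $\Theta(g) = 0$, reduces on $A$-points to $-\pi_A(g,g^{-1},g,g^{-1}) + \pi_A(g,g^{-1},g,g^{-1}) = 0$. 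Finally, on $A$-points the 1-morphisms are strictly invertible (inversion in $G_A$), the 2-morphisms are trivial hence strictly invertible, and the 3-morphisms are strictly invertible (negation in $H_A$), so all morphisms, 2-morphisms and 3-morphisms of $\Brane_\pi(G,H)$ are weakly invertible. Thus $\Brane_\pi(G,H)$ is a super tricategory with one object and weakly invertible morphisms, 2-morphisms and 3-morphisms: a 3-supergroup. The only real subtlety is the first step---checking that moving from the category of smooth manifolds to the non-concrete category of supermanifolds costs us nothing; once the structure maps are seen to be maps of supermanifolds and equality of such maps is tested on $A$-points, the rest is a transcription of the proof of Proposition~\ref{prop:Lie3group}.
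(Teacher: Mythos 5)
Your proposal is correct and follows exactly the route the paper takes: the paper's own proof simply states that it is a duplicate of the proof of Proposition \ref{prop:Lie3group} with $A$-points in place of elements, which is precisely the reduction you carry out (and usefully elaborate, by noting that the structure maps are maps of supermanifolds and that equality of such maps is tested on $A$-points).
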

\begin{proof}
	This proof is a duplicate of Proposition
	\ref{prop:Lie3group}, but with $A$-points instead of elements.
\end{proof}

\section{Integrating nilpotent Lie \emph{n}-superalgebras} \label{sec:integrating}

Any mathematician worth her salt knows that we can easily construct Lie algebras as
the infinitesimal versions of Lie groups, and that a more challenging inverse
construction exists: we can `integrate' Lie algebras to get Lie groups. In fact, the
same is true of supergroups and Lie superalgebras, and indeed for $n$-supergroups and
Lie $n$-superalgebras for all $n$!

In the following, we recall our solution to this integration problem for slim,
nilpotent Lie $n$-superalgebras, which appeared in our previous paper
\cite{Huerta:susy3}. As we saw in Section \ref{sec:Lie-n-superalgebras}, slim Lie
$n$-algebras are built from $(n+1)$-cocycles in Lie algebra cohomology.
Remember, $p$-cochains on the Lie algebra $\g$ are linear maps:
\[ C^p(\g,\h) = \left\{ \omega \maps \Lambda^p \g \to \h \right\} , \]
where $\h$ is a representation of $\g$, though we shall restrict ourselves to
the trivial representation $\h = \R$ in this section.

On the other hand, in Section \ref{sec:lie3groups}, we saw that slim Lie 3-groups are
built from 4-cocycles in Lie group cohomology. Remember, $p$-cochains on $G$ are
smooth maps:
\[ C^p(G,H) = \left\{ f \maps G^p \to H \right\} , \]
where $H$ is an abelian group on which $G$ acts by automorphism, though we
shall restrict ourselves to $H = \R$ with trivial action in this section.

This parallel suggests a naive scheme to integrate Lie 3-algebras. Given
a slim Lie 3-superalgebra $\brane_\omega(\g,\h)$, we seek a slim 
3-supergroup $\Brane_\pi(G,H)$ where:
\begin{itemize}
	\item $G$ is a Lie supergroup with Lie superalgebra $\g$; i.e.\, it is a Lie
		supergroup integrating $\g$,
	\item $H$ is a Lie supergroup with Lie superalgebra $\h$; i.e.\, it is a Lie
		supergroup integrating $\h$,
	\item $\pi$ is a Lie supergroup $4$-cocycle on $G$ that, in some suitable sense,
		integrates the Lie superalgebra $4$-cocycle $\omega$ on $\g$.
\end{itemize}
In this section, we describe an elegant, geometric procedure to integrate Lie
superalgebra cocycles to obtain supergroup cocycles, which works when the Lie
superalgebra in question is nilpotent.

Of course, this falls far short of integrating a general Lie $n$-superalgebra to an
$n$-supergroup, which has been done by others. Building on the earlier work of
Getzler \cite{Getzler} on integrating nilpotent Lie $n$-algebras, Henriques
\cite{Henriques} has shown that any Lie $n$-algebra can be integrated to a `Lie
$n$-group', which Henriques defines as a sort of smooth Kan complex in the category
of Banach manifolds.  More recently, Schreiber \cite{Schreiber} has generalized this
integration procedure to a setting much more general than that of Banach manifolds,
including both supermanifolds and manifolds with infinitesimals. For both Henriques
and Schreiber, the definition of Lie $n$-group is weaker than the one we sketched in
Section \ref{sec:lie3groups}---it weakens the notion of multiplication so that the
product of two group `elements' is only defined up to equivalence. This level of
generality seems essential for the construction to work for \emph{every} Lie
$n$-algebra.

However, for \emph{some} Lie $n$-algebras, we can integrate them using the more
naive idea of Lie $n$-group we prefer in this paper: a smooth $n$-category
with one object in which every $k$-morphism is weakly invertible, for all $1
\leq k \leq n$. We shall see that, for some slim Lie $n$-algebras, we can
integrate the defining Lie algebra $(n+1)$-cocycle to obtain a Lie group
$(n+1)$-cocycle. In other words, for certain Lie groups $G$ with Lie algebra
$\g$, there is a cochain map:
\[ \smallint \maps C^\bullet(\g) \to C^\bullet(G) . \]
which is a chain homotopy inverse to differentiation.

When is this possible? We can always differentiate Lie group cochains to obtain
Lie algebra cochains, but if we can also integrate Lie algebra cochains to obtain
Lie group cochains, the cohomology of the Lie group and its Lie algebra
will coincide:
\[ H^\bullet(\g) \iso H^\bullet(G) . \]
By a theorem of van Est \cite{vanEst}, this happens when all the homology
groups of $G$, as a topological space, vanish.

Thus, we should look to Lie groups with vanishing homology for our examples.
How bad can things be when the Lie group is not homologically trivial? To get a
sense for this, recall that any semisimple Lie group $G$ is diffeomorphic to
the product of its maximal compact subgroup $K$ and a contractible space $C$:
\[ G \approx K \times C . \]
When $K$ is a point, $G$ is contractible, and certainly has vanishing homology.
At the other extreme, when $C$ is a point, $G$ is compact. And indeed, in this
case there is no hope of obtaining a nontrivial cochain map from Lie algebra
cochains to Lie group cochains:
\[ \smallint \maps C^\bullet(\g) \to C^\bullet(G) \]
because \emph{every smooth cochain on a compact group is trivial}.

Nonetheless, there is a large class of Lie $n$-algebras for which our Lie
$n$-groups \emph{are} general enough. In particular, when $G$ is an
`exponential' Lie group, the story is completely different. A Lie group or Lie
algebra is called \define{exponential} if the exponential map 
\[ \exp \maps \g \to G \]
is a diffeomorphism.  For instance, all simply-connected nilpotent Lie groups
are exponential, though the reverse is not true. Certainly, all exponential Lie
groups have vanishing homology, because $\g$ is contractible. We caution the
reader that some authors use the term `exponential' merely to indicate that
$\exp$ is surjective.

When $G$ is an exponential Lie group with Lie algebra $\g$, we can use a
geometric technique developed by Houard \cite{Houard} to construct a cochain
map:
\[ \smallint \maps C^\bullet(\g) \to C^\bullet(G). \]
The basic idea behind this construction is simple, a natural outgrowth of a
familiar concept from the cohomology of Lie algebras. Because a Lie algebra
$p$-cochain is a linear map:
\[ \omega \maps \Lambda^p \g \to \R, \]
using left translation, we can view $\omega$ as defining a $p$-form on the
Lie group $G$. So, we can integrate this $p$-form over $p$-simplices in $G$.
Thus we can define a smooth function:
\[ \smallint \omega \maps G^p \to \R, \]
by viewing the integral of $\omega$ as a function of the vertices of a
$p$-simplex:
\[ \smallint \omega(g_1, g_2, \dots, g_p) = \int_{[1, g_1, g_1 g_2, \dots, g_1 g_2 \cdots g_p]} \omega . \]
For the right-hand side to truly be a function of the $p$-tuple $(g_1, g_2,
\dots, g_p)$, we will need a standard way to `fill out' the $p$-simplex $[1,
g_1, g_1 g_2, \dots, g_1 g_2 \cdots g_p]$, based only on its vertices. It is
here that the fact that $G$ is exponential is key: in an exponential group, we
can use the exponential map to define a unique path from the identity $1$ to
any group element. We think of this path as giving a 1-simplex, $[1,g]$, and we
can extend this idea to higher dimensional $p$-simplices. 

Therefore, when $G$ is exponential, we can construct $\smallint$. Using this
cochain map, it is possible to integrate the slim Lie $n$-algebra
$\brane_\omega(\g)$ to the slim Lie $n$-group $\Brane_{\smallint \omega}(G)$. 

\begin{defn} \label{def:simplices}
Let $\Delta^p$ denote $\{(x_0, \dots, x_p) \in \R^{p+1} : \sum x_i = 1, x_i
\geq 0 \}$,  the standard $p$-simplex in $\R^{p+1}$.  Given a collection of
smooth maps
\[ \varphi_p \maps \Delta^p \times G^{p+1} \to G \]
for each $p \geq 0$, we say this collection defines a \define{left-invariant notion
of simplices} in $G$ if it satisfies:
\begin{enumerate}
	\item \textbf{The vertex property.} For any $(p+1)$-tuple, the
		restriction 
		\[ \varphi_p \maps \Delta^p \times \{(g_0, \dots, g_p)\} \to G \] 
		sends the vertices of $\Delta^p$ to $g_0, \dots, g_p$, in that
		order. We denote this restriction by 
		\[ [g_0, \dots, g_p]. \] 
		We call this map a \define{\boldmath{$p$}-simplex}, and regard
		it as a map from $\Delta^p$ to $G$.
	      
	\item \textbf{Left-invariance.} For any $p$-simplex $[g_0, \dots, g_p]$ and any $g \in G$, we
		have:
		\[ g [g_0, \dots, g_p] = [g g_0, \dots, g g_p ]. \]

	\item \textbf{The face property.} For any $p$-simplex
		\[ [g_0, \dots, g_p] \maps \Delta^p \to G \]
		the restriction to a face of $\Delta^p$ is a $(p-1)$-simplex.
\end{enumerate}

\end{defn}

As we noted above, every exponential Lie group can be equipped with a left-invariant
notion of simplices \cite{Huerta:susy3}. On any such group, we have the following
result:
\begin{prop} \label{prop:integral}
	Let $G$ be a Lie group equipped with a left-invariant notion of
	simplices, and let $\g$ be its Lie algebra. Then there is a cochain map
	from the Lie algebra cochain complex to the Lie group cochain complex
	\[ \smallint \maps C^{\bullet}(\g) \to C^{\bullet}(G) \]
	given by integration---that is, if $\omega$ is a left-invariant
	$p$-form on $G$, and $S$ is a $p$-simplex in $G$, then define:
	\[ (\smallint \omega)(S) = \int_S \omega. \]
\end{prop}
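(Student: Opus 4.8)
The plan is to prove three things: that $\smallint\omega$ is a well-defined element of $C^p(G)$, i.e.\ a smooth function $G^p\to\R$; that this assignment is $\R$-linear; and, the real content, that $d\circ\smallint=\smallint\circ d$, where the left-hand $d$ is the Lie group coboundary from Section~\ref{sec:lie3groups} and the right-hand $d$ is the Chevalley--Eilenberg coboundary from Section~\ref{sec:Lie-n-superalgebras}. The conceptual engine is Stokes' theorem; everything else is bookkeeping with the vertex, left-invariance, and face properties of Definition~\ref{def:simplices}.

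First I would handle well-definedness and smoothness. Given a left-invariant notion of simplices $\{\varphi_p\}$, the map $G^p\to G^{p+1}$ sending $(g_1,\dots,g_p)$ to the tuple of vertices $(1,g_1,g_1g_2,\dots,g_1\cdots g_p)$ is smooth, so composing with $\varphi_p$ produces a smooth map $\Delta^p\times G^p\to G$ whose restriction to each slice is the $p$-simplex $[1,g_1,\dots,g_1\cdots g_p]$. Pulling the left-invariant $p$-form $\omega$ back along this map and integrating over the compact manifold-with-corners $\Delta^p$ gives, by smooth dependence of integrals on parameters, a smooth function $(\smallint\omega)(g_1,\dots,g_p)$; $\R$-linearity in $\omega$ is immediate since pullback and integration are linear.

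Next, the cochain map property. I would invoke two standard facts. First, on the subcomplex of left-invariant forms the de Rham differential coincides with the Chevalley--Eilenberg operator $d$ (specialized to an ordinary Lie algebra and the trivial representation, so that the $\rho(X_i)$ terms in the formula of Section~\ref{sec:Lie-n-superalgebras} vanish). Second, Stokes' theorem: for $S=[1,g_1,\dots,g_1\cdots g_{p+1}]$ we have $\int_S d\omega=\int_{\partial S}\omega=\sum_{i=0}^{p+1}(-1)^i\int_{S_i}\omega$, where $S_i$ is the face of $S$ omitting the $i$-th vertex, which by the face property is again a simplex in our notion. It then remains to identify each $S_i$ with a term of the Lie group coboundary. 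The $i=0$ face is $[g_1,g_1g_2,\dots,g_1\cdots g_{p+1}]=g_1\cdot[1,g_2,\dots,g_2\cdots g_{p+1}]$, contributing $(\smallint\omega)(g_2,\dots,g_{p+1})$ by left-invariance of $\omega$ (and the trivial $G$-action on $\R$); for $1\le i\le p$, omitting the vertex $g_1\cdots g_i$ merges the $i$-th and $(i{+}1)$-st group elements, giving $(\smallint\omega)(g_1,\dots,g_ig_{i+1},\dots,g_{p+1})$; and the $i=p+1$ face is $[1,g_1,\dots,g_1\cdots g_p]$, giving $(\smallint\omega)(g_1,\dots,g_p)$. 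The signs $(-1)^i$ are exactly those in the Lie group coboundary formula, so $\smallint(d\omega)=d(\smallint\omega)$.

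The main obstacle I anticipate is the sign bookkeeping in the last step: confirming that the de Rham differential, restricted to left-invariant forms, reproduces the Chevalley--Eilenberg operator with the precise sign conventions fixed in this paper, and then checking that the simplicial boundary signs line up with the coboundary signs after the changes of variables forced by left-invariance and the face property. None of this is deep, but it is where an error would hide. The existence of a left-invariant notion of simplices on an exponential group is quoted from \cite{Huerta:susy3}, and smoothness and Stokes' theorem are standard, so the remainder is routine.
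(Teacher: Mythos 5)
Your argument is correct and is essentially the proof the paper relies on: the paper itself only cites \cite[Prop.\ 13]{Huerta:susy3}, and that proof is exactly this Stokes-theorem computation---identifying the de Rham differential on left-invariant forms with the Chevalley--Eilenberg differential, then matching the faces of $[1,g_1,g_1g_2,\dots,g_1\cdots g_{p+1}]$ with the terms of the group coboundary using left-invariance, the face property, and the vertex property. The sign bookkeeping you flag does work out exactly as you describe, with the $i=0$ face accounting for the (here trivial) action term and the $(-1)^i$ from the simplicial boundary matching the coboundary signs.
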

\begin{proof}
	See \cite[Prop. 13]{Huerta:susy3}. 
\end{proof}

\begin{prop}
	Let $G$ be a Lie group with Lie algebra $\g$. Then there is a cochain map
	from the Lie group cochain complex to the Lie algebra cochain complex:
	\[ D \maps C^{\bullet}(G) \to C^{\bullet}(\g) \]
	given by differentiation---that is, if $F$ is a homogeneous
	$p$-cochain on $G$, and $X_1, \dots, X_p \in \g$, then we can define:
	\[ DF(X_1, \dots, X_p) = \frac{1}{p!} \sum_{\sigma \in S_p} \mathrm{sgn}(\sigma) X_{\sigma(1)}^1 \dots X_{\sigma(p)}^p F(1, g_1, g_1 g_2, \dots, g_1 g_2 \dots g_p), \]
	where by $X_i^j$ we indicate that the operator $X_i$ differentiates
	only the $j$th variable, $g_j.$
\end{prop}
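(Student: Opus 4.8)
The plan is to verify two things: that $D$ lands in the Lie algebra cochain complex, and that it commutes with the two coboundary operators. The map $D$ is van Est's differentiation homomorphism written in the inhomogeneous (bar) picture: a smooth group $p$-cochain $F$, evaluated along the family of partial products $(1,g_1,g_1g_2,\dots,g_1\cdots g_p)$, becomes a smooth function of $(g_1,\dots,g_p)\in G^p$; the symbol $X_i^j$ is the left-invariant vector field in direction $X_i$ applied to the $j$-th of these variables; and $DF$ is the totally alternating part of the iterated derivative, evaluated at the identity. Throughout one uses that group cochains are smooth, so all the derivatives exist and mixed partials commute.

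First, that $DF\in C^p(\g)$. The operators $X_i^j$ and $X_k^\ell$ with $j\ne\ell$ act on independent variables, hence commute, so the ordered product $X_{\sigma(1)}^1\cdots X_{\sigma(p)}^p$ is unambiguous. Each directional derivative $X_i^j$ depends linearly on $X_i$ (rescaling, and additivity of the tangent vector via $\exp(t(X+Y))=\exp(tX)\exp(tY)+O(t^2)$), so $DF(X_1,\dots,X_p)$ is $p$-linear. It is totally antisymmetric: replacing $(X_1,\dots,X_p)$ by $(X_{\pi(1)},\dots,X_{\pi(p)})$ for $\pi\in S_p$ and reindexing the sum via $\sigma\mapsto\pi\sigma$ produces the factor $\mathrm{sgn}(\pi)$. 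Hence $DF$ factors through $\Lambda^p\g$, i.e.\ $DF\in C^p(\g)$, and $D$ is manifestly linear in $F$.

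Next, the cochain identity $D(d_G F)=d_\g(DF)$. I would substitute the explicit formula for $d_GF$ into the definition of $D$ and examine the summands of $d_GF$. Since we work in the trivial coefficient module, the first summand is $F(g_2,\dots,g_{p+1})$, which does not depend on $g_1$ and is therefore killed by $X^1$; likewise the last summand $\pm F(g_1,\dots,g_p)$ does not depend on $g_{p+1}$ and is killed by $X^{p+1}$. Each remaining summand $(-1)^iF(g_1,\dots,g_ig_{i+1},\dots,g_{p+1})$ depends on $g_i$ and $g_{i+1}$ only through the product $g_ig_{i+1}$; differentiating that slot by $X^i_{\sigma(i)}$ and $X^{i+1}_{\sigma(i+1)}$, evaluating at the identity, and antisymmetrizing over the two orders of those two directions replaces the symmetric second-derivative contribution by a single first-order derivative along $\exp(u[X_{\sigma(i)},X_{\sigma(i+1)}])$ — this is exactly the order-$st$ term of the Baker--Campbell--Hausdorff series, equivalently the identity $[L_X,L_Y]=L_{[X,Y]}$ for left-invariant vector fields. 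Re-collecting these contributions over the interior summands and over all $\sigma$ should assemble the Chevalley--Eilenberg coboundary $\sum_{a<b}(-1)^{a+b}DF([X_a,X_b],X_1,\dots,\hat X_a,\dots,\hat X_b,\dots,X_{p+1})$; the $\rho(X_i)$-terms of that coboundary are absent because the coefficients are trivial.

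The main obstacle is precisely this last step: getting the combinatorial bookkeeping right — which interior summand of $d_GF$, and which of the $(p+1)!$ permutations, feeds each pair $\{a,b\}$ and each assignment of the remaining directions, and with which sign, together with the normalizing constants — so that the total equals $d_\g(DF)$ with no stray coefficient; this is the one place where care is required, everything else being linear algebra and the first two BCH terms. If one prefers to avoid the direct bookkeeping, an alternative is to identify $D$ with the composite of the classical van Est map on homogeneous cochains, which is known to be a cochain map, with the standard chain isomorphism between the homogeneous and inhomogeneous cochain complexes of $G$; the mathematical content is the same.
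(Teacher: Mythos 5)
The paper does not actually prove this proposition: it simply cites Houard's Lemma 1 (p.\ 224 of \cite{Houard}), so there is no in-paper argument to compare against line by line. Your outline is the standard van Est/Houard argument and is correct in structure: passing to the inhomogeneous cochain $f(g_1,\dots,g_p)=F(1,g_1,\dots,g_1\cdots g_p)$, checking multilinearity and alternation of $DF$ by reindexing the sum over $S_p$, observing that with trivial coefficients the first and last terms of $d_G f$ are annihilated because they are independent of $g_1$ and $g_{p+1}$ respectively, and extracting the Chevalley--Eilenberg bracket terms from the interior summands by pairing each $\sigma$ with $\sigma\circ(i\;\,i{+}1)$ and using the order-$st$ term of Baker--Campbell--Hausdorff. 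All of these reductions are sound, and your alternative of factoring $D$ through the classical homogeneous van Est map is also legitimate.

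The one place your write-up falls short of a proof is exactly the place you flag yourself: the assertion that the surviving bracket contributions ``should assemble'' into $\sum_{a<b}(-1)^{a+b}DF([X_a,X_b],X_1,\dots,\hat X_a,\dots,\hat X_b,\dots,X_{p+1})$ with the correct signs and with the normalizations $\tfrac{1}{(p+1)!}$ and $\tfrac{1}{p!}$ matching. This is not a step that would fail --- for each unordered pair $\{a,b\}$ and each interior position $i$ there are $2(p-1)!$ permutations contributing, and summing over the $p$ positions and using the alternation of $DF$ recovers the $p!$ needed to cancel against the normalization --- but it is the entire nontrivial content of the lemma, and as written it is asserted rather than verified. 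Since this computation is classical and does close up (it is precisely Houard's Lemma 1), I would call your proposal correct in approach but incomplete at its single substantive step; carrying out that sign and counting argument explicitly, or invoking the classical van Est differentiation lemma as you suggest, would finish it.
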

\begin{proof}
	See Houard \cite{Houard}, p.\ 224, Lemma 1.
\end{proof}

Having now defined cochain maps
\[ \smallint \maps C^{\bullet}(\g) \to C^{\bullet}(G) \]
and 
\[ D \maps C^{\bullet}(G) \to C^{\bullet}(\g), \]
the obvious next question is whether or not this defines a homotopy equivalence
of cochain complexes. Indeed, as proved by Houard, they do:
\begin{thm}
	Let $G$ be a Lie group equipped with a left-invariant notion of
	simplices, and $\g$ its Lie algebra. The cochain map
	\[ D \smallint \maps C^{\bullet}(\g) \to C^{\bullet}(\g), \]
	is the identity, whereas the cochain map
	\[ \smallint D \maps C^{\bullet}(G) \to C^{\bullet}(G) \]
	is cochain-homotopic to the identity. Therefore the Lie algebra cochain
	complex $C^\bullet(\g)$ and the Lie group cochain complex
	$C^\bullet(G)$ are homotopy equivalent and thus have isomorphic
	cohomology.
\end{thm}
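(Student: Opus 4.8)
The plan is to prove the two assertions separately: that $D \smallint$ is literally the identity on $C^\bullet(\g)$, which is a first-order local computation at the identity element of $G$; and that $\smallint D$ is cochain-homotopic to the identity on $C^\bullet(G)$, which requires producing an explicit prism-type homotopy operator. Both facts are due to Houard \cite{Houard}, and, combined with the fact recorded just above that $\smallint$ and $D$ are cochain maps, they immediately yield $H^\bullet(\g) \iso H^\bullet(G)$, recovering van Est's isomorphism \cite{vanEst} in the exponential case.

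For $D\smallint = \id$: fix a left-invariant $p$-form $\omega$ on $G$, so that $(\smallint\omega)(g_1,\dots,g_p) = \int_{[1,g_1,\dots,g_1\cdots g_p]}\omega$ using the left-invariant notion of simplices of Definition~\ref{def:simplices} and Proposition~\ref{prop:integral}. By the definition of $D$ we must compute
\[ \frac{1}{p!}\sum_{\sigma\in S_p}\mathrm{sgn}(\sigma)\, X^1_{\sigma(1)}\cdots X^p_{\sigma(p)}\,(\smallint\omega)(g_1,\dots,g_p)\Big|_{g_1=\cdots=g_p=1}. \]
Pulling $\omega$ back along the parametrizing map $\Delta^p\to G$ of the simplex and differentiating once in each vertex variable, only the first-order Taylor behaviour of the simplex at each of its vertices can contribute; the vertex property and left-invariance of Definition~\ref{def:simplices} force this first-order behaviour to coincide with that of the affine (exponential) simplex, so the mixed partials produce the constant $p$-form $\omega$ evaluated on $(X_1,\dots,X_p)$, integrated against the standard simplex. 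The residual integral $\int_{\Delta^p}$ of the constant barycentric density, combined with the $1/p!$ normalization and the antisymmetrization over $S_p$, collapses to $\omega(X_1,\dots,X_p)$. I would carry this out by first reducing, via left-invariance, to a computation at the identity, then expanding the pullback to first order in each variable and invoking the alternating property of $\omega$; the bookkeeping of signs is the only fiddly point, and it is forced.

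For $\smallint D \simeq \id$: the goal is a linear map $h\maps C^p(G)\to C^{p-1}(G)$ with $dh + hd = \smallint D - \id$. Following Houard \cite[p.~224]{Houard}, I would construct $h$ as a \emph{prism operator}: using the left-invariant notion of simplices, one builds for each $(p+1)$-tuple of group elements and each $t\in[0,1]$ a homotopy of configurations interpolating between the ``vertex evaluation'' underlying $F$ and the ``integrate $DF$ over a simplex'' configuration underlying $\smallint D F$; applying $F$ and integrating over the extra parameter defines $hF$. The face property in Definition~\ref{def:simplices} is exactly what makes the boundary of this prism decompose into the two ends plus the $d$-terms, so a Stokes/telescoping argument gives the homotopy identity. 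As a cleaner but less explicit alternative, I would invoke the van Est double complex $C^{p,q} = \Omega^q(G^{p+1})$ of left-invariant forms, with the simplicial differential in the $p$-direction and the de Rham differential in the $q$-direction: since each $G^{p+1}$ is diffeomorphic to a vector space ($G$ being exponential), the rows are exact by the Poincar\'e lemma, so the edge complexes $C^{\bullet,0} = C^\bullet(G)$ and $C^{0,\bullet} = C^\bullet(\g)$ are quasi-isomorphic via the zig-zag, and one checks that this zig-zag realizes precisely $\smallint$ and $D$. The main obstacle is this explicit homotopy $h$: matching its prism boundary terms to the simplicial coboundary $d$ with the correct signs, and verifying that restriction to faces is compatible with the interpolation defining $h$ — that is, that the three axioms of Definition~\ref{def:simplices} genuinely suffice. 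Smoothness and convergence cause no trouble, since everything takes place on finite-dimensional manifolds and all integrals are over compact simplices. In the write-up I would give the bicomplex argument for the existence of the quasi-isomorphism and then Houard's explicit formula for $h$ to pin down that the quasi-isomorphism is the specific pair $(\smallint, D)$.
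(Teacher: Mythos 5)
Your proposal and the paper take the same route in the only sense that matters here: the paper's entire proof of this theorem is the citation ``See Houard, p.~234, Proposition~2,'' and you likewise attribute both halves to Houard and sketch how his argument goes. Your outline of the internals is the standard (and correct) shape: $D\smallint = \id$ by a first-order Taylor expansion of the simplex at the identity, and $\smallint D \simeq \id$ via an explicit prism-type homotopy operator, with the van Est double complex available as a softer alternative. So there is no divergence of method to report, only a difference in how much is written down.

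Two points in your sketch deserve tightening if you intend to write it out rather than cite it. First, in the $D\smallint = \id$ computation you assert that the vertex property and left-invariance of Definition~\ref{def:simplices} \emph{force} the first-order behaviour of the simplex at its vertices to agree with the affine exponential simplex. The three axioms do not pin down the parametrization $\varphi_p$ to that extent --- they do not even force the totally degenerate simplex $[1,\dots,1]$ to be the constant map --- so you need an argument that the fully mixed partial $X^1_{\sigma(1)}\cdots X^p_{\sigma(p)}$ at the identity sees only the multilinear term $\tfrac{1}{p!}\,\omega(w_1,\dots,w_p)$ coming from the linearized vertex displacements, independently of how the interior of the simplex is filled in; this is where the actual work (and the cancellation that absorbs the $1/p!$ against the antisymmetrization) lives, and it is glossed here. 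Second, in the bicomplex alternative you describe $C^{p,q}$ as consisting of \emph{left-invariant} forms on $G^{p+1}$; it should be all smooth $q$-forms on $G^{p+1}$, with the invariant forms (i.e.\ $C^\bullet(\g)$) appearing only as the augmentation of the columns --- otherwise the rows do not compute the de~Rham cohomology of $G^{p+1}$ and the Poincar\'e lemma step fails. Neither slip affects the truth of the theorem, but both would need repair in a self-contained write-up; as a gloss on a result the paper itself only cites, your proposal is adequate.
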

\begin{proof}
	See Houard \cite{Houard}, p.\ 234, Proposition 2.
\end{proof}

We now generalize the above results from groups to supergroups. As above, our concern
will be for \define{exponetial supergroups}, where the exponential map
\[ \exp \maps \g \to G \]
from Lie superalgebra $\g$ to supergroup $G$ is a diffeomorphism.  Our first concern,
however, is to translate Lie superalgebra cocycles into Lie algebra cocycles. Note
that for any Lie superalgebra $\g$, we have Lie algebra on the $A_0$-module
$\g_A$. We call this an $A_0$-Lie algebra, because it is a Lie algebra over $A_0$.

\begin{prop}
	Let $\g$ be a Lie superalgebra, and let $\g_A$ be the $A_0$-Lie algebra
	of its $A$-points. Then there is a cochain map:
	\[ C^\bullet_0(\g) \to C^\bullet(\g_A) \]
	given by taking the even $p$-cochain $\omega$
	\[ \omega \maps \Lambda^p \g \to \R \]
	to the induced $A_0$-linear map $\omega_A$:
	\[ \omega_A \maps \Lambda^p \g_A \to A_0, \]
	where $\Lambda^p \g_A$ denotes the $p$th exterior power of $\g_A$ as
	an $A_0$-module.
\end{prop}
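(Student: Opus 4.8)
The plan is to define $\omega_A$ by extension of scalars along the inclusion $\R \hookrightarrow A$, and then to check, in order, that it is well-defined and $A_0$-multilinear, that it is antisymmetric as an ordinary multilinear form over $A_0$, and finally that the assignment $\omega \mapsto \omega_A$ intertwines the two coboundary operators. Concretely, given an even $p$-cochain $\omega \maps \Lambda^p \g \to \R$ and homogeneous $A$-points $X_i = a_i \tensor \xi_i \in \g_A$, so $|a_i| = |\xi_i|$, I set
\[ \omega_A(a_1 \tensor \xi_1, \dots, a_p \tensor \xi_p) = \epsilon \cdot (a_1 \cdots a_p)\, \omega(\xi_1, \dots, \xi_p), \]
where $\epsilon$ is the Koszul sign produced by sorting all the scalars $a_i$ to the left of all the $\xi_j$. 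That this is independent of the chosen decompositions is immediate from multilinearity of $\omega$; that multiplying one argument by an element of $A_0$ pulls that element out with no sign gives $A_0$-multilinearity; and the values land in $A_0$ precisely because $\omega$ is \emph{even}, which is the reason we must restrict to $C^\bullet_0(\g)$.

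Next I would verify that $\omega_A$ is totally antisymmetric in the ordinary, ungraded sense, so that it descends to a genuine element of $C^p(\g_A)$ with coefficients in $A_0$. The point is that every element of $\g_A$ has overall grade zero, so the rule of signs makes all interchanges among $A$-points carry the trivial sign; one then checks directly that the graded-antisymmetry of $\omega$ together with the Koszul signs hidden in $\epsilon$ and in the graded-commutativity of $A$ combine to yield exactly a factor $-1$ when two adjacent arguments of $\omega_A$ are transposed. The only case needing care is two odd tensor factors $a \tensor \xi$ and $b \tensor \eta$: there the graded symmetry $\xi \wedge \eta = \eta \wedge \xi$ in $\Lambda \g$, the sign $ab = -ba$ in $A$, and the extra sign coming from $\omega$ itself conspire to give $-1$. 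Combined with $A_0$-multilinearity, this shows $\omega_A$ factors through $\Lambda^p \g_A$ as an $A_0$-module, so $\omega_A \in C^p(\g_A)$.

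The heart of the matter is the cochain-map identity $d(\omega_A) = (d\omega)_A$. Two simplifications apply: $\g$ acts trivially on $\R$, and accordingly $\g_A$ acts trivially on $A_0$, so on both sides the coboundary reduces to its bracket summand and no action terms appear. By $A_0$-multilinearity it then suffices to evaluate both sides on homogeneous $A$-points $X_i = a_i \tensor \xi_i$. On the left one expands $d(\omega_A)$ as a sum over pairs $i < j$ of $\omega_A$ applied to $[X_i, X_j]_A$ and the remaining $X_k$, substitutes $[X_i, X_j]_A = (-1)^{|a_j||\xi_i|}(a_i a_j) \tensor [\xi_i, \xi_j]$ — the bracket on $\g_A$ being the scalar extension of the one on $\g$, as in Section \ref{sec:supergroups} — and sorts the scalars to the front. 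On the right one expands $(d\omega)(\xi_1, \dots, \xi_{p+1})$ with the super Chevalley--Eilenberg formula of Section \ref{sec:Lie-n-superalgebras} and then applies the definition of $(-)_A$. Matching the two expansions term by term, one finds that the sign factors $(-1)^{|\xi_i||\xi_j|}$, $\epsilon^{i-1}_1(i)$ and $\epsilon^{j-1}_1(j)$ in the super formula are precisely what is needed to reconcile the two different Koszul reorderings occurring on the two sides. This sign bookkeeping is the only genuine obstacle — everything else is formal — and I expect it to close cleanly, since those very $\epsilon$-factors were placed in the super coboundary for exactly this purpose. Linearity of $\omega \mapsto \omega_A$ in $\omega$ being obvious, this establishes the cochain map.
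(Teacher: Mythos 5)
The paper gives no argument of its own here---it simply cites \cite[Prop.\ 23]{Huerta:susy3}---and your construction is exactly the one carried out there: define $\omega_A$ by scalar extension with the Koszul sign, observe that evenness of $\omega$ is what forces the values into $A_0$, deduce ordinary (ungraded) antisymmetry from the fact that every element of $\g_A$ has overall even grade, and reduce both coboundaries to their bracket summands before matching signs. The only soft spot is that you assert rather than fully verify the final sign matching in $d(\omega_A) = (d\omega)_A$, but you have correctly identified the bracket $[a_i \tensor \xi_i, a_j \tensor \xi_j]_A = (-1)^{|a_j||\xi_i|}(a_i a_j)\tensor[\xi_i,\xi_j]$ and the role of the $\epsilon^{j}_{i}$-factors, and the computation does close as you expect.
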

\begin{proof}
  See \cite[Prop. 23]{Huerta:susy3}.
\end{proof}
This proposition says that from any Lie superalgebra cocycle on $\n$ we obtain a Lie
algebra cocycle on $\n_A$, albeit now valued in $A_0$. Since $N_A$ is an exponential
Lie group with Lie algebra $\n_A$, we can integrate $\omega_A$ to a group cocycle,
$\smallint \omega_A$, on $N_A$.

As before, we need a notion of simplices in $N$. Since $N$ is a supermanifold,
the vertices of a simplex should not be points of $N$, but rather $A$-points
for arbitrary Grassmann algebras $A$. This means that for any $(p+1)$-tuple of
$A$-points, we want to get a $p$-simplex:
\[ [n_0, n_1, \dots, n_p] \maps \Delta^p \to N_A, \]
where, once again, $\Delta^p$ is the standard $p$-simplex in $\R^{p+1}$, and
this map is required to be smooth. But this only defines a $p$-simplex in
$N_A$. To really get our hands on a $p$-simplex in $N$, we need it to depend
functorially on the choice of Grassmann algebra $A$ we use to probe $N$. So if $f
\maps A \to B$ is a homomorphism between Grassmann algebras and $N_f \maps N_A \to
N_B$ is the induced map between $A$-points and $B$-points, we require:
\[ N_f \circ [n_0, n_1, \dots, n_p] = [ N_f(n_0), N_f(n_1), \dots, N_f(n_p) ] \]
Thus given a collection of maps:
\[ (\varphi_p)_A \maps \Delta^p \times (N_A)^{p+1} \to N_A \]
for all $A$ and $p \geq 0$, we say this collection defines a
\define{left-invariant notion of simplices} in $N$ if 
\begin{itemize}

	\item each $(\varphi_p)_A$ is smooth, and for each $x \in \Delta^p$,
		the restriction: 
		\[ (\varphi_p)_A \maps \left\{x\right\} \times N_A^{p+1} \to N_A \] 
		is $A_0$-smooth;

	\item it defines a left-invariant notion of simplices in $N_A$ for each
		$A$, as in Definition \ref{def:simplices}; 

	\item the following diagram commutes for all homomorphisms $f \maps A
		\to B$:
		\[ \xymatrix{
		\Delta^p \times N_A^{p+1} \ar[r]^>>>>>{(\varphi_p)_A} \ar[d]_{1 \times N_f^{p+1}}  & N_A \ar[d]^{N_f} \\
		\Delta^p \times N_B^{p+1} \ar[r]_>>>>>{(\varphi_p)_B} & N_B 
		}
		\]

\end{itemize}
In fact, every exponential supergroup can be equipped with a left-invariant notion of
simplices \cite{Huerta:susy3}. We can use this left-invariant notion of simplices to
define a cochain map $\smallint \maps C^\bullet(\n) \to C^\bullet(N)$:
\begin{prop} \label{prop:superintegrating}
	Let $\n$ be a nilpotent Lie superalgebra, and let $N$ be the
	exponential supergroup which integrates $\n$. There is a cochain map:
	\[ \smallint \maps C_0^\bullet(\n) \to C^\bullet(N) \]
	which sends the even Lie superalgebra $p$-cochain $\omega$ to the
	supergroup $p$-cochain $\smallint \omega$, given on $A$-points by:
	\[ (\smallint \omega)_A(n_1, \dots, n_p) = \int_{[1, n_1, n_1 n_2, \dots, n_1 n_2 \dots n_p ] } \omega_A \]
	for $n_1, \dots, n_p \in N_A$.
\end{prop}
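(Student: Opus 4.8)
The plan is to reduce the statement, one Grassmann algebra at a time, to the non-super integration result Proposition \ref{prop:integral}, and then to check that the resulting family of maps is natural in $A$ and $A_0$-smooth, so that it genuinely assembles into a cochain map valued in $C^\bullet(N)$.

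First I would assemble the inputs. By Proposition \ref{prop:nilpotentsupergroup}, for each Grassmann algebra $A$ the $A$-points $N_A$ form an ordinary exponential (nilpotent, simply connected) Lie group, with identity map $\exp_A \maps \n_A \to N_A$ and multiplication given by the Baker--Campbell--Hausdorff series. Since $N$ is exponential, it carries a left-invariant notion of simplices in the sense appropriate to supermanifolds \cite{Huerta:susy3}; restricting to $A$-points gives a genuine left-invariant notion of simplices on each $N_A$ in the sense of Definition \ref{def:simplices}, and these are compatible along every homomorphism $f \maps A \to B$. Second, given an even $p$-cochain $\omega \in C_0^p(\n)$, the preceding proposition produces the $A_0$-linear map $\omega_A \maps \Lambda^p \n_A \to A_0$, and $\omega \mapsto \omega_A$ is itself a cochain map into the cochain complex of the $A_0$-Lie algebra $\n_A$ with trivial coefficients $A_0$.

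Now, for each $A$, view $\omega_A$ as a left-invariant $A_0$-valued $p$-form on $N_A$ and apply the evident $A_0$-coefficient generalization of Proposition \ref{prop:integral} (nothing in Houard's argument changes, since $A_0$ is a finite-dimensional real vector space and $d$ is $\R$-linear in the cochain): integrating $\omega_A$ over the simplex $[1, n_1, n_1 n_2, \dots, n_1 n_2 \cdots n_p]$ defines a smooth map $(\smallint\omega)_A \maps N_A^p \to A_0$ which is a cochain map for the group coboundary. It then remains to verify three things. (i) \emph{$A_0$-smoothness}: because the simplex $[1, n_1, \dots, n_1 \cdots n_p]$ is built from the $A$-points using $\exp_A$ and the BCH series — all $A_0$-polynomial — it varies $A_0$-smoothly in $(n_1, \dots, n_p)$, and integrating the $A_0$-linear form $\omega_A$ over such a family yields a map whose derivative is $A_0$-linear at each point; hence $\smallint\omega$ is a morphism of supermanifolds $N^p \to \R$ and lies in $C^p(N)$. (ii) \emph{Naturality in $A$}: this follows since $N_f$ carries the simplex with vertices $n_0, \dots, n_p$ to the one with vertices $N_f(n_0), \dots, N_f(n_p)$, since $(N_f)^* \omega_B = f \circ \omega_A$ by naturality of $\omega \mapsto \omega_A$, and since $f|_{A_0}$ is linear and so commutes with integration. (iii) \emph{Cochain map}: both the supergroup coboundary on $C^\bullet(N)$ and the Lie superalgebra coboundary on $C_0^\bullet(\n)$ are computed on $A$-points, where they reduce to the group coboundary on $C^\bullet(N_A, A_0)$ and the $A_0$-Lie algebra coboundary on $C^\bullet(\n_A, A_0)$ respectively; since $\omega \mapsto \omega_A$ and $\smallint$ on $N_A$ are each cochain maps, $d\smallint\omega = \smallint d\omega$ on $A$-points for every $A$, hence as morphisms of supermanifolds.

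The main obstacle is step (i): showing that the integral of $\omega_A$ over the exponentially-filled simplex has an $A_0$-linear derivative, so that $\smallint\omega$ is truly a map of supermanifolds rather than merely a natural family of ordinary smooth maps. This is exactly where supergeometry, as opposed to working Grassmann-algebra-by-Grassmann-algebra, enters, and it rests on the fact — established in \cite{Huerta:susy3} — that an exponential supergroup can be equipped with a left-invariant notion of simplices that is $A_0$-smooth in its vertices. Everything else is transport of Proposition \ref{prop:integral} through the functor-of-points formalism, and the evenness hypothesis on $\omega$ is used precisely to guarantee, via the preceding proposition, that $\omega_A$ is $A_0$-linear so that this transport is available.
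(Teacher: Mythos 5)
Your proposal is correct and follows essentially the same route as the paper's proof (which is deferred to \cite[Prop.\ 24]{Huerta:susy3}): work one Grassmann algebra at a time, apply the ordinary integration result of Proposition \ref{prop:integral} to the exponential group $N_A$ with the $A_0$-valued cocycle $\omega_A$, and then use the $A_0$-smoothness and functoriality built into the left-invariant notion of simplices on a supergroup to assemble the family $(\smallint\omega)_A$ into a genuine morphism of supermanifolds $N^p \to \R$ (whose $A$-points are precisely $A_0$). You also correctly isolate the two places where the super setting actually bites, namely the $A_0$-linearity of the derivative of the integral and the naturality in $A$, both of which are exactly what the cited construction supplies.
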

\begin{proof}
	See \cite[Prop. 24]{Huerta:susy3}.
\end{proof}

\section{The super-2-brane Lie 3-supergroup} \label{sec:finale}

We are now ready to unveil the Lie 3-super\-group which integrates our favorite
Lie 3-super\-algebra, $\twobrane(n+2,1)$.  Remember, this is the Lie
3-superalgebra which occurs only in the dimensions for which the classical
2-brane makes sense. It is \emph{not} nilpotent, simply because the
Poincar\'e superalgebra $\siso(n+2,1)$ that forms degree 0 of
$\twobrane(n+2,1)$ is not nilpotent. Nonetheless, we are equipped to integrate
this Lie 3-superalgebra using only the tools we have built to perform this task
for nilpotent Lie $n$-superalgebras.

The road to this result has been a long one, and there is yet some ground to
cover before we are finished. So, let us take stock of our progress before we
move ahead:

\begin{itemize}
	\item In spacetime dimensions $n+3 = 4$, 5, 7 and 11, we used division
		algebras to construct a 4-cocycle $\beta$ on the
		supertranslation algebra:
		\[ \T = \V \oplus \S \]
		which is nonzero only when it eats two vectors and two spinors:
		\[ \beta(\A,\B,\Psi,\Phi) = \langle \Psi, (\A\B - \B\A) \Phi \rangle . \]

	\item Because $\beta$ is invariant under the action of $\so(n+2,1)$,
		it can be extended to a 3-cocycle on the Poincar\'e
		superalgebra:
		\[ \siso(n+2,1) = \so(n+2,1) \ltimes \T. \]
		The extension is just defined to vanish outside of $\T$, and we
		call it $\beta$ as well.

	\item Therefore, in spacetime dimensions $n+3$, we get a Lie
		3-superalgebra $\twobrane(n+2,1)$ by extending
		$\siso(n+2,1)$ by the 4-cocycle $\beta$.
\end{itemize}

In the last section, we built the technology necessary to integrate Lie
superalgebra cocycles to supergroup cocycles, \emph{provided} the Lie
superalgebra in question is nilpotent. This allows us to integrate nilpotent
Lie $n$-superalgebras to $n$-supergroups. But $\twobrane(n+2,1)$ is not
nilpotent, so we cannot use this directly here.

However, the cocycle $\beta$ is supported on a nilpotent subalgebra: the
supertranslation algebra, $\T$, for the appropriate dimension.  This saves the
day: we can integrate $\beta$ as a cocycle on $\T$.  This gives us a cocycle
$\smallint \beta$ on the supertranslation supergroup, $T$, for the appropriate
dimension. We will then be able to extend this cocycle to the Poincar\'e
supergroup, thanks to its invariance under Lorentz transformations.

The following proposition helps us to accomplish this, but takes its most
beautiful form when we work with `homogeneous supergroup cochains', which we
have not actually defined. Rest assured---they are exactly what you expect. If
$G$ is a supergroup that acts on the abelian supergroup $M$ by automorphism, a
\define{homogeneous $M$-valued $p$-cochain} on $G$ is a smooth map:
\[ F \maps G^{p+1} \to M \]
such that, for any Grassmann algebra $A$ and $A$-points $g, g_0, \dots, g_p \in
G_A$:
\[ F_A(gg_0, g g_1, \dots, g g_p) = g F_A(g_1, \dots, g_p) . \]
We can define the supergroup cohomology of $G$ using homogeneous or
inhomogeneous cochains, just as was the case with Lie group cohomology.

\begin{prop} Let $G$ and $H$ be Lie supergroups such that $G$ acts on $H$, and
	let $M$ be an abelian supergroup on which $G \ltimes H$ acts by
	automorphism.  Given a homogeneous $M$-valued $p$-cochain $F$ on $H$:
	\[ F \maps H^{p+1} \to M, \]
       	we can extend it to a map of supermanifolds:
	\[ \tilde{F} \maps (G \ltimes H)^{p+1} \to M \]
	by pulling back along the projection $(G \ltimes H)^{p+1} \to H^{p+1}$.
	In terms of $A$-points 
	\[ (g_0,h_0), \ldots, (g_p,h_p) \in G_A \ltimes H_A , \] 
	this means $\tilde{F}$ is defined by:
	\[ \tilde{F}_A((g_0,h_0), \ldots, (g_p,h_p)) = F_A(h_0, \ldots, h_p), \]
	Then $\tilde{F}$ is a homogeneous $p$-cochain on $G \ltimes H$ if and only if $F$ is
	$G$-equivariant, and in this case $d\tilde{F} = \widetilde{dF}$.
\end{prop}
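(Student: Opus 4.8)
The plan is to prove everything by working with $A$-points and with \emph{homogeneous} cochains, since the differential on homogeneous cochains is an alternating sum of face maps with no group action at all, which makes the identity $d\tilde{F} = \widetilde{dF}$ almost immediate. First I would note that $\tilde{F}$ is a bona fide map of supermanifolds: it is the composite of $F \maps H^{p+1} \to M$ with the projection $(G \ltimes H)^{p+1} \to H^{p+1}$, and this projection is a map of supermanifolds because the underlying supermanifold of $G \ltimes H$ is simply $G \times H$. So the content of the first assertion is only whether $\tilde{F}$ satisfies the homogeneity condition.

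To check that, I would expand the semidirect-product multiplication on $A$-points as $(g,h)(g',h') = (gg',\, h\,(g\cdot h'))$, where $g\cdot h'$ is the $G$-action on $H$, and record the factorization $(g,h) = (1,h)(g,1)$, so that an $A$-point $(g,h)$ acts on $M_A$ by $m \mapsto h\cdot(g\cdot m)$, where $h\cdot$ denotes the restriction to $H$ of the $G\ltimes H$-action --- in particular an invertible map. Then the homogeneity requirement $\tilde{F}_A((g,h)(g_0,h_0),\dots,(g,h)(g_p,h_p)) = (g,h)\cdot\tilde{F}_A((g_0,h_0),\dots,(g_p,h_p))$ becomes, once we expand $\tilde{F}$ as the pullback of $F$ and use that $F$ is already homogeneous on $H$,
\[ h\cdot F_A(g\cdot h_0,\dots,g\cdot h_p) \;=\; h\cdot\bigl(g\cdot F_A(h_0,\dots,h_p)\bigr). \]
Cancelling the invertible $h\cdot$ on both sides, this holds for all $A$ and all $A$-points precisely when $F_A(g\cdot h_0,\dots,g\cdot h_p) = g\cdot F_A(h_0,\dots,h_p)$, which is exactly $G$-equivariance of $F$. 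Both implications of the ``if and only if'' come out of this single computation.

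For the identity $d\tilde{F} = \widetilde{dF}$, I would use the homogeneous differential $(dF)_A(h_0,\dots,h_{p+1}) = \sum_{i=0}^{p+1}(-1)^i F_A(h_0,\dots,\hat{h}_i,\dots,h_{p+1})$. Deleting the $i$-th entry of the argument of $\tilde{F}$ simply deletes $h_i$ from the argument of $F$, so $(d\tilde{F})_A((g_0,h_0),\dots,(g_{p+1},h_{p+1})) = (dF)_A(h_0,\dots,h_{p+1})$, which is by definition $\widetilde{dF}$ evaluated at that point; this equality of maps of supermanifolds needs no equivariance hypothesis. The same observation shows $d\tilde{F}$ is homogeneous whenever $\tilde{F}$ is (deletion commutes with translation), so the first part of the proposition, applied to $dF$, confirms that $dF$ is $G$-equivariant and hence that $\widetilde{dF}$ is itself a homogeneous cochain.

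The only real obstacle is bookkeeping: fixing the semidirect-product and action conventions --- which side $G$ acts, the factorization $(g,h) = (1,h)(g,1)$ --- so that the cancellation step is legitimate, which uses the hypothesis that $G\ltimes H$, and hence $H$, acts on $M$ \emph{by automorphism}, making the action of each $A$-point invertible. There is no supergeometric subtlety, since $\tilde{F}$, $d\tilde{F}$ and $\widetilde{dF}$ are all assembled from $F$, the projection, the group multiplications and the action, each of which is already a map of supermanifolds.
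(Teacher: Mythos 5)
Your proof is correct. The paper gives no argument of its own for this proposition—it simply cites \cite[Prop.\ 26]{Huerta:susy3}—but your computation is exactly the standard one such a proof contains: the factorization $(g,h)=(1,h)(g,1)$ reduces homogeneity of $\tilde{F}$ to $G$-equivariance of $F$ after cancelling the invertible action of $h$, and since the differential on homogeneous cochains is a bare alternating sum of face maps, $d\tilde{F}=\widetilde{dF}$ follows by inspection because deleting the $i$th entry commutes with the projection onto $H^{p+1}$.
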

\begin{proof}
	See \cite[Prop. 26]{Huerta:susy3}.
\end{proof}

\noindent 
Now, at long last, we are ready to integrate $\beta$. In the following
proposition, $T$ denotes the \define{supertranslation group}, the exponential
supergroup of the supertranslation algebra $\T$.

\begin{prop}
	In dimensions 4, 5, 7 and 11, the Lie supergroup 4-cocycle $\smallint
	\beta$ on the supertranslation group $T$ is invariant under the action
	of $\Spin(n+2,1)$.
\end{prop}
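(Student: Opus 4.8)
The plan is to derive the $\Spin(n+2,1)$-invariance of $\smallint\beta$ from the $\Spin(n+2,1)$-invariance of $\beta$ (Theorem~\ref{thm:4-cocycle}) by showing that the integration map $\smallint$ is natural with respect to automorphisms. First I would observe that the action of $\Spin(n+2,1)$ on $\T = \V\oplus\S$ is by Lie superalgebra automorphisms: it is linear, grading-preserving, and preserves the bracket $[-,-]\maps\Sym^2\S\to\V$, which is $\Spin(n+2,1)$-equivariant by construction. Since $T$ is the exponential supergroup of the nilpotent Lie superalgebra $\T$, each such automorphism $\phi\maps\T\to\T$ lifts to a supergroup automorphism $\Phi\maps T\to T$, characterized on $A$-points by $\Phi_A\circ\exp_A = \exp_A\circ\phi_A$, and this lift is exactly the $\Spin(n+2,1)$-action on $T$. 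So it suffices to prove that for every $\phi\in\Spin(n+2,1)$ and every Grassmann algebra $A$,
\[ (\smallint\beta)_A\bigl(\Phi_A(n_1),\dots,\Phi_A(n_4)\bigr) = (\smallint\beta)_A(n_1,\dots,n_4), \qquad n_1,\dots,n_4\in T_A. \]

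The heart of the argument will be a lemma: $\smallint$ intertwines pullback along $\phi$ with pullback along $\Phi$, that is, $\smallint(\phi^*\omega) = \Phi^*(\smallint\omega)$ for every even Lie superalgebra cochain $\omega$ on $\T$, where for a $p$-cochain $\omega$ we set $\phi^*\omega = \omega\circ\Lambda^p\phi$ and $(\Phi^*F)_A(n_1,\dots,n_p) = F_A(\Phi_A(n_1),\dots,\Phi_A(n_p))$. To prove it I would unwind Proposition~\ref{prop:superintegrating}. Because $\Phi_A$ is a group homomorphism fixing $1$, the vertices $1,\Phi_A(n_1),\Phi_A(n_1)\Phi_A(n_2),\dots$ are the $\Phi_A$-images of $1,n_1,n_1n_2,\dots$; and since the left-invariant notion of simplices on an exponential supergroup is constructed purely from $\exp_A$, $\log_A$, multiplication and left translation---all of which commute with $\Phi_A$---one gets $\Phi_A\circ[n_0,\dots,n_p] = [\Phi_A(n_0),\dots,\Phi_A(n_p)]$ as maps $\Delta^p\to T_A$. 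Together with the change-of-variables identity $\int_{\Phi_A\circ S}\eta = \int_S\Phi_A^*\eta$ for a form $\eta$ and a singular simplex $S$, this reduces the lemma to the equality $\Phi_A^*\omega_A = \phi_A^*\omega_A$ of left-invariant forms on $T_A$, which holds because a left-invariant form is determined by its value at the identity and the derivative of $\Phi_A$ there is $\phi_A$ under the canonical identification of that tangent space with $\T_A$.

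Granting the lemma the proposition is immediate: $\beta$ being $\Spin(n+2,1)$-invariant means $\phi^*\beta = \beta$, hence $\Phi^*(\smallint\beta) = \smallint(\phi^*\beta) = \smallint\beta$, which is precisely the displayed identity above. The step I expect to require real care---rather than a one-line citation---is the automorphism-equivariance of the left-invariant simplices, $\Phi_A\circ[n_0,\dots,n_p] = [\Phi_A(n_0),\dots,\Phi_A(n_p)]$. The \emph{existence} of a left-invariant notion of simplices on an exponential supergroup is quoted from \cite{Huerta:susy3}, and its compatibility with left translation is part of the definition, but its compatibility with a general supergroup automorphism must be read off from the explicit construction there. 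Since that construction invokes no structure beyond the exponential map, I would isolate this compatibility as a short lemma and verify it directly; everything else is routine bookkeeping with $A$-points.
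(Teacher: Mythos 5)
Your proposal is correct and follows essentially the same route as the paper: the paper obtains the statement as an immediate consequence of a general proposition (cited as Prop.~28 of the third paper in the series) asserting that $\smallint$ carries $G$-invariant even Lie superalgebra cochains on a nilpotent $\h$ to $G$-invariant supergroup cochains on $H$, and your argument amounts to reconstructing the proof of that proposition in the case at hand via the automorphism-equivariance of the left-invariant simplices, change of variables, and the identification of $\Phi_A^*\omega_A$ with $\phi_A^*\omega_A$ for left-invariant forms. The one step you rightly flag as requiring care---that the exponential construction of simplices commutes with supergroup automorphisms---is precisely the content the paper outsources to that citation.
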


\noindent This is an immediate consequence of the following:

\begin{prop} 
	Let $H$ be a nilpotent Lie supergroup with Lie superalgebra $\h$.
	Assume $H$ is equipped with its standard left-invariant notion of 
	simplices, and let $G$ be a Lie supergroup that acts on $H$ by
	automorphism. If $\omega \in C^p(\h)$ is an even Lie superalgebra
	$p$-cochain which is invariant under the induced action of $G$ on $\h$,
	then $\smallint \omega \in C^p(H)$ is a Lie supergroup $p$-cochain
	which is invariant under the action of $G$ on $H$.
\end{prop}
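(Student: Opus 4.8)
The plan is to verify the invariance equation on $A$-points and reduce it, by a change of variables, to the $G$-invariance of the left-invariant $p$-form determined by $\omega_A$ on $H_A$. First I would unpack the definitions. Fix a Grassmann algebra $A$, and for an $A$-point $g \in G_A$ write $\phi_g \maps H_A \to H_A$ for the diffeomorphism given by the action of $g$; since $G$ acts by automorphism, $\phi_g$ is a group automorphism of $H_A$ fixing $1$. By Proposition~\ref{prop:superintegrating},
\[ (\smallint\omega)_A(n_1, \dots, n_p) = \int_{[1, n_1, n_1 n_2, \dots, n_1 \cdots n_p]} \omega_A, \]
where $\omega_A$ is regarded as a left-invariant ($A_0$-valued) $p$-form on $H_A$. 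Invariance of $\smallint\omega$ under $G$ means $(\smallint\omega)_A(\phi_g(n_1), \dots, \phi_g(n_p)) = (\smallint\omega)_A(n_1, \dots, n_p)$ for all $g \in G_A$ and $n_i \in H_A$, naturally in $A$; since every ingredient is natural in $A$, it is enough to check this for each fixed $A$.

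Second I would show that the standard left-invariant notion of simplices on $H$ is $G$-equivariant, i.e.\ $\phi_g \circ [m_0, \dots, m_p] = [\phi_g(m_0), \dots, \phi_g(m_p)]$ as maps $\Delta^p \to H_A$. This follows from the explicit construction in~\cite{Huerta:susy3}: the simplices are built inductively out of multiplication, inversion, and the exponential map $\exp_A \maps \h_A \to H_A$, and since $\phi_g$ is a supergroup automorphism of $H$ fixing $1$ we have $\phi_g \circ \exp_A = \exp_A \circ\, d\phi_g|_{1}$, so $\phi_g$ intertwines every ingredient of the construction. In particular, using $\phi_g(1) = 1$ and $\phi_g(n_i n_j) = \phi_g(n_i)\phi_g(n_j)$, the simplex attached to $(\phi_g(n_1), \dots, \phi_g(n_p))$ is exactly $\phi_g$ applied to the simplex attached to $(n_1, \dots, n_p)$.

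Third comes the change-of-variables step, which is where the invariance of $\omega$ enters. By the previous paragraph,
\[ \int_{[1, \phi_g(n_1), \dots]} \omega_A = \int_{[1, n_1, \dots]} \phi_g^{*}\omega_A, \]
so it suffices to prove $\phi_g^{*}\omega_A = \omega_A$. Now $\phi_g^{*}\omega_A$ is again left-invariant: writing $\ell_h$ for left translation in $H_A$, the automorphism property gives $\phi_g \circ \ell_h = \ell_{\phi_g(h)} \circ \phi_g$, whence $\ell_h^{*}(\phi_g^{*}\omega_A) = \phi_g^{*}(\ell_{\phi_g(h)}^{*}\omega_A) = \phi_g^{*}\omega_A$ by left-invariance of $\omega_A$. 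A left-invariant form is determined by its value at $1$, and since $\phi_g(1) = 1$ with $d\phi_g|_{1} = \rho_A(g)$, the induced $A_0$-linear action of $g$ on $\h_A$, that value is $\omega_A \circ \Lambda^p \rho_A(g)$. But the hypothesis that $\omega$ is $G$-invariant says precisely that $\omega_A \circ \Lambda^p \rho_A(g) = \omega_A$ for all $A$ and all $g \in G_A$. Hence $\phi_g^{*}\omega_A = \omega_A$, chaining the three steps yields the desired identity, and naturality in $A$ upgrades it to an equation of maps of supermanifolds, so $\smallint\omega$ is $G$-invariant.

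The main obstacle is the second step: making precise that the standard left-invariant notion of simplices of~\cite{Huerta:susy3} is $G$-equivariant. The crux is the identity $\phi_g \circ \exp_A = \exp_A \circ\, d\phi_g|_{1}$, valid because $\phi_g$ is a supergroup automorphism of $H$ and the exponential map of the integrated nilpotent supergroup is natural with respect to such automorphisms; granting this, $G$-equivariance of the simplices is routine from their inductive definition, and the rest of the argument is the familiar fact that pulling back a left-invariant form along a group automorphism produces a left-invariant form.
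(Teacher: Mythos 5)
Your argument is correct and follows the same route as the proof this paper delegates to \cite[Prop.~28]{Huerta:susy3}: reduce to $A$-points, observe that the action of each $g\in G_A$ is an automorphism of $H_A$ commuting with $\exp_A$ and hence carrying the standard left-invariant simplices to one another, and then use the change-of-variables formula together with the fact that a left-invariant form is determined by its value at the identity, where $G$-invariance of $\omega$ applies. The only thing to add is that the paper itself gives no argument beyond the citation, so your write-up is essentially a reconstruction of the cited proof rather than an alternative to it.
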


\begin{proof}
	See \cite[Prop. 28]{Huerta:susy3}.
\end{proof}

It thus follows that in dimensions 4, 5, 7 and 11, the cocycle $\smallint
\beta$ can be extended to the Poincar\'e supergroup:
\[ \SISO(n+2,1) = \Spin(n+2,1) \ltimes T . \] 
By a slight abuse of notation, we continue to denote this extension by
$\smallint \beta$. As an immediate consequence, we have:

\begin{thm} \label{thm:2branegroup}
	In dimensions 4, 5, 7 and 11, there exists a slim Lie 3-supergroup
	formed by extending the Poincar\'e supergroup $\SISO(n+2,1)$ by the
	4-cocycle $\smallint \beta$, which we call the \define{2-brane Lie
	3-supergroup}, \define{\boldmath{$\Twobrane(n+2,1)$}}.
\end{thm}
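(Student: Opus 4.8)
The plan is to feed the cocycle $\smallint\beta$, constructed just above, into the general machine of Section~\ref{sec:3supergroups}, so the theorem becomes an immediate consequence of results already in place. Let me recall the chain of reductions. By Theorem~\ref{thm:4-cocycle}, in dimensions $4$, $5$, $7$ and $11$ the supertranslation algebra $\T = \V \oplus \S$ carries an even, Lorentz-invariant, $\R$-valued $4$-cocycle $\beta$; moreover $\T$ is nilpotent, since $[\S,\S] \subseteq \V$ while all brackets involving a vector vanish. Hence Proposition~\ref{prop:nilpotentsupergroup} integrates $\T$ to its exponential supergroup $T$, the supertranslation group, and the cochain map $\smallint$ of Proposition~\ref{prop:superintegrating} sends $\beta$ to a Lie supergroup $4$-cocycle $\smallint\beta \in C^4(T)$: it is closed because $\beta$ is and $\smallint$ commutes with $d$. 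The propositions immediately preceding this theorem then show that $\smallint\beta$ is invariant under the $\Spin(n+2,1)$-action on $T$, hence extends along the projections $(\Spin(n+2,1)\ltimes T)^{p+1}\to T^{p+1}$ to a $4$-cochain on $\SISO(n+2,1)=\Spin(n+2,1)\ltimes T$ whose differential is the extension of $d\smallint\beta = 0$; so this extension, still denoted $\smallint\beta$, is a Lie supergroup $4$-cocycle on $\SISO(n+2,1)$. One also checks it is normalized: when an argument equals the identity, two consecutive vertices of the defining simplex coincide, the simplex degenerates, and the integral of the $4$-form $\beta$ over it vanishes; pullback along a projection preserves this.

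Now I would invoke Proposition~\ref{prop:3supergroup} with $G = \SISO(n+2,1)$, $H = \R$ carrying the trivial action, and $\pi = \smallint\beta$: since $\smallint\beta$ is a normalized $\R$-valued $4$-cocycle on $\SISO(n+2,1)$, the proposition produces a $3$-supergroup $\Brane_{\smallint\beta}(\SISO(n+2,1),\R)$. By definition this is a slim $3$-supergroup; we name it $\Twobrane(n+2,1)$. Its pentagonator is $\smallint\beta$, and the pentagonator identity it satisfies is precisely the $4$-cocycle condition. This is the claimed Lie $3$-supergroup extending the Poincar\'e supergroup.

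I expect no real obstacle here: the genuinely hard work is upstream, in the division-algebra construction of $\beta$ (Theorem~\ref{thm:4-cocycle}) and in the geometric integration $\smallint$ of Section~\ref{sec:integrating}. The only point within the present proof that merits attention is that the extension of $\smallint\beta$ from the nilpotent group $T$ to the non-nilpotent group $\SISO(n+2,1)$ remains a morphism of supermanifolds ($A_0$-smooth, natural in $A$) and remains closed --- but this is exactly what the cochain-extension proposition stated above guarantees, using that $\smallint\beta$ is $\Spin$-equivariant. Hence the theorem follows as an immediate consequence of the results already established.
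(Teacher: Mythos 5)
Your proposal is correct and follows essentially the same route as the paper: the theorem is stated there as an immediate consequence of exactly the chain you describe --- integrate the Lorentz-invariant $4$-cocycle $\beta$ on the nilpotent supertranslation algebra to $\smallint\beta$ on $T$ via Proposition~\ref{prop:superintegrating}, extend to $\SISO(n+2,1)$ using $\Spin(n+2,1)$-invariance and the cochain-extension proposition, and feed the result into Proposition~\ref{prop:3supergroup}. Your explicit check that $\smallint\beta$ is normalized (degenerate simplices have vanishing integral) is a detail the paper leaves implicit, but it is a correct and welcome addition rather than a divergence.
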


\section*{Acknowledgements}

We thank Jim Dolan, Ezra Getzler, Nick Gurski, Hisham Sati and Christoph Wockel for
helpful conversations. We especially thank Urs Schreiber for freely sharing his
insight into higher gauge theory. We thank Roger Picken for his guidance and a
careful reading of this manuscript. We thank Michael Stay for allowing us to use his
beautiful TikZ diagrams in our definition of tricategory, more of which can be found
in his paper \cite{Stay}. Finally, we thank John Baez for his untiring help as he
supervised the thesis on which this paper is based. This work was partially supported
by FQXi grant RFP2-08-04, by the Geometry and Mathematical Physics Project
(EXCL/MAT-GEO/0222/2012), and by a postdoctoral grant from the Portuguese science
foundation, SFRH/BPD/92915/2013.

\end{document}